\newcommand\norm[1]{\|#1\|}
\newcommand\bary[1]{\mathrm{bar}(#1)}
\newcommand\dm{d_{\mu,m}}
\newcommand\dPm{d_{\mu,m}^P}
\newcommand\dmP{d_{\mu_P,m}}
\newcommand\dPP{d_{\mu_P,m}^P}
\newcommand\dPW{d_{\mu_P,m}^W}
\newcommand\X{\mathbb{X}}
\newcommand\R{\mathbb{R}}
\newcommand\U{\mathbb{U}}
\newcommand\V{\mathbb{V}}
\newcommand\W{\mathbb{W}}
\newcommand\dX[2]{d_\X(#1,#2)}
\newcommand\Dgm[1]{\mathrm{Dgm}(#1)}
\newcommand\Sub[2]{\mathrm{Sub}_#1(#2)}
\newcommand\Supp[1]{\mathrm{Supp}(#1)}
\newcommand\dy{\mathrm{d}y}
\newcommand\dbl{d_B^{\mathrm{log}}}
\newcommand\ForAuthors[1]
\newtheorem{theorem}{Theorem}[section]
\newtheorem{definition}[theorem]{Definition}
\newtheorem{lemma}[theorem]{Lemma}
\newtheorem{corollary}[theorem]{Corollary}
\newtheorem{proposition}[theorem]{Proposition}
\newtheorem{remark}{Remark}
\newcommand{\e}{\varepsilon}
\newcommand{\ir}{\lambda}
\newcommand{\net}{N}
\newcommand{\cl}[1]{\overline{#1}}
\newcommand{\ff}{\mathrm{f}}
\newcommand{\proj}{\pi}
\newcommand{\Hom}{\mathrm{H_*}}
\DeclareMathOperator\argmin{argmin}
\DeclareMathOperator\argmax{argmax}
\title{Efficient and Robust Persistent Homology for Measures}
\author{
      Micka\"{e}l Buchet\footnote{mickael.buchet@inria.fr}
    \and
      Fr\'{e}d\'{e}ric Chazal\footnote{frederic.chazal@inria.fr}
    \and 
      Steve Y. Oudot\footnote{steve.oudot@inria.fr}
    \and
      Donald R. Sheehy\footnote{don.r.sheehy@gmail.com}
  }
\begin{document}
\maketitle

  A new paradigm for point cloud data analysis has emerged recently, where point clouds are no longer treated as mere compact sets but rather as empirical measures. 
  A notion of distance to such measures has been defined and shown to be stable with respect to perturbations of the measure. 
  This distance can easily be computed pointwise in the case of a point cloud, but its sublevel-sets, which carry the geometric information about the measure, remain hard to compute or approximate.
  This makes it challenging to adapt many powerful techniques based on the Euclidean distance to a point cloud to the more general setting of the distance to a measure on a metric space.
 
  We propose an efficient and reliable scheme to approximate the topological structure of the family of sublevel-sets of the distance to a measure. 
  We obtain an algorithm for approximating the persistent homology of the distance to an empirical measure that works in arbitrary metric spaces. 
  Precise quality and complexity guarantees are given with a discussion on the behavior of our approach in practice.

\section{Introduction} 
  Given a sample of points $P$ from a metric space $\X$, the distance function $d_P$ maps each $x\in \X$ to the distance from $x$ to the nearest point of $P$.
  The related fields of geometric inference and topological data analysis have provided a host of theorems about what information can be extracted from the distance function, with a particular focus on discovering and quantifying intrinsic properties of the shape underlying a data set~\cite{stcsesCCL,fhshcrsNSW}.
  The flagship tool in topological data analysis is persistent homology and the most common goal is to apply the persistence algorithm to distance functions, either in Euclidean space or in metric spaces~\cite{carlsson09topology,tpsELZ, cphCZ}.
  From the very beginning, this line of research encountered two major challenges.
  First, distance functions are very sensitive to noise and outliers (Fig.~\ref{fig:d_mu} left).
  Second, the representations of the sublevel sets of a distance function become prohibitively large even for moderately sized data.
  These two challenges led to two distinct research directions.
  First, the distance to the data set was replaced with a distance to a measure induced by that data set~\cite{gipmCCM}.
  The resulting theory is provably more robust to outliers, but the sublevel sets become even more complex to represent (Fig.~\ref{fig:d_mu} center).
  Towards more efficient representations, several advances in \emph{sparse filtrations} have led to linear-size constructions~\cite{ctpsmDFW,ZZZ,lsavrfS}, but all of these methods exploit the specific structure of the distance function and do not obviously generalize.
  In this paper, we bring these two research directions together by showing how to combine the robustness of the distance to a measure, with the efficiency of sparse filtrations.

  \begin{figure}[htbp]
    \centering
      \includegraphics[width=.3\textwidth]{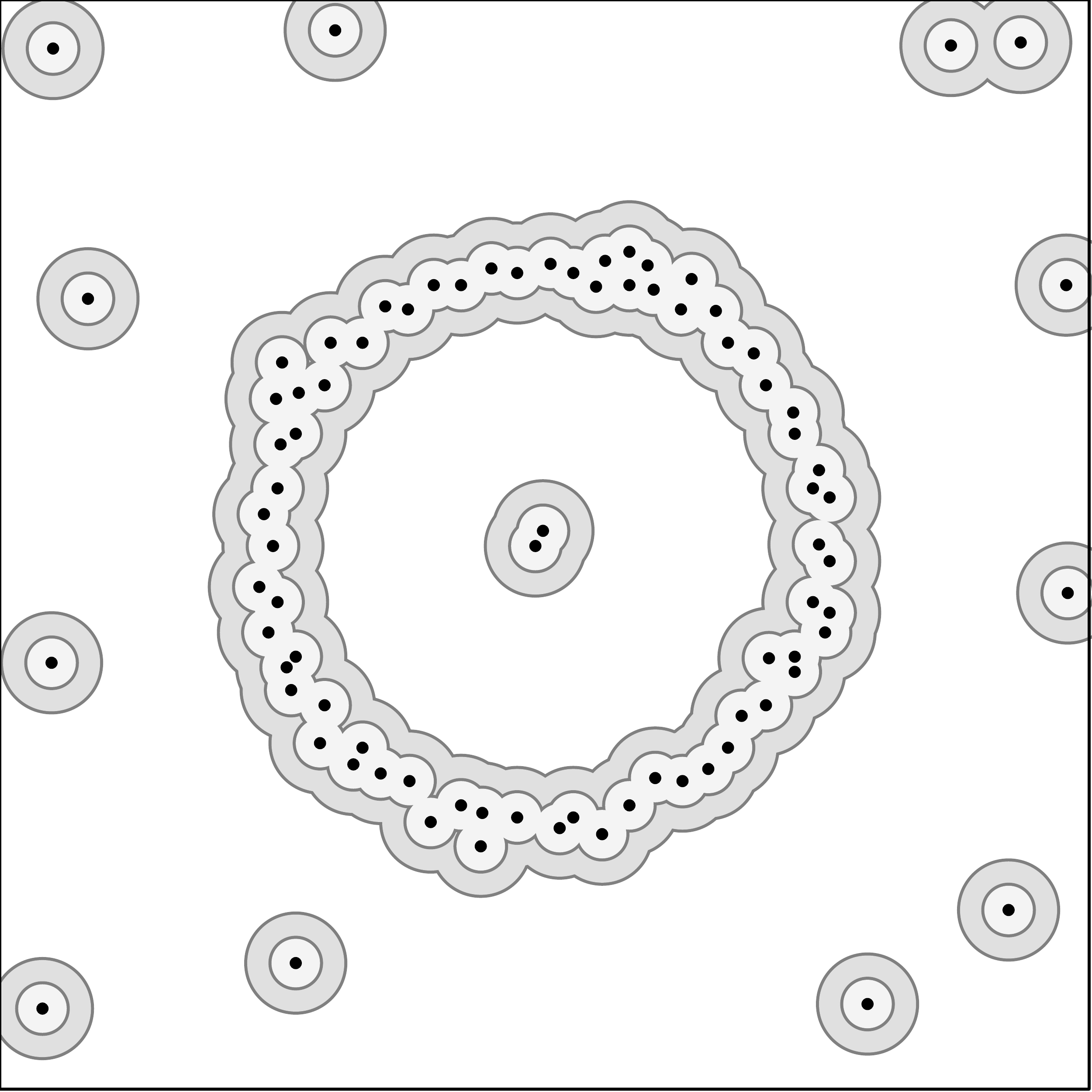}
      \includegraphics[width=.3\textwidth]{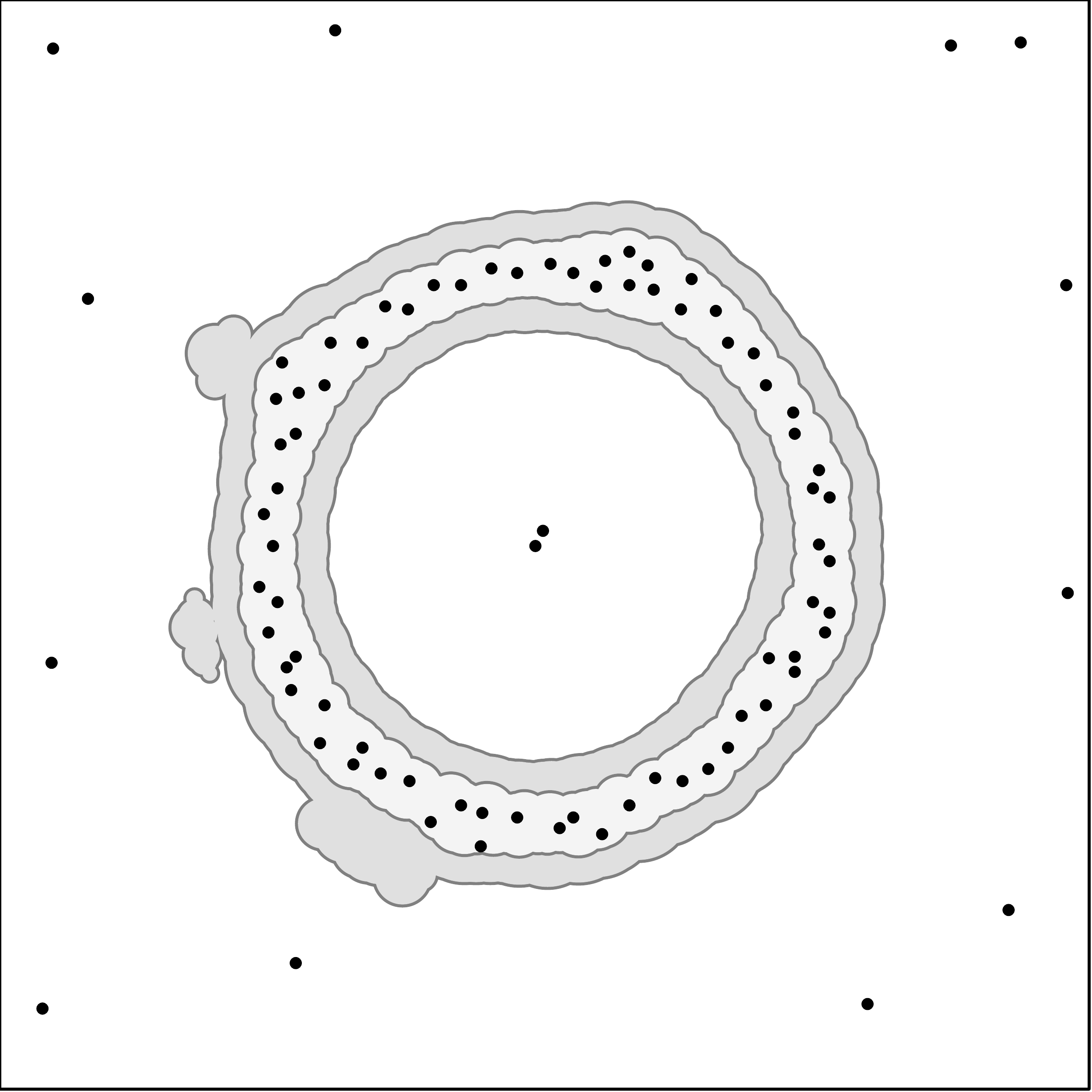}
      \includegraphics[width=.3\textwidth]{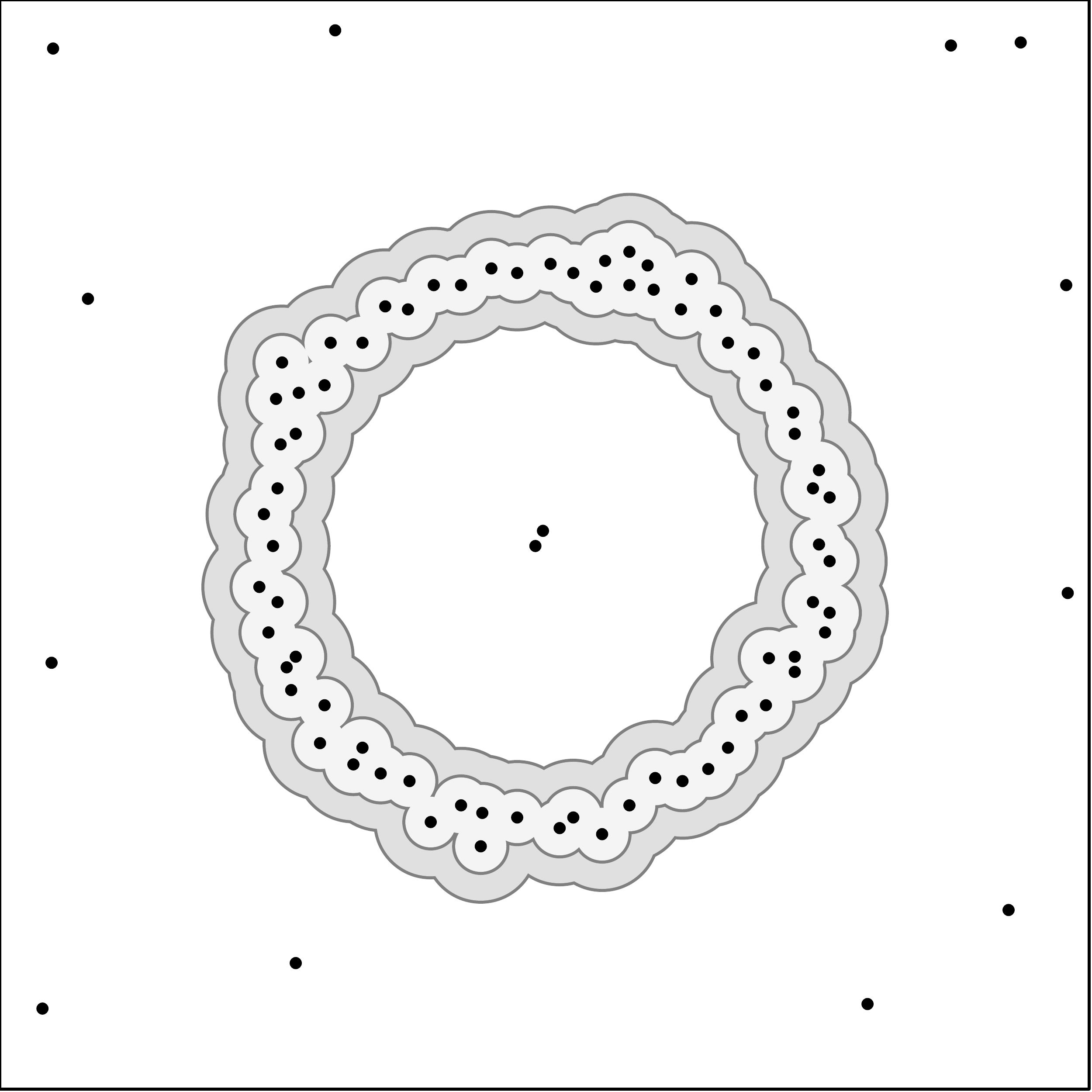}
    \caption{From left to right, two sublevel sets for $d_P$, $\dmP$, and $\dPP$ with $m  = \frac{3}{|P|}$.
    The first is too sensitive to noise and outliers.
    The second is smoother, but substantially more difficult to compute.
    The third is our approximation, which is robust to noise, efficient to compute, and compact to represent.}
    \label{fig:d_mu}
  \end{figure}

\paragraph{Contributions:}
\begin{enumerate}
  \item \textbf{A Generalization of the Wasserstein stability and persistence stability of the distance to a measure for triangulable metric spaces.}
  \item \textbf{A general method for approximating the sublevel sets of the distance to a measure by a union of balls.}  
  Our method uses $O(n)$ balls for inputs of $n$ samples.  
  Known methods for representing the exact sublevel sets can require $n^{\Theta(d)}$ balls. 
  Existing approximations using a linear number of balls are only applicable in Euclidean space~\cite{wkdGMM}.
  \item \textbf{A linear size approximation to the weighted Rips filtration.}
  For intrinsically low-dimensional metric spaces, we construct a filtration of size $O(n)$ that achieves a guaranteed quality approximation.  
  Specifically, if the doubling dimension of the metric is $d$ then the size complexity is $2^{O(dk)}n$ if one considers simplices up to dimension $k$ (see Def.~\ref{def:doubling_dimension} for the formal definition of doubling dimension).
  This is a significant improvement over the full weighted Rips filtration, which has size $2^n$ in general or size $n^{k+1}$ if one considers only simplices up to dimension $k$.
  It also has the advantage that the sparsification is independent of the weights.
  Thus, the (geo)metric preprocessing phase can be reused for any weighting of the points.
  If one attempted to use previous methods directly, this preprocessing phase would have to be repeated for each set of weights.
  This is especially useful if one is interested in several different weight functions such as when approximating the distance to a measure for several different values of the mass parameter.
  \item \textbf{An effective implementation with experimental results.} 
\end{enumerate}

\paragraph{Overview of the paper} 

Originally, the distance to a measure was introduced to capture information about both scale and density in a Euclidean point cloud.
We extend the distance to a measure to any metric space $\X$.
We write $\bar{B}(x,r)$ to denote the closed ball with center $x$ and radius $r$.
The distance to a measure is then defined as follows.
\begin{definition}
Let $\mu$ be a probability measure on a metric space $\X$ and let $m\in]0,1]$ be a mass parameter. We define the distance $\dm$ to the measure $\mu$ as
\[
  \dm:x\in\X\mapsto\sqrt{\frac{1}{m}\int_0^m\delta_{\mu,l}(x)^2 dl},
\]
where $\delta_{\mu,m}$ is defined as
\[
  \delta_{\mu,m}:x\in \X \mapsto \inf\{r>0\mid\mu(\bar{B}(x,r))>m\}.
\]
\end{definition}


The distance to a measure has interesting inference and stability results in the Euclidean setting~\cite{gipmCCM}.
That is, the sublevel sets of the function can be used to infer the topology of the support of the underlying distribution (inference), and also, the output for similar inputs will be similar (stability).
In Section~\ref{sPersStabDtm}, we extend these stability results to any metric space.
The results about the stability of persistence diagrams apply to any triangulable metric space, i.e. metric spaces homeomorphic to a locally finite simplicial complex (the persistence diagram may not exist for non-triangulable metric spaces).

We then give a new way to approximate the distance to a measure.
Using a sampling of the support of a measure, we are able to compute accurately the sublevel sets of the distance to a measure in any metric space, using power distances.
We show in Section~\ref{sApd} that these functions have adequate stability and approximation properties.
Then, in Section~\ref{sRestrict}, we give the practical implications for computing persistence diagram for finite samples.

The \emph{witnessed $k$-distance} is another approach to approximating the distance to a measure proposed in~\cite{wkdGMM}.
This approach works only in Euclidean spaces as it relies on the existence of barycenters of points.
The analysis links the quality of the approximation to the underlying topological structure.
In this paper, we look at bounds independent of intrinsic geometry.
When restricted to the Euclidean setting in section~\ref{ssEc}, our method improves the approximation bounds from~\cite{wkdGMM}.
The new bounds match the quality of approximation achieved by our method of Section~\ref{sApd}, which has the added advantage that it is valid in any metric spaces..

In Section~\ref{sWRips}, we introduce the \emph{weighted Rips complex}.
Given a parameter, the sublevel set of a power distance associated with this parameter is a union of balls.
Generalizing the Vietoris-Rips complex, we define the weighted Rips complex as the clique complex whose $1$-skeleton is the same as the one of the nerve of this union of balls.
The induced filtration has important stability properties and can be used to approximate  persistence diagrams.

Unfortunately, the weighted Rips filtration is too large to construct in full for large instances.
This problem already exists with the usual Rips filtration.
Sparsifying schemes have been recently proposed in~\cite{ctpsmDFW,lsavrfS}.
Extending the approach used in~\cite{lsavrfS}, we construct a sparse approximation that has linear size in the number of points (Section~\ref{sec:sparse_rips}).
This can be used to approximate persistence diagrams even for high dimensional inputs if the data is intrinsically low dimensional.
As we show in Section~\ref{sec:sparse_rips}, there are very simple examples where the input metric is intrinsically low-dimensional and yet the weight function can cause the weighted distance function to be high-dimensional.  
Our approach has the advantage over previous methods in that the size complexity will only depend on the dimension of the input metric, rather than the dimension of points under the weighted distance.

The combination of these approaches makes it possible to use the distance to a measure to infer topology on real instances.
In Section~\ref{sNumeric}, we illustrate the theory with some examples and results from an implementation.

\section{Background} 
In this paper, we consider a metric space $\X$ with distance $\dX{\cdot}{\cdot}$.
In a slight abuse of notation, we also write $d_\X$ to denote the distance between a point and a set defined as $\dX{x}{P}=\inf_{p\in P}\dX{x}{p}$.
The Hausdorff distance between two sets $P$ and $Q$ will be denoted $d_H(P,Q)$.
We write $B(x,r)$ for the open ball of center $x$ and radius $r$ in $d_\X$, and we write $\bar{B}(x,r)$ for the corresponding closed ball.

\paragraph{Metric Spaces and doubling dimension}
For metric spaces that are not embedded in Euclidean space, the doubling dimension gives a useful way to describe the intrinsic dimension of the metric space by bounding the size of certain covers of subsets.
Formally it is defined as follows.

\begin{definition}\label{def:doubling_dimension}
  The \emph{doubling constant} $\lambda_{\X}$ of a metric space $\X$ is the maximum over all balls $B(x,r)$ with $x\in\X$ of the minimum number of balls of radius $r/2$ required to cover $B(x,r)$.
  The \emph{doubling dimension} is defined to be $\log_2(\lambda_{\X})$.
\end{definition}

\paragraph{Wasserstein distance\\}
To compare measures, we use the Wasserstein distance, also called the earth-mover distance.
Intuitively, it is the minimal cost to move all the mass from one measure to another.
To state the formal definition we first introduce some notation.

Given a measure $\mu$ on a metric space $\X$, we write $\mathfrak{B}(\X)$ to denote the set of all Borel subsets of $\X$.
Given $A\in\mathfrak{B}(\X)$, we define the \emph{mass of $A$} as $\mu(A)$.
Similarly $\mu(\X)$ is called the \emph{total mass} of $\mu$.
We write $\Supp{\mu}$ for the support of the measure $\mu$.

\begin{definition}
Let $\mu$ and $\nu$ be positive measures with the same total mass on a metric space $\X$.
A \emph{transport plan} between $\mu$ and $\nu$ is a measure $\pi$ on $\X\times\X$ such that for all $A,B\in\mathfrak{B}(\X)$,
\[
  \pi(A\times\X)=\mu(A)\ and\ \pi(\X\times B)=\nu(B).
\]
\end{definition}

We denote by $\Pi(\mu,\nu)$ the set of all transport plans between $\mu$ and $\nu$.
The $p$th order cost of the transport plan $\pi$ is defined as
\[
  C_p(\pi)=\left(\int_{\X\times\X}\dX{x}{y}^p d\pi(x,y)\right)^{\frac{1}{p}}.
\]
The Wasserstein distance between $\mu$ and $\nu$ is the minimum cost over all transport plans.

\begin{definition}
Let $\mu$ and $\nu$ be positive measures with the same total mass on a metric space $\X$. 
The \emph{Wasserstein distance} of order $p$ between $\mu$ and $\nu$ is defined as
$$W_p(\mu,\nu)=\min_{\pi\in\Pi(\mu,\nu)}\left(\int_{\X\times\X}\dX{x}{y}^p d\pi(x,y)\right)^{\frac{1}{p}}.$$
\end{definition}

The Wasserstein distance is finite if both probability measures have finite $p$-moments, which is always the case for measures with compact support.

\paragraph{Persistence theory\\}
  A \emph{filtration} $F = \{F_\alpha\}_{\alpha\in\R}$ is a sequence of spaces such that $F_\alpha\subseteq F_\beta$ whenever $\alpha\leq\beta$.
  Persistence theory studies the evolution of the homology of the sets $F_\alpha$ for $\alpha$ ranging from $-\infty$ to $+\infty$.
  More precisely, the filtration induces a family of vector spaces connected by linear maps at the homology level, called a \emph{persistence module}.
  More generally, a persistence module is a pair $\V=(\{V_\alpha\},\{v_\alpha^\beta\})$ where each $V_\alpha$ is a vector space and $v_\alpha^\beta$ is a linear map $V_\alpha\to V_\beta$ such that $v_\beta^\gamma\circ v_\alpha^\beta = v_\alpha^\gamma$ for all $\alpha\le\beta\le \gamma$ and $v_\alpha^\alpha$ is the identity.
  A persistence module is said to be \emph{q-tame} if $v_\alpha^\beta$ has finite rank for every $\alpha<\beta$.
  A filtration is said to be q-tame if its corresponding persistence module is q-tame.
  The algebraic structure of a q-tame persistence module $\U$ can be described and visualized by the \emph{persistence diagram} $\Dgm{\U}$, a multiset of points in the plane.
  If $\U$ comes from a filtration $\{F_\alpha\}$, a point $(\alpha,\beta)$ in $\Dgm{\U}$ indicates a nontrivial homology class that exists in the filtration between the parameter values $\alpha$ and $\beta$.
  
  We overload notation and write $\Dgm{\{F_\alpha\}}$ to denote the persistence diagram of the persistence module defined by the filtration $\{F_\alpha\}$.
  Moreover, for a real-valued function $f$, we write $\Dgm{f}$ to denote $\Dgm{\{f^{-1}(]-\infty, \alpha])\}}$, the persistence diagram of the sublevel sets filtration of $f$.
  For an introduction to persistent homology, the reader is directed to~\cite{sspmCDGO,ctaiEH}.

\paragraph{Bottleneck distance\\}
  We put a metric on the space of persistence diagrams as follows.
  First, a partial matching $M$ between diagrams $D$ and $E$ is a subset of $D\times E$ in which each element of $D\cup E$ appears in at most one pair.
  The bottleneck cost of $M$ is $\max_{(d,e)\in M}\|d-e\|_{\infty}$.
  We say $M$ is an $\epsilon$-matching if the bottleneck cost is $\epsilon$ and every $(\alpha,\beta)$ in $D$ or $E$ with $|\beta-\alpha|\ge 2\epsilon$ is matched.
  The \emph{bottleneck distance} between $D$ and $E$ is defined as
  \[
    d_B(D,E) = \inf\{\epsilon \mid \text{there exists an $\epsilon$-matching between $D$ and $E$}\}.
  \]

  It is often useful to look at persistence diagrams on a logarithmic scale, because the distance does no longer depend on the scale at which the object is seen.
  The \emph{log-bottleneck distance}, denoted $\dbl$ is the bottleneck distance between diagrams after the change of coordinates $(\alpha,\beta)\mapsto(\ln\alpha,\ln\beta)$.

\paragraph{Filtration interleaving\\}
One way to prove that two persistence diagrams are close is to prove that the filtrations inducing them are interleaved.
Two filtrations $\{U_\alpha\}_{\alpha\in\R}$ and $\{V_\alpha\}_{\alpha\in\R}$ are said to be \emph{$\epsilon$-interleaved} if for any $\alpha$,
\[
  U_\alpha\subseteq V_{\alpha+\epsilon}\subseteq U_{\alpha+2\epsilon}.
\]
The following classic result~\cite{ppmdCCGGO,sspmCDGO,spdCEH} about stability of persistence diagrams says that interleaved filtrations yield similar persistence diagrams.

\begin{theorem}\label{tstability}
Let $U$ and $V$ be two $q$-tame and $\epsilon$-interleaved filtrations. Then, the persistence diagrams of these filtrations are $\epsilon$-close in bottleneck distance, i.e.,
\[
  d_B(\Dgm{U},\Dgm{V})\leq\epsilon.
\]
\end{theorem}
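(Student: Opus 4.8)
The plan is to split the statement into two independent reductions: first pass from the geometric interleaving of the \emph{filtrations} to an algebraic interleaving of the associated \emph{persistence modules}, and then invoke (and sketch) the algebraic stability theorem that bounds the bottleneck distance between the diagrams of interleaved modules. The first reduction is purely functorial bookkeeping; the second is the genuine content.

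For the first reduction, I would apply the homology functor $\Hom$ to the inclusions supplied by the interleaving. Note first that the stated one-sided chain $U_\alpha\subseteq V_{\alpha+\epsilon}\subseteq U_{\alpha+2\epsilon}$ already forces the symmetric chain $V_\alpha\subseteq U_{\alpha+\epsilon}\subseteq V_{\alpha+2\epsilon}$ by re-indexing. Applying $\Hom$ to $U_\alpha\hookrightarrow V_{\alpha+\epsilon}$ and to $V_\alpha\hookrightarrow U_{\alpha+\epsilon}$ yields linear maps $\phi_\alpha:\Hom(U_\alpha)\to\Hom(V_{\alpha+\epsilon})$ and $\psi_\alpha:\Hom(V_\alpha)\to\Hom(U_{\alpha+\epsilon})$. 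Functoriality guarantees these commute with the internal structure maps of the modules $\U=\{\Hom(U_\alpha)\}$ and $\V=\{\Hom(V_\alpha)\}$, and that $\psi_{\alpha+\epsilon}\circ\phi_\alpha=u_\alpha^{\alpha+2\epsilon}$, since both are $\Hom$ applied to the single inclusion $U_\alpha\hookrightarrow U_{\alpha+2\epsilon}$ (and symmetrically for $\phi_{\alpha+\epsilon}\circ\psi_\alpha$). This is precisely an $\epsilon$-interleaving of $\U$ and $\V$ in the algebraic sense, and $q$-tameness of the modules holds by hypothesis.

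For the second reduction, which is the heart, I would show that an $\epsilon$-interleaving of $q$-tame modules forces an $\epsilon$-matching of diagrams. Following the persistence-measure approach, I would assign to each axis-parallel rectangle $R$ lying strictly above the diagonal the quantity $\mu_\U(R)$ counting, with multiplicity, the points of $\Dgm{\U}$ inside $R$; $q$-tameness makes this a well-defined finite measure. A diagram chase through the interleaving maps $\phi,\psi$ then produces the \emph{box inequality} $\mu_\U(R)\le\mu_\V(R^{+\epsilon})$, and symmetrically, where $R^{+\epsilon}$ denotes $R$ dilated by $\epsilon$ in the $\ell^\infty$ metric. Feeding this family of inequalities into a marriage/Hall-type matching lemma for multisets in the plane extracts a partial matching of bottleneck cost at most $\epsilon$ that pairs every off-diagonal point of persistence at least $2\epsilon$, which is exactly the desired $\epsilon$-matching, yielding $d_B(\Dgm{U},\Dgm{V})\le\epsilon$.

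The main obstacle is precisely this final matching lemma: converting the local, rectangle-by-rectangle inequalities into one global partial matching respecting both the $\epsilon$-bound and the matching requirement on long-lived classes. An alternative is the interpolation method, which builds a continuous one-parameter family of modules $\W^t$ bridging $\U$ and a shift of $\V$ and proves that the diagram is $1$-Lipschitz in $t$ (the ``Box Lemma''/Easy Bijectivity argument); the delicate point there is identical, namely showing that the local interleaving data control the global displacement of diagram points. Since this algebraic stability result is exactly the classic theorem of~\cite{ppmdCCGGO,sspmCDGO,spdCEH}, I would supply the first, functorial reduction in full and cite these references for the hard algebraic core rather than reprove it.
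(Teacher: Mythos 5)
Your proposal is correct and takes essentially the same route as the paper: the paper states this theorem without proof, citing exactly the references you name for the algebraic stability core, and your functorial reduction from interleaved filtrations to interleaved $q$-tame persistence modules (followed by the cited box-inequality/matching argument) is the standard and sound way to fill in the remaining bookkeeping. No gaps.
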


We work with the persistence theory on functions, which means studying the persistence of \emph{the sublevel sets filtration} defined as $\{f^{-1}(]-\infty,\alpha])\}_{\alpha\in\R}$ for any real-valued function.
To simplify notation, we write $\Dgm f$ to denote the persistence diagram of the sublevel sets filtration of $f$.

\paragraph{Persistence module interleaving\\}
The notion of interleaving can be extended to persistence modules as seen in~\cite{psgcCDO}.
Given two persistence modules $\U=(\{U_\alpha\},\{u_\alpha^\beta\})$ and $\V=(\{V_\alpha\},\{v_\alpha^\beta\})$ and a real $\epsilon>0$, an \emph{$\epsilon$-homomorphism} from $\U$ to $\V$ is a collection of linear maps $\Phi=\{\phi_\alpha\}$ such that for all $\alpha<\beta$, $v_{\alpha+\epsilon}^{\beta+\epsilon}\circ\phi_\alpha=\phi_\beta\circ u_{\alpha}^\beta$. 
Two $\epsilon$-homomorphisms $\Phi$ from $\U$ to $\V$ and $\Psi$ from $\V$ to $\W$ can be composed to build a $2\epsilon$-homomorphism $\Psi\Phi$ from $\U$ to $\W$ whose linear maps are obtained by composing the linear maps of $\Phi$ and $\Psi$.
Among $\epsilon$-homomorphisms from $\U\to\U$, one has a particular role.
The \emph{$\epsilon$-shift map} $1_\U^{\epsilon}$ is the collection of maps $u_\alpha^{\alpha+\epsilon}$ given in the persistence module $\U$.
We use it to define the interleaving of two persistence modules as follows.

\begin{definition}
Let $\U$ and $\V$ be two q-tame persistence modules. $\U$ and $\V$ are \emph{$\epsilon$-interleaved} if there exists $\epsilon$-homomorphisms $\Phi:\U\to\V$ and $\Psi:\V\to\U$ such that $\Phi\Psi=1_\V^{2\epsilon}$ and $\Psi\Phi=1_\U^{2\epsilon}$.
\end{definition}

Note that the definition is equivalent to the commutativity of the following diagrams for any $\alpha<\beta$, where $\Phi=\{\phi_\alpha\}$ and $\Psi=\{\psi_\alpha\}$.

\begin{center}
\begin{tikzpicture}[scale=.7]
\draw (-1,1) node {$V_{\beta+\epsilon}$};
\draw (-6,1) node {$V_{\alpha+\epsilon}$};
\draw (-6,3) node {$U_\alpha$};
\draw (-1,3) node {$U_\beta$};
\draw[->] (-5.4,1) --node[below] {$v_{\alpha+\epsilon}^{\beta+\epsilon}$} (-1.6,1);
\draw[->] (-5.4,3) --node[above] {$u_\alpha^\beta$} (-1.6,3);
\draw[->] (-6,2.5) --node[left] {$\phi_\alpha^{\alpha+\epsilon}$} (-6,1.5);
\draw[->] (-1,2.5) --node[right] {$\phi_\beta^{\beta+\epsilon}$} (-1,1.5);

\draw (-1,-1) node {$V_\beta$};
\draw (-6,-1) node {$V_\alpha$};
\draw (-6,-3) node {$U_{\alpha+\epsilon}$};
\draw (-1,-3) node {$U_{\beta+\epsilon}$};
\draw[->] (-5.4,-1) --node[above] {$v_\alpha^\beta$} (-1.6,-1);
\draw[->] (-5.4,-3) --node[below] {$u_{\alpha+\epsilon}^{\beta+\epsilon}$} (-1.6,-3);
\draw[->] (-6,-1.5) --node[left] {$\psi_\alpha^{\alpha+\epsilon}$} (-6,-2.5);
\draw[->] (-1,-1.5) --node[right] {$\psi_\beta^{\beta+\epsilon}$} (-1,-2.5);

\draw (3,3) node {$U_{\alpha-\epsilon}$};
\draw (7,3) node {$U_{\alpha+\epsilon}$};
\draw (5,1) node {$V_\alpha$};
\draw[->] (3.6,3) --node[above] {$u_{\alpha-\epsilon}^{\alpha+\epsilon}$} (6.4,3);
\draw[->] (3.3,2.7) --node[left] {$\phi_{\alpha-\epsilon}^\alpha$} (4.7,1.3);
\draw[->] (5.3,1.3) --node[right] {$\psi_{\alpha}^{\alpha+\epsilon}$} (6.7,2.7);

\draw (3,-1) node {$V_{\alpha-\epsilon}$};
\draw (7,-1) node {$V_{\alpha+\epsilon}$};
\draw (5,-3) node {$U_\alpha$};
\draw[->] (3.6,-1) --node[above] {$v_{\alpha-\epsilon}^{\alpha+\epsilon}$} (6.4,-1);
\draw[->] (3.3,-1.3) --node[left] {$\psi_{\alpha-\epsilon}^\alpha$} (4.7,-2.7);
\draw[->] (5.3,-2.7) --node[right] {$\phi_\alpha^{\alpha+\epsilon}$} (6.7,-1.3);
\end{tikzpicture}
\end{center}

The following theorem is an algebraic analog of Theorem~\ref{tstability}.
The proof can be found in~\cite{sspmCDGO}.

\begin{theorem}\label{tModStability}
Let $\U$ and $\V$ be two q-tame and $\epsilon$-interleaved persistence modules. 
Then,
\[
  d_B(\Dgm{\U},\Dgm{\V})\leq\epsilon.
\]
\end{theorem}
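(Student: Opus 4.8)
The plan is to replace the bottleneck comparison of the two diagrams by a comparison of the measures they induce on the plane, and to show that the $\epsilon$-interleaving forces these measures to agree up to an $\epsilon$-shift. To a q-tame module $\U$ I would attach the rank function $r_\U(\alpha,\beta) = \operatorname{rank}(u_\alpha^\beta)$, which is finite for $\alpha<\beta$ by q-tameness, and to each box $R=[a,b]\times[c,d]$ lying above the diagonal ($b<c$) I would assign the inclusion--exclusion quantity
\[
  \mu_\U(R) = r_\U(b,c) - r_\U(a,c) - r_\U(b,d) + r_\U(a,d).
\]
First I would verify that $\mu_\U$ is additive and nonnegative on such boxes, so that it extends to a Radon measure on the open region above the diagonal, and that this measure is precisely the point-counting measure of $\Dgm{\U}$. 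This is what makes $\Dgm{\U}$ well defined for q-tame modules and turns the theorem into an inequality between two measures.

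Next I would establish the interleaving inequality for these measures. Writing $\Phi=\{\phi_\alpha\}$ and $\Psi=\{\psi_\alpha\}$ for the morphisms realizing the interleaving, the relation $\Psi\Phi=1_\U^{2\epsilon}$ means that each shift $u_\alpha^{\alpha+2\epsilon}$ factors as $\psi_{\alpha+\epsilon}\circ\phi_\alpha$ through $V_{\alpha+\epsilon}$, and symmetrically the shifts of $\V$ factor through $\U$; the commuting squares in the displayed diagrams let me propagate this factorization to all four corners of a box. Denoting by $R^{\epsilon}$ and $R^{-\epsilon}$ the box $R$ enlarged and shrunk by $\epsilon$ in every direction, a rank computation then yields the sandwich
\[
  \mu_\U(R^{-\epsilon}) \le \mu_\V(R) \le \mu_\U(R^{\epsilon}).
\]

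From these inequalities I would build the $\epsilon$-matching. Restricting attention to the part of each diagram at $\ell^\infty$-distance more than $\epsilon$ from the diagonal, the box inequalities say that every region carries at least as much $\mu_\V$-mass as the $\mu_\U$-mass of its $\epsilon$-erosion, and conversely; this is exactly the marriage (Hall) condition for the bipartite relation ``the two points lie within $\ell^\infty$-distance $\epsilon$''. A matching argument on the measures then produces a partial matching of bottleneck cost at most $\epsilon$ that leaves unmatched only mass within $2\epsilon$ of the diagonal, which is precisely what an $\epsilon$-matching permits; Theorem~\ref{tstability} is recovered as the special case of sublevel-set modules.

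I expect this last step to be the crux. Because q-tame modules need not decompose into interval summands, I cannot track interval endpoints as in the pointwise finite-dimensional setting and must instead extract the matching directly from the measures. The delicate issue is the behavior near the diagonal, where infinitely many low-persistence features may accumulate: one must ensure that the measures remain finite on every box bounded away from the diagonal --- which is exactly where q-tameness enters --- so that the matching is well defined off the diagonal, while harmlessly discarding the accumulating short bars. Making the Hall-type argument hold uniformly over all boxes, rather than one box at a time, is where the genuine work lies.
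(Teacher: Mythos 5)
The paper does not prove this theorem itself but defers to~\cite{sspmCDGO}, and your sketch is precisely the measure-theoretic argument given there: the rank-function measure $\mu_\U$ on rectangles above the diagonal, the box inequalities $\mu_\U(R^{-\epsilon})\le\mu_\V(R)\le\mu_\U(R^{\epsilon})$ obtained by factoring the shift maps through the interleaving morphisms, and a Hall-type matching extracted from these inequalities away from the diagonal. The outline is sound and matches the cited proof in approach, and you correctly identify the two places where the real work lies (deriving the box lemma from the commuting interleaving diagrams, and upgrading the one-box inequality to the Hall condition over unions of boxes via additivity and a limiting argument near the diagonal), though both of those steps are deferred rather than carried out here.
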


\paragraph{Contiguous simplicial maps\\}
Let $X$ and $Y$ be simplicial complexes.
A \emph{simplicial map} $f:X\to Y$ is a map between the corresponding vertex sets so that for every simplex $\sigma\in X$, $f(\sigma) = \bigcup_{p\in \sigma}f(p)$ is a simplex in $Y$.
Two simplicial maps $f$ and $g$ are \emph{contiguous} if $\sigma \in X$ implies that $f(\sigma)\cup g(\sigma)\in Y$.
 If two simplicial maps are contiguous, then they induce the same homomorphism at the homology level~\cite[Chapter 1]{munkres84elements}.

A \emph{clique complex} is a simplicial complex whose simplices are the cliques of a graph.
Many of the simplicial complexes considered in this paper are clique complexes.
We will use the following simple lemma to construct contiguous simplicial maps between clique complexes.

\begin{lemma}\label{lem:contiguity_and_cliques}
Let $X$ and $Y$ be clique complexes and let $f$ and $g$ be two functions from the vertex set of $X$ to the vertex set of $Y$.
If for every edge $(p,q)\in X$, the tetrahedron $\{f(p), g(p), f(q), g(q)\}$ is in $Y$, then $f$ and $g$ induce contiguous simplicial maps from $X$ to $Y$.
\end{lemma}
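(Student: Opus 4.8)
The plan is to exploit the defining feature of a clique complex: a finite set of vertices spans a simplex if and only if all its vertices are pairwise joined by edges. This turns both claims---that $f,g$ are simplicial maps and that they are contiguous---into the verification of pairwise (edge) conditions in $Y$, each of which will be read off directly from the tetrahedra supplied by the hypothesis.

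First I would check that $f$ and $g$ are simplicial. Let $\sigma\in X$ be a simplex. Since $X$ is a clique complex, any two distinct vertices $p,q\in\sigma$ are joined by an edge $(p,q)\in X$, so by hypothesis the tetrahedron $\{f(p),g(p),f(q),g(q)\}$ lies in $Y$. As $Y$ is a clique complex, every face of this tetrahedron is in $Y$; in particular the edges $\{f(p),f(q)\}$ and $\{g(p),g(q)\}$ belong to $Y$ (degenerating to vertices when the images coincide). Hence $f(\sigma)$ is a pairwise-connected set of vertices of $Y$, i.e. a clique, and therefore a simplex of $Y$; the same holds for $g(\sigma)$. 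This establishes that $f$ and $g$ are simplicial.

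For contiguity, fix a simplex $\sigma\in X$; I must show $f(\sigma)\cup g(\sigma)\in Y$. Because $Y$ is a clique complex, it suffices to prove that every pair of vertices drawn from $f(\sigma)\cup g(\sigma)$ spans an edge of $Y$ (or coincides). Such a pair has one of the forms $\{f(p),f(q)\}$, $\{g(p),g(q)\}$, $\{f(p),g(q)\}$ with $p,q\in\sigma$. When $p\ne q$, the pair is a face of the tetrahedron $\{f(p),g(p),f(q),g(q)\}$, which lies in $Y$ by the hypothesis applied to the edge $(p,q)$. The only remaining type is the diagonal pair $\{f(p),g(p)\}$; as soon as $p$ belongs to some edge of $X$---automatic whenever $\sigma$ has at least two vertices---this pair too is a face of a tetrahedron guaranteed by the hypothesis. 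Thus every pair lies in $Y$, so $f(\sigma)\cup g(\sigma)$ is a clique and hence a simplex of $Y$, proving contiguity.

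The argument is essentially bookkeeping, and the real leverage comes from the clique-complex assumption on $Y$, which is what lets a collection of pairwise edges assemble into a single simplex; without it the pairwise conditions extracted from the tetrahedra would not suffice. The only point I expect to need care is the diagonal pair $\{f(p),g(p)\}$ for a vertex $p$ lying in no edge of $X$, i.e. an isolated $0$-simplex. Since the hypothesis is phrased only over edges, this case is not covered directly; it is harmless in the intended applications, where every vertex of interest participates in an edge, but for full generality one should either assume $X$ has no isolated vertices or adopt the convention that $\{f(p),g(p)\}\in Y$ for each vertex $p$.
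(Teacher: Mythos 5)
Your proof is correct and follows essentially the same route as the paper's: reduce both simplicialness and contiguity to pairwise edge conditions in $Y$, read each required edge off a face of a tetrahedron supplied by the hypothesis, and invoke the clique-complex property of $Y$ to assemble the pairs into a simplex. Your extra care about the diagonal pair $\{f(p),g(p)\}$ and isolated vertices is a detail the paper's own proof glosses over, but it does not change the argument.
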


\begin{proof}
  Let $\sigma$ be a simplex of $X$.
  Every pair in $f(\sigma)\cup g(\sigma)$ is of the form $(f(p),f(q))$, $(f(p),g(q))$, or $(g(p),g(q))$ for some vertices $p$ and $q$ in $\sigma$.
  Since $(p,q)\in \sigma$, the tetrahedron hypothesis of the lemma implies that all of these pairs are edges of $Y$.
  Thus, $f(\sigma)\cup g(\sigma)$ is a simplex in $Y$ because $Y$ is a clique complex.
  Moreover, $f(\sigma)\in Y$ and $g(\sigma)\in Y$ because simplices are closed under taking subsets.
  Therefore, $f$ and $g$ are indeed contiguous simplicial maps as desired.
\end{proof}


\section{Persistence and Stability of the Distance to a Measure in a Metric Space} 
\label{sPersStabDtm}

In this section, we prove that, if we have two close probability measures, then the persistence diagrams of the sublevel sets filtration of their distance to measure functions are close.
The result applies to \emph{triangulable} metric spaces, i.e., those that are homeomorphic to a locally finite simplicial complex.
The persistence diagrams considered in this paper are well defined in this class of spaces.
In particular, every compact Riemannian manifold is triangulable.

  If the persistence diagram is to be meaningful, one might expect that it is stable with respect to perturbations in the underlying measure.
  The following theorem shows that this is indeed the case.
  Two measures that are close in the quadratic Wasserstein distance, $W_2$ yield persistence diagrams that are close in bottleneck distance, $d_B$ (see~\cite[Sec. 7.1]{villani2003tot}).

\begin{theorem}\label{tStabPersDtm}
Let $\mu$ and $\nu$ be two probability measures on a triangulable metric space $\X$ and let $m$ be a mass parameter. 
Then $\Dgm{\dm}$ and $\Dgm{d_{\nu,m}}$ are well-defined and
$$d_B(\Dgm{\dm},\Dgm{d_{\nu,m}})\leq\frac{1}{\sqrt{m}}W_2(\mu,\nu).$$
\end{theorem}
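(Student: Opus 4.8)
The plan is to reduce the persistence statement to a pointwise $L^\infty$ comparison of the two distance functions and then invoke the interleaving stability theorem. Concretely, I would first prove the uniform bound
\[
  \norm{\dm - d_{\nu,m}}_\infty \le \frac{1}{\sqrt m}\,W_2(\mu,\nu),
\]
and then observe that two functions that are $\epsilon$-close in the sup norm have $\epsilon$-interleaved sublevel-set filtrations: writing $\epsilon = \frac{1}{\sqrt m}W_2(\mu,\nu)$, every point where $\dm \le \alpha$ satisfies $d_{\nu,m}\le\alpha+\epsilon$, so $\dm^{-1}(]-\infty,\alpha]) \subseteq d_{\nu,m}^{-1}(]-\infty,\alpha+\epsilon]) \subseteq \dm^{-1}(]-\infty,\alpha+2\epsilon])$. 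Theorem~\ref{tstability} then yields the claimed bottleneck bound immediately.

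The heart of the argument is the uniform bound, and the key is a variational reformulation of $\dm$ that uses only the metric. I would first establish that for every $x$,
\[
  m\,\dm(x)^2 = \min\left\{ \int_\X \dX{x}{y}^2\, d\tilde\mu(y) : \tilde\mu \le \mu,\ \tilde\mu(\X) = m \right\},
\]
where $\tilde\mu\le\mu$ means $\tilde\mu(A)\le\mu(A)$ for all Borel $A$; this follows from a layer-cake / change-of-variables computation identifying the optimal submeasure — the mass of $\mu$ placed as close to $x$ as possible — with the integral $\int_0^m \delta_{\mu,l}(x)^2\,dl$ in the definition, which also exhibits an explicit minimizer $\tilde\mu_x$. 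With this characterization in hand, I would fix $x$ and any transport plan $\pi\in\Pi(\mu,\nu)$, disintegrate $\pi$ along its first marginal $\mu$, and reweight by the Radon--Nikodym density $d\tilde\mu_x/d\mu$ to obtain a submeasure $\tilde\pi_x\le\pi$ whose first marginal is $\tilde\mu_x$ and whose second marginal $\tilde\nu_x\le\nu$ has mass $m$. Using $\tilde\nu_x$ as a (suboptimal) competitor for $d_{\nu,m}(x)$ and applying the Minkowski inequality in $L^2(\tilde\pi_x)$ to the triangle inequality $\dX{x}{z}\le\dX{x}{y}+\dX{y}{z}$ gives
\[
  \sqrt m\, d_{\nu,m}(x) \le \Bigl(\textstyle\int \dX{x}{y}^2\, d\tilde\pi_x\Bigr)^{1/2} + \Bigl(\textstyle\int \dX{y}{z}^2\, d\tilde\pi_x\Bigr)^{1/2} \le \sqrt m\, \dm(x) + C_2(\pi),
\]
where the first term equals $\sqrt m\,\dm(x)$ by optimality of $\tilde\mu_x$ and the second is at most $C_2(\pi)$ since $\tilde\pi_x\le\pi$. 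Dividing by $\sqrt m$, taking the infimum over $\pi$, and swapping the roles of $\mu$ and $\nu$ yields the uniform bound.

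For well-definedness I would note that each $\delta_{\mu,l}$ is $1$-Lipschitz as an infimum of distances, whence $\dm$ is $1$-Lipschitz (by Minkowski in $L^2([0,m])$ applied to $l\mapsto\delta_{\mu,l}$), hence continuous; on a triangulable space the sublevel-set filtration of a continuous function is q-tame, so $\Dgm{\dm}$ and $\Dgm{d_{\nu,m}}$ both exist and Theorem~\ref{tstability} applies.

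The main obstacle I anticipate is the variational characterization together with the measure-theoretic construction of $\tilde\pi_x$: one must justify the explicit minimizing submeasure and carry out the disintegration and Radon--Nikodym reweighting in a general (Polish) metric space rather than in $\R^d$, checking that the second marginal is indeed a submeasure of $\nu$ of mass $m$. The transport/Minkowski estimate itself is then routine, and the passage from the $L^\infty$ bound to the persistence bound is immediate.
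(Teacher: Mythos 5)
Your main argument is essentially the paper's own proof. The variational characterization of $m\,\dm(x)^2$ as a minimum over submeasures of $\mu$ of mass $m$ is Proposition~\ref{Wwriting} (proved there by the same cumulative-function / generalized-inverse computation you sketch), the restriction of a transport plan $\pi$ to a sub-plan carrying the optimal submeasure $\tilde\mu_x$ onto a submeasure of $\nu$ of mass $m$ is exactly how Theorem~\ref{stability} is proved (the paper phrases your Minkowski step as the triangle inequality for $W_2$ applied to $m\delta_x$, $\mu_{x,m}$ and $\tilde\nu$), and the passage from the sup-norm bound to the bottleneck bound via an $\epsilon$-interleaving of sublevel sets and Theorem~\ref{tstability} is identical.

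The one genuine gap is in the well-definedness step. You assert that on a triangulable space the sublevel-set filtration of a continuous function is q-tame; this is false when the space is non-compact. Take $\X=\R$, which is triangulable as a locally finite complex, and $f=\sin$: for $-1<\alpha<\beta<1$ the sublevel sets $f^{-1}(]-\infty,\alpha])$ and $f^{-1}(]-\infty,\beta])$ each have infinitely many connected components and the inclusion-induced map on $H_0$ has infinite rank, so the filtration is not q-tame. The q-tameness theorem you want to invoke applies to continuous functions on \emph{finite} (or compact) polyhedra. The paper closes this hole in Proposition~\ref{pQtame} by first proving that $\dm$ is \emph{proper}: if a sublevel set $\dm^{-1}([0,\alpha])$ were not compact, one could pack into it infinitely many disjoint balls of radius $\sqrt{2}\,\alpha$, each of $\mu$-mass at least $m/2$, contradicting $\mu(\X)=1$. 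Compactness of the sublevel sets then allows a restriction to a finite subcomplex of the triangulation, where the standard q-tameness result applies. You need this properness argument, which crucially uses the finiteness of the total mass of $\mu$; $1$-Lipschitz continuity alone does not suffice.
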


To prove this theorem, we first show that the distance to measure functions are stable with respect to the Wasserstein distance.
Then, we prove that their diagrams are well-defined and are close using Theorem~\ref{tstability}.

\subsection{Wasserstein stability}

A measure $\nu$ is a \emph{submeasure} of a measure $\mu$ if for every $B\in\mathfrak{B}(\X),\ \nu(B)\leq\mu(B)$.
Let $\Sub{m}{\mu}$ be the set of all submeasures of $\mu$, which have a total mass $m$.

The distance to a measure $\mu$ at point a $x$ can be expressed as the Wasserstein distance between two measures,
the Dirac mass $\delta_x$ on $x$
and a submeasure of $\mu$ of mass $m$.
Using this view, we generalize the stability result from~\cite{gipmCCM} as follows. 

\begin{proposition}\label{Wwriting}
Let $\mu$ be a probability measure on a metric space $\X$, and let $m\in ]0,1]$ be a mass parameter. Then,
\[
  \dm(x)=\min_{\nu\in \Sub{m}{\mu}}\frac{1}{\sqrt{m}}W_2(m\delta_x,\nu).
\]
\end{proposition}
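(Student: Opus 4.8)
The plan is to reduce the statement to a one-dimensional minimization over the distribution of distances from $x$, and then solve that minimization explicitly. Concretely, I would first simplify the Wasserstein term. Any transport plan $\pi\in\Pi(m\delta_x,\nu)$ has first marginal $m\delta_x$, which is concentrated at the single point $x$; hence $\pi$ gives no mass to $(\X\setminus\{x\})\times\X$, so it is supported on $\{x\}\times\X$ and is forced to be the product $\delta_x\otimes\nu$. Its squared cost is therefore $\int_\X \dX{x}{y}^2\,d\nu(y)$, and since this is the only admissible plan,
\[
  W_2(m\delta_x,\nu)^2=\int_\X \dX{x}{y}^2\,d\nu(y).
\]
After clearing the common factor $1/m$ and squaring, the proposition is then equivalent to
\[
  \int_0^m \delta_{\mu,l}(x)^2\,dl=\min_{\nu\in\Sub{m}{\mu}}\int_\X \dX{x}{y}^2\,d\nu(y).
\]

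Next I would push everything forward to $\R_{\ge 0}$ along the map $y\mapsto \dX{x}{y}$. Writing $\rho$ for the pushforward of $\mu$, its cumulative distribution function is $F(r)=\mu(\bar B(x,r))$, and by definition $\delta_{\mu,l}(x)=\inf\{r\mid F(r)>l\}$ is exactly the generalized inverse (quantile function) of $F$. For any submeasure $\nu\le\mu$, the pushforward $\dist_*\nu$ is a submeasure of $\rho$ of the same total mass, and the integral is unchanged: $\int_\X \dX{x}{y}^2\,d\nu(y)=\int_0^\infty t^2\,d(\dist_*\nu)(t)$. This reduces the problem to minimizing the second moment $\int_0^\infty t^2\,d\sigma(t)$ over all submeasures $\sigma\le\rho$ of total mass $m$.

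The remaining step is the one-dimensional minimization, which is solved by concentrating the mass as close to $0$ as possible. For the lower bound, I would use the quantile identity: the pushforward of Lebesgue measure on $[0,m]$ by the quantile function of $F$ is precisely the ``leftmost'' submeasure of $\rho$ of mass $m$, whose second moment equals $\int_0^m \delta_{\mu,l}(x)^2\,dl$, and a rearrangement argument shows no submeasure of mass $m$ has smaller second moment. For attainment, I would exhibit the optimal $\nu$ explicitly: with $r^*=\delta_{\mu,m}(x)$ and $S=\{y\in\X\mid \dX{x}{y}=r^*\}$, set $\nu=\mu|_{B(x,r^*)}+c\,\mu|_{S}$, where $c\in[0,1]$ is chosen so that $\nu(\X)=m$. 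This choice is possible because continuity of $\mu$ from below and above gives $\mu(B(x,r^*))\le m\le\mu(\bar B(x,r^*))$, and the distance-pushforward of this $\nu$ is exactly the leftmost submeasure, so the minimum is attained.

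The main obstacle is the careful treatment of the sphere $S$, on which $F$ may jump, i.e.\ $\mu$ may assign it positive mass; this is what forces the fractional correction term $c\,\mu|_{S}$ and requires checking that such a fractional multiple is still a submeasure of $\mu$. Making the quantile pushforward identity and the accompanying rearrangement inequality fully rigorous, rather than appealing to the intuition of ``moving mass inward,'' is the other technical point that needs care, since it is precisely what guarantees both the lower bound and the identification of the minimal value with $\int_0^m \delta_{\mu,l}(x)^2\,dl$.
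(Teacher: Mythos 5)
Your proposal is correct and follows essentially the same route as the paper: reduce $W_2(m\delta_x,\nu)$ to the second moment of $\nu$ via the unique transport plan, push forward along $y\mapsto \dX{x}{y}$, compare quantile functions of the pushforwards to get the lower bound $\int_0^m\delta_{\mu,l}(x)^2\,dl$, and exhibit the minimizer by restricting $\mu$ to the open ball and adding a rescaled fraction of the boundary sphere's mass. The "rearrangement argument" you defer to is exactly the paper's observation that $F_{\nu_x}\le F_{\mu_x}$ implies $F_{\nu_x}^{-1}\ge F_{\mu_x}^{-1}$, so nothing essential is missing.
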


Given $x\in\X$ and $m>0$, let ${\cal{R}}_{\mu,m}(x)$ be the set of the submeasures of $\mu$ with total mass $m$ whose support is contained in the closed ball $\bar{B}(x,\delta_{\mu,m}(x))$ and whose restriction to the open ball $B(x,\delta_{\mu,m(x)})$ coincides with $\mu$.
The proof shows that ${\cal{R}}_{\mu,m}(x)$ is exactly the set of minimizers of Proposition~\ref{Wwriting}. 

In order to prove this theorem we need to introduce a few definitions.
The \emph{cumulative function} $F_\nu:\R^+\rightarrow\R$ of a measure $\nu$ on $\R^+$ is the non-decreasing function defined by $F_\nu(y)=\nu([0,y))$.
Its \emph{generalized inverse} $F_\nu^{-1}:m\mapsto\inf\{t\in\R\mid F_\nu(t)>m\}$ is left-continuous. 

\begin{proof}
If $\nu$ is a measure of total mass $m$ on $\X$ then there exists only one transport plan between $\nu$ and the Dirac mass $m\delta_x$.
It transports every point of $\X$ to $x$.
Hence we get
\[
  W_2(m\delta_x,\nu)^2=\int_{\X}\dX{h}{x}^2\ d\nu(h).
\]

Let $d_x:\X\rightarrow\R$ denote the distance function to the point $x$ and let $\nu_x$ be the pushforward of $\nu$ by the distance function to $x$.
That is, for any subset $I$ of $\R, \nu_x(I)=\nu(d_x^{-1}(I))$. 
Note that $F^{-1}_{\nu_x}(m)=\delta_{\nu,m}(x)$.
Using the change of variable formula and the definition of the cumulative function we get:
$$\int_{\X}\dX{h}{x}^2d\nu(h)=\int_{\R^+} t^2d\nu_x(t)=\int_0^mF_{\nu_x}^{-1}(l)^2dl.$$

Suppose further that $\nu$ is a submeasure of $\mu$, then $F_{\nu_x}(t)\leq F_{\mu_x}(t)$ for all $t>0$. 
So, $F_{\nu_x}^{-1}(l)\geq F_{\mu_x}^{-1}(l)$ for all $l> 0$, and thus,
\begin{equation}\label{cumulineq}
  W_2(m\delta_x,\nu)^2\geq\int_0^mF_{\mu_x}^{-1}(l)^2dl=\int_0^m\delta_{\mu,l}(x)^2dl=m\dm(x)^2.
\end{equation}
This inequality implies that $\dm(x)$ is smaller than $\frac{1}{\sqrt{m}}W_2(m\delta_x,\nu)$ for any $\nu\in\Sub{m}{\mu}$.

Consider the case when the inequality in (\ref{cumulineq}) is tight.
Such a case happens when for almost every $l\leq m,\ F_{\nu_x}^{-1}(l)=F_{\mu_x}^{-1}(l)$.
Since these functions are increasing and left-continuous, equality must hold for every such $l$.
By the definition of the pushforward, this implies that $\nu(\bar{B}(x,\delta_{\mu,m}(x)))=m$, i.e., all the mass of $\nu$ is contained in the closed ball $\bar{B}(x,\delta_{\mu,m}(x))$, and that $\nu(B(x,\delta_{x,\mu}(m)))=\mu(B(x,\delta_{x,\mu}(m)))$.
Because $\nu$ is a submeasure of $\mu$ this is true if and only if $\nu$ is in the set ${\cal R}_{\mu,m}(x)$ described before the proof.
Thus ${\cal R}_{\mu,m}(x)$ is exactly the set of submeasures $\nu\in\Sub{m}{\mu}$ such that $\dm(x)=\frac{1}{\sqrt{m}}W_2(m\delta_x,\nu)$.

To conclude the proof we need only show that there exists at least one measure $\mu_{x,m}$ in the set ${\cal R}_{\mu,m}(x)$.
If $\mu(\bar{B}(x,\delta_{\mu,m}(x)))=m$, then $\mu_{x,m}=\mu|_{\bar{B}(x,\delta_{\mu,m}(x))}$ is an obvious choice.
The only difficulty is when the boundary $\partial B(x,\delta_{\mu,m}(x))$ of the ball has too much mass.
In this case we uniformly rescale the mass contained in the bounding sphere such that the measure $\mu_{x,m}$ has total mass $m$. 
More precisely we let:
$$\mu_{x,m}=\mu|_{B(x,\delta_{\mu,m}(x))}+(m-\mu(B(x,\delta_{\mu,m}(x))))\frac{\mu|_{\partial B(x,\delta_{\mu,m}(x))}}{\mu(\partial B(x,\delta_{\mu,m}(x)))}.$$
We hence have $\frac{1}{\sqrt{m}}W_2(m\delta_x,\mu_{x,m})=\dm(x)$.
\end{proof}

From this result, we have the following Wasserstein stability guarantee for the distance to a measure.

\begin{theorem}\label{stability}
Let $\mu$ and $\nu$ be two probability measures on a metric space $\X$ and let $m\in]0,1]$ be a mass parameter. 
Then:
$$\norm{\dm-d_{\nu,m}}_\infty\leq\frac{1}{\sqrt{m}}W_2(\mu,\nu).$$
\end{theorem}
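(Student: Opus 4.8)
The plan is to bound the pointwise difference $|\dm(x) - d_{\nu,m}(x)|$ uniformly in $x$ by leveraging the Wasserstein characterization of the distance to a measure from Proposition~\ref{Wwriting}. Fix an arbitrary point $x \in \X$. By Proposition~\ref{Wwriting} we may choose an optimal submeasure $\mu_{x,m} \in \cal{R}_{\mu,m}(x) \subseteq \Sub{m}{\mu}$ realizing $\dm(x) = \frac{1}{\sqrt{m}} W_2(m\delta_x, \mu_{x,m})$. By symmetry, let $\pi \in \Pi(\mu,\nu)$ be an optimal transport plan realizing $W_2(\mu,\nu)$. The key idea is to use $\pi$ to transport $\mu_{x,m}$ to a corresponding submeasure $\nu_{x,m}$ of $\nu$ of the same total mass $m$, and then to compare distances to $m\delta_x$ through the triangle inequality for $W_2$.

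First I would construct the submeasure of $\nu$ induced by $\mu_{x,m}$ under the plan $\pi$. Concretely, since $\mu_{x,m} \leq \mu = \pi(\,\cdot\,\times\X)$, one can restrict $\pi$ to a subplan $\pi'$ whose first marginal is exactly $\mu_{x,m}$ (this is possible because $\mu_{x,m}$ is dominated by the first marginal of $\pi$, so one rescales $\pi$ fiberwise by the Radon--Nikodym density $\frac{d\mu_{x,m}}{d\mu}$). The second marginal $\nu_{x,m} = \pi'(\X\times\,\cdot\,)$ is then a submeasure of $\nu$ with total mass $m$, i.e.\ $\nu_{x,m} \in \Sub{m}{\nu}$. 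By construction $\pi'$ is a transport plan between $\mu_{x,m}$ and $\nu_{x,m}$, and its cost is controlled by the cost of $\pi$, so that $W_2(\mu_{x,m}, \nu_{x,m}) \leq W_2(\mu,\nu)$.

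Next I would apply the triangle inequality for the $W_2$ metric to the three measures $m\delta_x$, $\mu_{x,m}$, and $\nu_{x,m}$:
\[
  W_2(m\delta_x, \nu_{x,m}) \leq W_2(m\delta_x, \mu_{x,m}) + W_2(\mu_{x,m}, \nu_{x,m}).
\]
Using that $\nu_{x,m} \in \Sub{m}{\nu}$, Proposition~\ref{Wwriting} applied to $\nu$ gives $d_{\nu,m}(x) \leq \frac{1}{\sqrt{m}} W_2(m\delta_x, \nu_{x,m})$ because $d_{\nu,m}(x)$ is the \emph{minimum} over all submeasures. Dividing the triangle inequality by $\sqrt{m}$ and combining with $W_2(\mu_{x,m},\nu_{x,m}) \le W_2(\mu,\nu)$ then yields
\[
  d_{\nu,m}(x) \leq \dm(x) + \frac{1}{\sqrt{m}} W_2(\mu,\nu).
\]
Swapping the roles of $\mu$ and $\nu$ gives the reverse inequality, so $|\dm(x) - d_{\nu,m}(x)| \leq \frac{1}{\sqrt{m}} W_2(\mu,\nu)$, and since $x$ was arbitrary the $\infty$-norm bound follows.

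I expect the main obstacle to be the rigorous construction of the subplan $\pi'$ and the verification that $W_2(\mu_{x,m}, \nu_{x,m}) \leq W_2(\mu, \nu)$. The disintegration of $\pi$ along its first marginal and the fiberwise rescaling by the density of $\mu_{x,m}$ with respect to $\mu$ require some care with measure-theoretic bookkeeping, particularly to confirm that $\pi'$ is a genuine transport plan (its marginals are exactly $\mu_{x,m}$ and $\nu_{x,m}$) and that its transport cost does not exceed that of $\pi$; the cost inequality holds because $\pi' \leq \pi$ pointwise after rescaling, so integrating $\dX{\cdot}{\cdot}^2$ against $\pi'$ is bounded by integrating against $\pi$. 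Everything else is a routine application of the triangle inequality and the minimality in Proposition~\ref{Wwriting}.
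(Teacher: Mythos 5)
Your proposal is correct and follows essentially the same route as the paper's proof: represent $\dm(x)$ via the optimal submeasure from Proposition~\ref{Wwriting}, push that submeasure through an optimal plan between $\mu$ and $\nu$ to obtain a competitor submeasure of $\nu$, and conclude by the $W_2$ triangle inequality, minimality, and symmetry. Your explicit construction of the subplan by fiberwise rescaling with the Radon--Nikodym density is a more detailed justification of a step the paper simply asserts.
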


\begin{proof}
Using Proposition~\ref{Wwriting}, we get that $\sqrt{m}\ \dm(x)=W_2(m\delta_x,\mu_{x,m})$, where $\mu_{x,m}\in {\cal R}_{\mu,m}(x)$. 
Let $\pi$ be an optimal transport plan between $\mu$ and $\nu$, i.e., a transport plan between $\mu$ and $\nu$ such that
\[
  \int_{\X\times\X}\dX{x}{y}^2d\pi(x,y)=W_2(\mu,\nu)^2.
\]

Let us consider the submeasure $\mu_{x,m}$ of $\mu$.
Then there exists $\tilde\pi$ a submeasure of $\pi$ that transports $\mu_{x,m}$ to a submeasure $\tilde\nu$ of $\nu$.
We get that:
$$W_2(\mu_{x,m},\tilde\nu)\leq W_2(\mu,\nu).$$
Using Proposition~\ref{Wwriting} again, we get that for any $x\in\X$, $\sqrt{m}\ d_{\nu,m}(x)\leq W_2(m\delta_x,\tilde\nu)$.
Thus,
\begin{align*}
\sqrt{m}\ d_{\nu,m}(x)
&\leq W_2(m\delta_x,\tilde\nu)\leq W_2(m\delta_x,\mu_{x,m})+W_2(\tilde\nu,\mu_{x,m})\\
&\leq \sqrt{m}\ \dm(x)+W_2(\mu,\nu).
\end{align*}
The roles of $\mu$ and $\nu$ can be reversed to conclude the proof.
\end{proof}

Another consequence of Proposition~\ref{Wwriting} is that $\dm$ is $1$-Lipschitz with respect to $x$.

\begin{proposition}\label{pLipschitz}
Let $\mu$ be a probability measure on a metric space $\X$ and let $m\in]0,1]$ be a mass parameter. Then $\dm$ is 1-Lipschitz.
\end{proposition}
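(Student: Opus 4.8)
The plan is to exploit the variational characterization of $\dm$ supplied by Proposition~\ref{Wwriting}, which expresses $\dm(x)$ as a minimal (rescaled) Wasserstein distance over the submeasures of mass $m$. Fix two points $x,y\in\X$. By Proposition~\ref{Wwriting} there is a submeasure $\mu_{x,m}\in\Sub{m}{\mu}$ attaining the minimum at $x$, so that $\sqrt{m}\,\dm(x)=W_2(m\delta_x,\mu_{x,m})$. The crux of the argument is to feed this same submeasure into the minimization problem at $y$: since $\dm(y)$ is a minimum over all of $\Sub{m}{\mu}$, using $\mu_{x,m}$ as a (generally non-optimal) competitor gives $\sqrt{m}\,\dm(y)\le W_2(m\delta_y,\mu_{x,m})$.

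From there I would apply the triangle inequality for $W_2$ on positive measures of common total mass $m$ (see~\cite{villani2003tot}) to split $W_2(m\delta_y,\mu_{x,m})\le W_2(m\delta_y,m\delta_x)+W_2(m\delta_x,\mu_{x,m})$, and I would recognize the second term as exactly $\sqrt{m}\,\dm(x)$ by the choice of $\mu_{x,m}$.

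The remaining step is to evaluate the distance between the two scaled Dirac masses. As already observed in the proof of Proposition~\ref{Wwriting}, a measure of total mass $m$ admits a unique transport plan to a point mass; here it moves all the mass $m$ from $x$ to $y$, so $W_2(m\delta_x,m\delta_y)=\sqrt{m}\,\dX{x}{y}$. Substituting back yields $\sqrt{m}\,\dm(y)\le\sqrt{m}\,\dX{x}{y}+\sqrt{m}\,\dm(x)$, that is, $\dm(y)-\dm(x)\le\dX{x}{y}$. Swapping the roles of $x$ and $y$ gives the reverse inequality, and together they produce $|\dm(x)-\dm(y)|\le\dX{x}{y}$, which is the claimed $1$-Lipschitz bound.

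I do not expect any serious obstacle: the whole proof is essentially a triangle inequality in the space of measures. The only point deserving a moment of care is justifying the triangle inequality for $W_2$ between \emph{positive} measures of equal total mass rather than probability measures, but this is standard; and the explicit evaluation $W_2(m\delta_x,m\delta_y)=\sqrt{m}\,\dX{x}{y}$ follows immediately from the uniqueness of the transport plan to a point mass noted in the previous proof.
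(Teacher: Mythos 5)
Your proposal is correct and follows essentially the same route as the paper: both take the optimal submeasure $\mu_{x,m}$ from Proposition~\ref{Wwriting} at $x$, use it as a competitor at $y$, and conclude via the triangle inequality for $W_2$ together with $W_2(m\delta_x,m\delta_y)=\sqrt{m}\,\dX{x}{y}$. No issues.
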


\begin{proof}
Let $x$ and $y$ be two points of $\X$. 
Using Proposition~\ref{Wwriting}, there exists a submeasure $\mu_{x,m}$ of $\mu$ such that $\dm(x)=\frac{1}{\sqrt{m}}W_2(m\delta_x,\mu_{x,m})$.
The same proposition applied to $y$ gives $\dm(y)\leq \frac{1}{\sqrt{m}}W_2(m\delta_y,\mu_{x,m})$.
Knowing that $W_2(m\delta_x,m\delta_y)=\sqrt{m}\ \dX{x}{y}$, we can conclude that $\dm(y)\leq\dm(x)+\dX{x}{y}$.
The choice of $x$ and $y$ is arbitrary, so by symmetry, $\dm(x)\leq\dm(y)+\dX{x}{y}$.
Therefore, $\dm$ is 1-Lipschitz.
\end{proof}

\subsection{Persistence}
For persistence diagrams of sublevel sets filtrations of distance to measure functions to be well-defined, we need to prove that they are q-tame.
\begin{proposition}\label{pQtame}
Let $\X$ be a triangulable metric space, let $\mu$ be a probability measure on $\X$, and let $m\in]0,1]$ be a mass parameter. 
Then, the sublevel sets filtration of $\dm$ is $q$-tame.
\end{proposition}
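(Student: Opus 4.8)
\section*{Proof proposal}

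The plan is to verify $q$-tameness directly from the definition: I must show that for every pair $\alpha<\beta$ the linear map $\Hom(X_\alpha)\to\Hom(X_\beta)$ induced by inclusion has finite rank, where $X_\gamma:=\dm^{-1}(]-\infty,\gamma])$ denotes the sublevel sets. The strategy is to route the inclusion $X_\alpha\hookrightarrow X_\beta$ through the geometric realization of a \emph{finite} simplicial complex $L$, so that the map factors as $\Hom(X_\alpha)\to\Hom(|L|)\to\Hom(X_\beta)$; since $L$ is finite, $\Hom(|L|)$ is finite-dimensional and the composite automatically has finite rank. Producing such a finite $L$ rests on two ingredients: that each sublevel set $X_\alpha$ is compact, and that $\X$ is homeomorphic to a locally finite simplicial complex.

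The key step, and the one where the hypotheses on $\mu$ enter, is the compactness of the sublevel sets. First I would note that $\dm$ is continuous (indeed $1$-Lipschitz by Proposition~\ref{pLipschitz}), so each $X_\alpha$ is closed. To see that $X_\alpha$ is totally bounded, I would run a packing argument driven by the total mass being $1$. If $\dm(x)\le\alpha$ then, since $l\mapsto\delta_{\mu,l}(x)$ is nondecreasing, the defining integral gives $\tfrac{m}{2}\,\delta_{\mu,m/2}(x)^2\le\int_0^m\delta_{\mu,l}(x)^2\,dl\le m\alpha^2$, whence $\delta_{\mu,m/2}(x)\le\alpha\sqrt2<2\alpha$ and therefore $\mu(\bar{B}(x,2\alpha))>\tfrac{m}{2}$. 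Consequently any family of points of $X_\alpha$ that are pairwise at distance greater than $4\alpha$ indexes pairwise disjoint balls of $\mu$-mass exceeding $\tfrac{m}{2}$, so such a family has fewer than $2/m$ elements. A maximal $4\alpha$-separated subset of $X_\alpha$ is thus finite and its $4\alpha$-balls cover $X_\alpha$, proving total boundedness; being closed and totally bounded in $\X$ (which we may take to be complete), $X_\alpha$ is compact. The degenerate value $\alpha=0$ is handled by the same estimate with $2\alpha$ replaced by an arbitrarily small radius.

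With compactness in hand, I would fix a homeomorphism $h\colon\X\to|K|$ onto a locally finite complex $K$ and carry out the standard sandwiching. Choosing $\alpha<\gamma<\beta$, the open set $\{\dm<\gamma\}$ contains the compact set $X_\alpha$; using continuity of $\dm$ together with local finiteness, I would subdivide $K$ finely enough that the finitely many simplices meeting $X_\alpha$ all lie inside $\{\dm<\gamma\}$, and take $L$ to be the subcomplex they span. Then $X_\alpha\subseteq|L|\subseteq X_\beta$ with $L$ finite, which gives the factorization above and hence the finite rank of $\Hom(X_\alpha)\to\Hom(X_\beta)$.

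The main obstacle is the compactness step: for an arbitrary measure $\dm$ need not be proper, and it is precisely the finiteness of the total mass that forces the sublevel sets to be totally bounded, which is the heart of the argument. Once compactness is secured, the passage to a finite sandwiching complex and the factorization are routine applications of simplicial approximation on the triangulation; the only point requiring care there is to subdivide so that the simplices straddling the boundary of $X_\alpha$ keep their $\dm$-values below $\beta$, a requirement made quantitative by the $1$-Lipschitz bound of Proposition~\ref{pLipschitz}.
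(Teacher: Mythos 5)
Your argument is correct and its decisive step coincides with the paper's: both rest on the fact that $\dm(x)\le\alpha$ forces $\mu(\bar{B}(x,\sqrt{2}\,\alpha))\ge m/2$ (via monotonicity of $l\mapsto\delta_{\mu,l}(x)$), so that pairwise well-separated points of a sublevel set index disjoint balls of mass at least $m/2$, of which there can be at most $2/m$. You package this as total boundedness plus a finite net, where the paper argues by contradiction with a sequence whose mutual distances go to infinity; your packaging is the sounder of the two, since a non-compact closed set need not be metrically unbounded, although both versions then silently need ``closed and totally bounded implies compact,'' i.e.\ completeness of $\X$ (which you flag but do not derive from triangulability --- the paper does not address this either, so it is not a gap relative to the paper). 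Where you genuinely differ is the conclusion: the paper restricts the triangulation to a finite subcomplex containing the compact image of a sublevel set and cites the q-tameness of continuous functions on finite complexes (Theorem~2.22 of~\cite{sspmCDGO}), whereas you sandwich $X_\alpha\subseteq|L|\subseteq X_\beta$ with $L$ finite and factor $\Hom(X_\alpha)\to\Hom(X_\beta)$ through the finite-dimensional $\Hom(|L|)$. Your finish is more elementary and self-contained (it is in essence the proof of the cited theorem); the paper's is shorter. One caution: ``subdivide finely enough'' should be justified by compactness of $h(X_\alpha)$ inside the open set $h(\{\dm<\gamma\})$ in the realization itself --- the $1$-Lipschitz estimate for $\dm$ is with respect to $d_\X$, not the simplicial metric on $|K|$, so it does not by itself quantify the required mesh.
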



\begin{proof}
According to Proposition~\ref{pLipschitz} $\dm$ is 1-Lipschitz and thus continuous. 
Also, $\dm$ is nonnegative by definition.
Moreover, $\dm$ is proper, i.e., the preimage of any compact set is compact. 
As the function is nonnegative and continuous, it suffices to show that any sublevel set $\dm^{-1}([0,\alpha])$ is compact.

Suppose for contradiction that for a fixed $\alpha>0$, $\dm^{-1}([0,\alpha])$ is not compact.
Then there exists a sequence $(x_i)_{i>0}$ of points of $\dm^{-1}([0,\alpha])$ such that $\dX{x_0}{x_n}\rightarrow\infty$ when $n\rightarrow\infty$.
Hence we can extract a sub-sequence $(x_{\phi(i)})_{i>0}$ such that for any $i$ and $j$, $\bar{B}(x_{\phi(i)},\sqrt{2}\alpha)\cap \bar{B}(x_{\phi(j)},\sqrt{2}\alpha)=\emptyset$.
Let us remark that $\mu(\bar{B}(x_{\phi(i)},\sqrt{2}\alpha))\geq\frac{m}{2}$.
So,
\[
  \dm(x_{\phi(i)})^2=\frac{1}{m}\int_0^m\delta_{\mu,l}(x_{\phi(i)})^2 dl\leq\alpha^2.
\]
The function $\delta_{\mu,l}(x_{\phi(i)})$ is nonnegative and increasing with $l$ and therefore $\frac{m}{2}\delta_{\mu,\frac{m}{2}}(x_{\phi(i)})^2\leq m\alpha^2$.
Using the definition of $\delta_{\mu,m}$, this implies that $\mu(\bar{B}(x_{\phi(i)},\sqrt{2}\alpha))\geq\frac{m}{2}$.
Measures are countably additive, so 
\[
  \mu(\X)\geq\sum_{i>0}\mu(\bar{B}(x_{\phi(i)},\sqrt{2}\alpha))\geq\sum_{i>0}\frac{m}{2}=\infty.
\]
However, $\mu$ is a probability measure and therefore $\mu(\X)=1$.
This contradiction implies that $\dm^{-1}([0,\alpha])$ is compact.

As $\X$ is triangulable, there exists a homeomorphism $h$ from $\X$ to a locally finite simplicial complex $C$.
Then for any $\alpha>0$, we can restrict the simplicial complex $C$ to a finite simplicial complex $C_\alpha$ that contains $h(\dm^{-1}([0,\alpha]))$ as $\dm^{-1}([0,\alpha])$ is compact.
The function $\dm\circ h^{-1}|_{C_\alpha}$ is continuous on $C_\alpha$.
Thus its sublevel sets filtration is $q$-tame by Theorem~2.22 of~\cite{sspmCDGO}.

The construction extends to any $\alpha$ and therefore the sublevel sets filtration of $\dm\circ h^{-1}$ is $q$-tame.
Furthermore, homology is preserved by homeomorphisms and thus we can say that the sublevel sets filtration of $\dm$ is $q$-tame.
\end{proof}

Theorem~\ref{tStabPersDtm} is now obtained by combining Theorem~\ref{tstability} and Proposition~\ref{pQtame}.

\begin{proof}[Proof of Theorem~\ref{tStabPersDtm}]
Theorem~\ref{stability} guarantees that:
$$\norm{\dm-d_{\nu,m}}_\infty\leq\frac{1}{\sqrt{m}}W_2(\mu,\nu).$$
The sublevel sets filtrations are therefore interleaved since for all $\alpha\in\R$,
\[
  \dm^{-1}(]-\infty,\alpha])
    \subseteq 
  d_{\nu,m}^{-1}(]-\infty,\alpha+\frac{1}{\sqrt{m}}W_2(\mu,\nu)])
    \subseteq
  \dm^{-1}(]-\infty,\alpha+\frac{2}{\sqrt{m}}W_2(\mu,\nu)]).
\]
Therefore, applying Theorem~\ref{tstability} gives
\[
  d_B(Dgm(\dm),Dgm(d_{\nu,m}))\leq\frac{1}{\sqrt{m}}W_2(\mu,\nu).
\]
\end{proof}


\section{Approximating the Distance to a Measure} 
\label{sec:approximating_the_distance_to_a_measure}

Computing the persistence diagram of the sublevel sets filtration of $\dm$ requires knowing the sublevel sets. 
They are not generally easy to compute. 
We propose an approximation paradigm for $\dm$ that replaces the sublevel sets by a union of balls.
The approach works in any metric space and yields equivalent guarantees as the witnessed $k$-distance approach used in~\cite{wkdGMM} for Euclidean spaces.

\subsection{Power distances}\label{sApd}


\begin{definition}
  Given a metric space $\X$, a set $P$ and a function $w:P\to\R$, we define the \emph{power distance} $f$ associated with $(P,w)$ as
  \begin{equation}\label{eq:power_distance}
    f(x)=\sqrt{\min_{p\in P}\dX{p}{x}^2+w_p^2},
  \end{equation}
  where $w_p$ is the value of $w$ at the point $p$.
\end{definition}

The function $w$ can be defined on a superset of $P$.
Moreover, the sublevel set $f^{-1}(]-\infty,\alpha])$ is the union of the closed balls centered on the points $p$ of $P$ with radius $r_p(\alpha)=\sqrt{\alpha^2-w_p^2}$.
By convention, we assume the ball is empty when the radius is imaginary.

\paragraph{Stability\\}
Power distances are stable under small perturbations of the points.

%

The following lemma states a result about inclusions between balls.
It allows another stability result on power distances (Proposition~\ref{pPowerBalls}) and will be useful for studying the stability of the weighted Rips filtration in Section~\ref{sWRips}.

  \begin{lemma}\label{lPowerBalls}
    Let $p,q\in\X$ be such points such that $\dX{p}{q}\le \epsilon$, and let $w:\X\to\R$ be a $t$-Lipschitz function.
    For all $\alpha \ge w_p$,
    \[
      r_p(\alpha) + \epsilon \le r_q(\alpha+\sqrt{1+t^2}\;\epsilon).
    \]
  \end{lemma}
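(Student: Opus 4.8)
The plan is to unfold the definitions so that the claim becomes an inequality between two explicit square-root expressions, and then to reduce it by two successive squarings to a manifestly true statement. Writing $r_p(\alpha)=\sqrt{\alpha^2-w_p^2}$ and setting $\alpha'=\alpha+\sqrt{1+t^2}\,\epsilon$, the inequality to prove is
\[
  \sqrt{\alpha^2-w_p^2}+\epsilon\le\sqrt{(\alpha')^2-w_q^2}.
\]
The case $\epsilon=0$ is trivial, since in a metric space $\dX{p}{q}=0$ forces $p=q$ and hence $w_p=w_q$; so I would assume $\epsilon>0$.

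First I would eliminate $w_q$ using the Lipschitz hypothesis. Since $\dX{p}{q}\le\epsilon$ and $w$ is $t$-Lipschitz, $|w_p-w_q|\le t\epsilon$, and as the weights are nonnegative this yields $w_q^2\le(w_p+t\epsilon)^2$, so that $\sqrt{(\alpha')^2-w_q^2}\ge\sqrt{(\alpha')^2-(w_p+t\epsilon)^2}$. It therefore suffices to treat the worst case $w_q=w_p+t\epsilon$, i.e. to show
\[
  \sqrt{\alpha^2-w_p^2}+\epsilon\le\sqrt{(\alpha')^2-(w_p+t\epsilon)^2}.
\]

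Both sides being nonnegative (the left by $\alpha\ge w_p\ge 0$, the right because its square will be shown to dominate), I would square. Expanding $(\alpha')^2=\alpha^2+2\alpha\sqrt{1+t^2}\,\epsilon+(1+t^2)\epsilon^2$ and $(w_p+t\epsilon)^2$, the terms $\alpha^2-w_p^2+\epsilon^2$ cancel on both sides, and after dividing by $2\epsilon$ the inequality collapses to the single clean statement
\[
  \sqrt{\alpha^2-w_p^2}\le\alpha\sqrt{1+t^2}-w_p t.
\]
Its right-hand side is nonnegative, since $\alpha\sqrt{1+t^2}\ge w_p\sqrt{1+t^2}\ge w_p t$ using $\alpha\ge w_p\ge 0$ and $\sqrt{1+t^2}\ge t$. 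Squaring once more and rearranging converts this into $0\le(\alpha t-w_p\sqrt{1+t^2})^2$, which holds for all parameter values.

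The computations are routine; the only place demanding care is the bookkeeping of signs and domains — namely that the weights may be taken nonnegative (so that $w_q^2\le(w_p+t\epsilon)^2$ really follows, and so that $r_p(\alpha)$ is a genuine radius under the hypothesis $\alpha\ge w_p$), and that $\alpha\sqrt{1+t^2}-w_p t\ge 0$ before the second squaring. Recognizing the final expression as the perfect square $(\alpha t-w_p\sqrt{1+t^2})^2$ is what pins down $\sqrt{1+t^2}$ as exactly the right constant; indeed equality holds precisely when $\alpha t=w_p\sqrt{1+t^2}$, so the bound is tight.
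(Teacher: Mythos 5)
Your proof is correct and is essentially the paper's argument run in reverse: the inequality you reduce everything to, $\sqrt{\alpha^2-w_p^2}\le\sqrt{1+t^2}\,\alpha-t w_p$, is exactly the paper's opening observation (obtained there by adding the perfect square $(t\alpha-\sqrt{1+t^2}\,w_p)^2$ to $\alpha^2-w_p^2$), and the subsequent squaring and Lipschitz steps coincide with the paper's chain $(r_p(\alpha)+\epsilon)^2\le(\alpha+\sqrt{1+t^2}\,\epsilon)^2-(w_p+t\epsilon)^2\le r_q(\alpha+\sqrt{1+t^2}\,\epsilon)^2$. Your explicit attention to the nonnegativity of the weights and of $\alpha\sqrt{1+t^2}-tw_p$ is a point the paper leaves largely implicit.
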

  \begin{proof}
    First, observe that $r_p(\alpha)$ can be bounded as follows.
    \begin{align*}
      r_p(\alpha)^2
        = \alpha^2 - w_p^2
        &\le \alpha^2 - w_p^2 + (t\alpha - \sqrt{1+t^2}\;w_p)^2\\
        &= (\sqrt{1+t^2}\;\alpha - t w_p)^2.
    \end{align*}
    Next, we relate $r_p$ and $r_q$ as follows.
    \begin{align*}
      (r_p(\alpha)+\epsilon)^2
        & =   \alpha^2 - w_p^2 + 2 \epsilon \sqrt{\alpha^2 - w_p^2} + \epsilon^2\\
        & \le \alpha^2 - w_p^2 + 2 \epsilon (\sqrt{1+t^2}\;\alpha - t w_p) + \epsilon^2\\
        & =   (\alpha + \sqrt{1+t^2}\;\epsilon)^2 - (w_p + t \epsilon)^2\\
        & \le (\alpha+\sqrt{1+t^2}\;\epsilon)^2-w_q^2\\
        & =   r_q(\alpha+\sqrt{1+t^2}\;\epsilon)^2.
    \end{align*}
    The requirement that $\alpha \ge w_p$ allows us to take the square root of both sides of the inequality since both will be nonnegative.
  \end{proof}
  \noindent
  As a consequence, we obtain the following.

  \begin{proposition}\label{pPowerBalls}
    Let $\X$ be a metric space and let $w:\X\to\R$ be a function.
    Let $P$ and $Q$ be two compact subsets of $\X$.
    Let $f_P$ and $f_Q$ be the power distances associated with $(P,w)$ and $(Q,w)$.
    If $w$ is $t$-Lipschitz, then
    \[
      \norm{f_P-f_Q}_\infty \le \sqrt{1+t^2}\;d_H(P,Q).
    \]
  \end{proposition}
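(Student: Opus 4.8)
The plan is to translate the statement about power distances into one about sublevel sets, where Lemma~\ref{lPowerBalls} applies directly. Recall that $f_P^{-1}(]-\infty,\alpha])$ is exactly the union of the closed balls $\bar{B}(p, r_p(\alpha))$ over $p\in P$, and likewise for $f_Q$. Hence the sup-norm bound $\norm{f_P-f_Q}_\infty \le \sqrt{1+t^2}\,d_H(P,Q)$ is equivalent to the two-sided sublevel-set containment
\[
  f_P^{-1}(]-\infty,\alpha]) \subseteq f_Q^{-1}(]-\infty,\alpha+\sqrt{1+t^2}\,d_H(P,Q)])
\]
together with its counterpart obtained by exchanging $P$ and $Q$. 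By symmetry it suffices to prove one direction, i.e., that $f_Q(x) \le f_P(x) + \sqrt{1+t^2}\,d_H(P,Q)$ for every $x\in\X$.

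Write $\epsilon = d_H(P,Q)$ and fix $x$. First I would let $p\in P$ attain the minimum in the definition of $f_P(x)$ (it exists because $P$ is compact), and set $\alpha = f_P(x)$, so that $\dX{p}{x} = r_p(\alpha)$. Since $\alpha = \sqrt{\dX{p}{x}^2 + w_p^2} \ge w_p$, the hypothesis $\alpha \ge w_p$ of Lemma~\ref{lPowerBalls} is satisfied. Next, using that $\sup_{p\in P}\inf_{q\in Q}\dX{p}{q} \le d_H(P,Q) = \epsilon$ together with the compactness of $Q$, I would pick $q\in Q$ with $\dX{p}{q} \le \epsilon$.

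Then the triangle inequality and Lemma~\ref{lPowerBalls} combine to give
\[
  \dX{q}{x} \le \dX{q}{p} + \dX{p}{x} \le \epsilon + r_p(\alpha) \le r_q(\alpha + \sqrt{1+t^2}\,\epsilon),
\]
which shows $x\in \bar{B}(q, r_q(\alpha+\sqrt{1+t^2}\,\epsilon))$ and therefore $f_Q(x) \le \alpha + \sqrt{1+t^2}\,\epsilon = f_P(x) + \sqrt{1+t^2}\,d_H(P,Q)$. Exchanging the roles of $P$ and $Q$ yields the reverse inequality, and the two together give the claimed sup-norm bound.

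I expect no serious obstacle here: the analytic content is entirely carried by Lemma~\ref{lPowerBalls}, and the only points requiring care are verifying its hypothesis $\alpha \ge w_p$ (which follows from $\alpha = f_P(x) \ge |w_p|$) and invoking compactness to guarantee that both the minimizer $p$ and the nearby point $q$ genuinely exist rather than merely being approached in a limit.
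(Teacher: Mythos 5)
Your proof is correct and takes essentially the same route as the paper: pick the minimizing $p\in P$, find $q\in Q$ within Hausdorff distance, and apply Lemma~\ref{lPowerBalls} together with the triangle inequality, then symmetrize. You additionally verify the hypothesis $\alpha\ge w_p$ of Lemma~\ref{lPowerBalls}, a detail the paper leaves implicit.
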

  \begin{proof}
    Let $x$ be any point of $\X$.
    There exists $p\in P$ such that $x\in \bar{B}(p,r_p(f_P(x)))$.
    There also exists $q\in Q$ such that $\dX{p}{q} \le d_H(P,Q)$.
    By Lemma~\ref{lPowerBalls} and the triangle inequality, $x\in\bar{B}(q,r_q(f_P(x) + \sqrt{1+t^2}\;d_H(P,Q)))$.
    Thus, $f_Q(x) \le f_P(x) + \sqrt{1+t^2}\;d_H(P,Q)$.
    $P$ and $Q$ are interchangeable therefore $\norm{f_Q-f_P}_\infty \le \sqrt{1+t^2}\;d_H(P,Q)$.
    %
  \end{proof}

\paragraph{Approximation\\}
To approximate the distance to a probability measure $\mu$, we introduce the following function.

\begin{definition}\label{dDpd}
  Let $\mu$ be a probability measure on a metric space $\X$ and let $m\in]0,1]$ be a mass parameter.
  Given a subset $P$ of $\X$, we define $\dPm$ as the power distance associated with $(P,\dm)$.
  \[
    \dPm(x)=\sqrt{\min_{p\in P}\dm(p)^2+\dX{p}{x}^2}  
  \]
\end{definition}
That is, the weight of each point is its distance to the empirical measure.
If $P$ is close to $\Supp{\mu}$, we obtain an approximation of $\dm$.

\begin{theorem}\label{tPbound}
Let $\mu$ be a probability measure on a metric space $\X$ and let $m\in]0,1]$ be a mass parameter.
Let $P$ be a subset of $\X$.
If $P$ is an $\epsilon$-sample of $\Supp{\mu}$, then
$$\frac{1}{\sqrt{2}}\dm\leq\ \dPm\leq\sqrt{5}\ (\dm+\epsilon).$$
\end{theorem}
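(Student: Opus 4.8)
The plan is to establish the two inequalities separately. Both rest on the fact that $\dm$ is $1$-Lipschitz (Proposition~\ref{pLipschitz}), and the upper bound additionally uses the observation that $\dm$ dominates the distance from $x$ to the support of $\mu$.

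For the lower bound $\frac{1}{\sqrt 2}\dm\le\dPm$, I would fix $x\in\X$ and an arbitrary $p\in P$. The $1$-Lipschitz property gives $\dm(x)\le\dm(p)+\dX{p}{x}$, and squaring together with the elementary inequality $(a+b)^2\le 2a^2+2b^2$ yields $\dm(x)^2\le 2\bigl(\dm(p)^2+\dX{p}{x}^2\bigr)$. Since this holds for every $p\in P$, minimizing over $p$ in the definition of $\dPm$ gives $\dm(x)^2\le 2\,\dPm(x)^2$, which is exactly the desired bound. Note that this half requires no sampling hypothesis on $P$.

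For the upper bound, the key preliminary step is to show $\dX{x}{\Supp{\mu}}\le\dm(x)$. Writing $D=\dX{x}{\Supp{\mu}}$, for every $r<D$ the ball $\bar{B}(x,r)$ misses the support and hence has zero mass, so $\delta_{\mu,l}(x)\ge D$ for all $l>0$; integrating in the definition of $\dm$ gives $\dm(x)^2\ge D^2$. With this in hand, I would choose $y\in\Supp{\mu}$ realizing the distance $D$ from $x$ (up to an arbitrarily small slack that I let tend to $0$), and then a point $p\in P$ with $\dX{y}{p}\le\epsilon$, which exists because $P$ is an $\epsilon$-sample of $\Supp{\mu}$. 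The triangle inequality gives $\dX{p}{x}\le D+\epsilon\le\dm(x)+\epsilon$, while the $1$-Lipschitz property gives $\dm(p)\le\dm(x)+\dX{x}{p}\le 2\dm(x)+\epsilon$. Substituting these two bounds into $\dPm(x)^2\le\dm(p)^2+\dX{p}{x}^2$ and expanding, one checks the numerical inequality $(2\dm(x)+\epsilon)^2+(\dm(x)+\epsilon)^2\le 5(\dm(x)+\epsilon)^2$, which gives the claim.

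I expect the main obstacle to be purely a matter of bookkeeping: arranging the constants so that the two contributions $\dm(p)^2$ and $\dX{p}{x}^2$ combine into exactly $5(\dm(x)+\epsilon)^2$, rather than anything conceptually deep. The one genuine technical point is that a closest point $y\in\Supp{\mu}$ need not exist if the support is not proper; this is handled by taking $y$ within slack $\eta$ of the infimum and letting $\eta\to 0$, which leaves the final inequality unaffected.
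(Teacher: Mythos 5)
Your proof is correct, and it reaches the same constants by a slightly different route. The paper proves both pointwise bounds by going back to the Wasserstein formulation of Proposition~\ref{Wwriting}: for the lower bound it compares the optimal submeasure $\mu_{x,m}$ against the competitor $\mu_{p,m}$ and expands the integral to get $\dm(x)^2\le 2(\dX{p}{x}^2+\dm(p)^2)$, and for the upper bound it bounds $\dm(p)^2$ by transporting $\mu_{x,m}$ through $x$, arriving at $\dPm(x)^2\le 3\,\dX{x}{p}^2+2\,\dm(x)^2$. You instead invoke the already-established $1$-Lipschitz property (Proposition~\ref{pLipschitz}) as a black box: the lower bound becomes a two-line consequence of $\dm(x)\le\dm(p)+\dX{p}{x}$ and $(a+b)^2\le 2(a^2+b^2)$, and for the upper bound you get the intermediate estimate $\dm(x)^2+2\dm(x)\dX{x}{p}+2\dX{x}{p}^2$ in place of the paper's $2\dm(x)^2+3\dX{x}{p}^2$; both collapse to $5(\dm(x)+\epsilon)^2$ once $\dX{x}{p}\le\dm(x)+\epsilon$ is substituted (your check that the slack terms $4\dm(x)\epsilon+3\epsilon^2$ are nonnegative is right). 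The shared skeleton is the key inequality $\dX{x}{\Supp{\mu}}\le\dm(x)$ together with the $\epsilon$-sample hypothesis producing a point $p\in P$ with $\dX{x}{p}\le\dm(x)+\epsilon$; your derivation of that key inequality from the definition of $\delta_{\mu,l}$ is correct, and your limiting argument for the possible non-existence of a nearest point of $\Supp{\mu}$ is a point of rigor the paper glosses over. What the Lipschitz route buys is brevity and reuse of an existing proposition; what the paper's direct computation buys is independence from Proposition~\ref{pLipschitz} (avoiding any appearance of circularity, since that proposition is itself derived from the same Wasserstein machinery) and a marginally different intermediate bound, but neither advantage is decisive here.
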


A multiplicative approximation implies a multiplicative interleaving of the sublevel sets filtrations that becomes an additive interleaving on a logarithmic scale.
Theorem~\ref{tstability} thus guarantees that the persistence diagrams are close in the bottleneck distance on a logarithmic scale.

\begin{proof}
Let $x$ be a point of $\X$. 
Using the previous notations we get
\[
  \dm(x)^2=\frac{1}{m}\int_{\X}\dX{y}{x}^2\mu_{x,m}(y)\dy.
\]
Let us now fix a point $p\in \Supp{\mu}$.
Since $\mu_{p,m}$ is a submeasure of $\mu$ of total mass $m$,
\begin{align*}
\dm(x)^2
&=\frac{1}{m}\int_{\X}\dX{y}{x}^2\mu_{x,m}(y)\dy\\
&\leq\frac{1}{m}\int_{\X}\dX{y}{x}^2\mu_{p,m}(y)\dy\\
&\leq\frac{1}{m}\int_{\X}((\dX{y}{p}+\dX{p}{x})^2)\mu_{p,m}(y)\dy\\
&\leq\dX{p}{x}^2\ \frac{2}{m}\int_\X\mu_{p,m}(y)\dy+\frac{2}{m}\int_{\X}\dX{y}{p}^2\mu_{p,m}(y)\dy\\
&=2(\dX{p}{x}^2+\dm(p)^2).
\end{align*}
The third inequality follows from the triangle inequality and the relation $(a+b)^2\leq2(a^2+b^2)$.

As the above inequality holds for any point $p$ in $P$ we can conclude that
\[
  \dm(x)\leq\sqrt{2}\ \dPm(x).
\]

To show the other inequality, let $p$ be a point of $P$.
Then by definition we get:
\begin{align*}
\dPm(x)^2
&\leq\dX{x}{p}^2+\dm(p)^2\\
&\leq\dX{x}{p}^2+\frac{1}{m}\int_{\X}\dX{p}{y}^2\mu_{x,m}(y)\dy\\
&\leq\dX{x}{p}^2+\frac{1}{m}\int_{\X}(\dX{p}{x}+\dX{x}{y})^2\mu_{x,m}(y)\dy\\
&\leq3\ \dX{x}{p}^2+2\ \dm(x)^2.
\end{align*}
By the definition of the distance to a measure, $\dX{x}{\Supp{\mu}}\leq\dm(x)$.
Consequently, there exists a point $p\in P$ such that $\dX{x}{p}\leq\dm(x)+\epsilon$.
Hence,
\[
  \dPm(x)^2\leq5(\dm(x)+\epsilon)^2.\qedhere
\]
\end{proof}

\subsection{Measures with finite support}\label{sRestrict}

We now assume that the data are given as a finite set of points $P$ in a metric space $\X$.
We define the following measure to study the point set $P$.

\begin{definition}
Given a finite point set $P$ in a metric space $\X$, the \emph{empirical measure} $\mu_P$ on $P$ is defined as a normalized sum of Dirac measures:
\[
  \mu_P=\frac{1}{|P|}\sum_{p\in P}\delta_p.
\]
\end{definition}

Let $x$ be a point of $\X$. 
We introduce the parameter $k=m|P|$.
To simplify the exposition we will assume that $k$ is an integer.
See Remark~\ref{rkNotInteger} for the generalization.

We reorder the points of $p$ such that $P=(p_1(x),\cdots,p_{|P|}(x))$ and 
\begin{equation}\label{eOrdering}
  \dX{x}{p_1(x)}\leq\cdots\leq\dX{x}{p_{|P|}(x)}.
\end{equation}
If two points are at the same distance of $x$, we order them arbitrarily.
We define the set 
\[
  NN_k^P(x)=\{p_1(x),\cdots,p_k(x)\}
\]
and call it the set of $k^{th}$ nearest neighbors of $x$.
The set $\Lambda_k^P$ consists of all $k$-tuples of points of $P$.

\begin{lemma}\label{Ewriting}
Let $P$ be a finite point set in a metric space $\X$ then for any $x\in\X$:
$$\dmP(x)=\sqrt{\min_{S\in\Lambda_k^P}\frac{1}{k}\sum_{p\in S}\dX{p}{x}^2}=\sqrt{\frac{1}{k}\sum_{p\in NN_k^P(x)}\dX{p}{x}^2}.$$
\end{lemma}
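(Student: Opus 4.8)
The plan is to start from the Wasserstein characterization of Proposition~\ref{Wwriting} and reduce the computation of $\dmP(x)$ to a finite linear optimization over submeasures of $\mu_P$. By Proposition~\ref{Wwriting}, $\dmP(x) = \min_{\nu\in\Sub{m}{\mu_P}}\frac{1}{\sqrt m}W_2(m\delta_x,\nu)$, and since any measure $\nu$ of total mass $m$ admits a unique transport plan to $m\delta_x$ (the one sending all mass to $x$), as established in the proof of that proposition, we have $W_2(m\delta_x,\nu)^2 = \int_\X \dX{h}{x}^2\,d\nu(h)$. Hence
\[
  \dmP(x)^2 = \frac{1}{m}\min_{\nu\in\Sub{m}{\mu_P}}\int_\X \dX{h}{x}^2\,d\nu(h).
\]

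Next I would describe the submeasures of $\mu_P$ explicitly. Since $\mu_P = \frac1{|P|}\sum_{p\in P}\delta_p$ is purely atomic, every submeasure $\nu$ is supported on $P$ and is determined by weights $a_p := \nu(\{p\})$ satisfying $0\le a_p \le \frac1{|P|}$ for each $p$ and $\sum_{p\in P} a_p = m$. Under this parametrization the objective becomes $\int_\X \dX{h}{x}^2\,d\nu(h) = \sum_{p\in P} a_p\,\dX{p}{x}^2$, so that
\[
  \dmP(x)^2 = \frac1m\,\min\Bigl\{\textstyle\sum_{p\in P} a_p\,\dX{p}{x}^2 \ :\ 0\le a_p\le \tfrac1{|P|},\ \sum_{p} a_p = m\Bigr\}.
\]

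The crux of the argument is then to identify this minimum. This is a linear program: a linear functional minimized over the polytope cut out by the box constraints and a single mass constraint. Using the ordering $\dX{x}{p_1(x)}\le\cdots\le\dX{x}{p_{|P|}(x)}$, an exchange (\emph{rearrangement}) argument shows that the minimum loads as much mass as possible onto the points closest to $x$: whenever some nearer point carries weight below $\frac1{|P|}$ while a farther point carries positive weight, transferring mass from the farther to the nearer point does not increase the objective. Because the total mass $m = k/|P|$ is an integer multiple of the atomic weight $\frac1{|P|}$, the resulting optimal allocation is integral, assigning weight exactly $\frac1{|P|}$ to each of the $k$ nearest neighbors $NN_k^P(x)$ and $0$ elsewhere. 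Substituting and using $m|P| = k$ yields
\[
  \dmP(x)^2 = \frac1m\cdot\frac1{|P|}\sum_{p\in NN_k^P(x)}\dX{p}{x}^2 = \frac1k\sum_{p\in NN_k^P(x)}\dX{p}{x}^2,
\]
which is the rightmost expression. The middle expression, the minimum over $S\in\Lambda_k^P$ of $\frac1k\sum_{p\in S}\dX{p}{x}^2$, is minimized by the same rearrangement principle precisely when $S = NN_k^P(x)$, so all three quantities coincide.

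The main obstacle is the step establishing that the continuous minimum over submeasures agrees with the discrete minimum over $k$-point selections. The only real content there is the integrality of the optimal allocation, which is exactly where the hypothesis that $k = m|P|$ is an integer is used; for non-integral $k$ one would have to split the $k$-th nearest neighbor's mass fractionally, which is the case deferred to Remark~\ref{rkNotInteger}.
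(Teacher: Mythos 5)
Your proposal is correct and follows essentially the same route as the paper: both start from the Wasserstein characterization of Proposition~\ref{Wwriting}, parametrize the submeasures of $\mu_P$ by atomic weights $\lambda_p\in[0,\tfrac{1}{|P|}]$ summing to $m$, and observe that the resulting linear objective is minimized by loading mass greedily onto the $k$ nearest neighbors. You merely spell out the exchange/integrality step that the paper compresses into the phrase ``combined with the relation~(\ref{eOrdering})''.
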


\begin{proof}
Since $\mu_P$ has finite support, all its submeasures also have finite support.
\[
  \Sub{m}{\mu_P}=\left\{\sum_{p\in P}\lambda_p\delta_p\mid \forall p\in P,\ 0\leq\lambda_p\leq \frac{1}{|P|} \ \mathrm{and}\ \sum_{p\in P}\lambda_p=m\right\}
\]
Let $\nu=\sum_{p\in P}\lambda_p\delta_p$ be an element of $\Sub{m}{\mu_P}$.
$$W_2(m\delta_x,\nu)^2=\sum_{p\in P}\lambda_p\dX{x}{p}^2$$
Combined with the relation~(\ref{eOrdering}), we get
\[
  S_x=\sum_{p\in NN_k^P(x)}\delta_p\in\argmin_{\nu\in\Sub{m}{\mu_P}}W_2(m\delta_x,\nu).
\]
As $S_x\in\Lambda_k^P$, we are done.
\end{proof}

The distance to the empirical measure, $\dmP$, is thus defined as a lower envelope of quadratic functions.
It is generally costly if not impossible to compute its sublevel sets.

However, we can directly use the approximation presented in Section~\ref{sApd}.
Using $P$ in Definition~\ref{dDpd} and Theorem~\ref{tPbound}, we get the following.

\begin{corollary}\label{cPbound}
Let $P$ be a finite point set of a metric space $\X$ and $m\in]0,1]$ be a mass parameter.
Then,
$$\frac{1}{\sqrt{2}}\dmP\leq\ \dPP\leq\sqrt{5}\ \dmP.$$
\end{corollary}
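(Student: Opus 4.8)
The plan is to read Corollary~\ref{cPbound} as the specialization of Theorem~\ref{tPbound} to the case where the measure $\mu$ is taken to be the empirical measure $\mu_P$. The single fact that makes this work is that the support of $\mu_P$ is exactly the finite point set $P$: since $\mu_P=\frac{1}{|P|}\sum_{p\in P}\delta_p$ places positive mass precisely on the points of $P$, we have $\Supp{\mu_P}=P$. First I would record this support identity and observe that, as an immediate consequence, $P$ is an $\epsilon$-sample of $\Supp{\mu_P}$ with $\epsilon=0$, because every point of $P$ lies at distance $0$ from $P$ itself. This is the only verification required, and it is trivial.

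Next I would instantiate Theorem~\ref{tPbound} with $\mu=\mu_P$ and $\epsilon=0$. By Definition~\ref{dDpd}, the power distance $\dPm$ associated with $(P,\dm)$ becomes $\dPP$ once we set $\mu=\mu_P$, since its weight function $\dm$ then specializes to $\dmP$. Substituting directly into the conclusion $\frac{1}{\sqrt{2}}\dm\le\dPm\le\sqrt{5}(\dm+\epsilon)$ of the theorem yields $\frac{1}{\sqrt{2}}\dmP\le\dPP\le\sqrt{5}(\dmP+0)=\sqrt{5}\ \dmP$, which is exactly the claim.

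There is no genuine obstacle here: the corollary is a pure specialization of an already-proven theorem, and the only substantive content is the observation $\Supp{\mu_P}=P$. It is precisely this identity that permits the choice $\epsilon=0$, thereby collapsing the additive error term in the upper bound of Theorem~\ref{tPbound} to nothing and turning the general multiplicative-plus-additive estimate into the clean two-sided multiplicative bound stated in the corollary.
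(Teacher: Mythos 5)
Your proposal is correct and matches the paper's own derivation: the paper obtains Corollary~\ref{cPbound} exactly by applying Theorem~\ref{tPbound} with $\mu=\mu_P$, using that $\Supp{\mu_P}=P$ so that $P$ is a $0$-sample of the support and the additive term vanishes. You have simply spelled out the one-line specialization the paper leaves implicit.
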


The multiplicative approximation gives a closeness result between persistence diagrams on a logarithmic scale.

\begin{corollary}\label{cLogbound}
Let $P$ be a finite point set of a triangulable metric space $\X$ and $m\in]0,1]$ be a mass parameter.
Then,
$$\dbl(\Dgm{\dmP},\Dgm{\dPP})\leq\ln(\sqrt{5}).$$
\end{corollary}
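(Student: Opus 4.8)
The plan is to convert the multiplicative sandwiching of Corollary~\ref{cPbound} into an additive interleaving of sublevel-set filtrations after the logarithmic change of coordinates, and then invoke the persistence stability theorem (Theorem~\ref{tstability}). First I would record that both $\dmP$ and $\dPP$ are nonnegative continuous functions whose sublevel sets filtrations are $q$-tame: this holds for $\dmP$ by Proposition~\ref{pQtame}, and for $\dPP$ because it is a power distance (a minimum of finitely many continuous functions) on the triangulable space $\X$, so the same argument based on Theorem~2.22 of~\cite{sspmCDGO} applies. This ensures the persistence diagrams in the statement are well-defined.

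The key step is to unwind what the multiplicative bound $\tfrac{1}{\sqrt 5}\dPP \le \dmP \le \sqrt2\,\dPP$ (equivalently the form stated in Corollary~\ref{cPbound}) says about sublevel sets. For any threshold $\alpha$, the inequality $\dmP \le \sqrt5\,\dPP$ gives the inclusion
\[
  (\dPP)^{-1}(]-\infty,\alpha]) \subseteq (\dmP)^{-1}(]-\infty,\sqrt5\,\alpha]),
\]
and the inequality $\dmP \ge \tfrac{1}{\sqrt2}\dPP$ (i.e. $\dPP \le \sqrt2\,\dmP$) gives
\[
  (\dmP)^{-1}(]-\infty,\alpha]) \subseteq (\dPP)^{-1}(]-\infty,\sqrt2\,\alpha]).
\]
Since $\sqrt2 \le \sqrt5$, composing these inclusions shows that the two filtrations, reparametrized multiplicatively, are sandwiched with factor $\sqrt5$ in both directions. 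After applying the coordinate change $\alpha \mapsto \ln\alpha$, a multiplicative shift by $\sqrt5$ becomes an additive shift by $\ln(\sqrt5)$, so the log-rescaled filtrations are $\ln(\sqrt5)$-interleaved in the sense defined in the Filtration interleaving paragraph.

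From there I would apply Theorem~\ref{tstability} to the two log-rescaled, $q$-tame, $\ln(\sqrt5)$-interleaved filtrations to conclude that their persistence diagrams are $\ln(\sqrt5)$-close in bottleneck distance. By the definition of $\dbl$ as the ordinary bottleneck distance after the substitution $(\alpha,\beta)\mapsto(\ln\alpha,\ln\beta)$, this is exactly the desired bound $\dbl(\Dgm{\dmP},\Dgm{\dPP}) \le \ln(\sqrt5)$. The one place needing a little care is the behavior near $\alpha=0$, where $\ln\alpha$ diverges: since both functions are nonnegative and the diagrams live in the region where births and deaths are positive, the logarithmic rescaling is well-behaved on the relevant range, and I expect this to be the only genuinely delicate point in an otherwise routine translation of a multiplicative bound into an additive interleaving.
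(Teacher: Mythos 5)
Your overall route is exactly the paper's: check that both diagrams are well-defined ($q$-tame), read Corollary~\ref{cPbound} as a multiplicative $\sqrt{5}$-interleaving of the sublevel-set filtrations, observe that the logarithmic change of variable turns this into an additive $\ln(\sqrt{5})$-interleaving, and invoke Theorem~\ref{tstability}. The conclusion is correct, and your extra care about the $q$-tameness of $\dPP$ and about the behaviour of $\ln$ near $0$ is reasonable (the paper's own proof glosses over both).

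There is, however, a swap of constants in the middle, and one of your displayed inclusions is false as stated. You correctly open with $\tfrac{1}{\sqrt{5}}\dPP\le\dmP\le\sqrt{2}\,\dPP$, but then assert ``$\dmP\ge\tfrac{1}{\sqrt{2}}\dPP$'' and deduce $(\dmP)^{-1}(]-\infty,\alpha])\subseteq(\dPP)^{-1}(]-\infty,\sqrt{2}\,\alpha])$. That inclusion is equivalent to $\dPP\le\sqrt{2}\,\dmP$, which does not follow from Corollary~\ref{cPbound} and is in fact false: the paper's tightness example already exhibits a point $o$ with $\dPP(o)=\sqrt{3}\,\dmP(o)$, and the ratio can approach $\sqrt{5}$. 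The correct pairing is the opposite one: $\dmP\le\sqrt{2}\,\dPP$ yields $(\dPP)^{-1}(]-\infty,\alpha])\subseteq(\dmP)^{-1}(]-\infty,\sqrt{2}\,\alpha])$, while $\dPP\le\sqrt{5}\,\dmP$ yields $(\dmP)^{-1}(]-\infty,\alpha])\subseteq(\dPP)^{-1}(]-\infty,\sqrt{5}\,\alpha])$. Since in the end you only use the weaker factor $\sqrt{5}$ in both directions, the final $\ln(\sqrt{5})$-interleaving and the bound on $\dbl$ survive unchanged, but the intermediate step should be corrected before the argument is sound line by line.
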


\begin{proof}
Corollary~\ref{cPbound} implies that
$$\ln(\dmP)-\ln(\sqrt{2})\leq\ln(\dPP)\leq\ln(\sqrt{5})+\ln(\dmP).$$
The sublevel sets of $\ln(\dmP)$ and $\ln(\dPP)$ are thus $\ln(\sqrt{5})$-interleaved and Theorem~\ref{tstability} applies.
\end{proof}

Moreover, these bounds cannot be improve.

\begin{proposition}
The bounds of Corollary~\ref{cPbound} are tight.
\end{proposition}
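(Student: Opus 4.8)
The plan is to exhibit two families of finite point sets, one showing that the constant $\frac{1}{\sqrt{2}}$ in the lower bound of Corollary~\ref{cPbound} cannot be increased, and one showing that $\sqrt{5}$ in the upper bound cannot be decreased. The guiding principle is to revisit the proof of Theorem~\ref{tPbound} (applied with $\epsilon=0$, since here $P=\Supp{\mu_P}$) and read off when each of its inequalities is an equality. The lower bound degrades to $\frac{1}{\sqrt{2}}$ exactly when, for the point $p$ realizing the minimum in $\dPP(x)$, one has $\dX{p}{x}=\dmP(p)$ together with $\dmP(x)=\dmP(p)+\dX{p}{x}$, so that the $1$-Lipschitz inequality of Proposition~\ref{pLipschitz} is tight and $(a+b)^2\le 2(a^2+b^2)$ is tight. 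The upper bound approaches $\sqrt{5}$ when the nearest sample to $x$ lies at distance exactly $\dmP(x)$ from $x$ and has $\dmP(p)$ as large as the Lipschitz bound permits, namely $2\dmP(x)$. In each case I would compute both $\dmP$ and $\dPP$ directly from Lemma~\ref{Ewriting} and Definition~\ref{dDpd} by listing nearest neighbors, then verify that the intended point is genuinely the argmin of the power distance.

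For the lower bound I would work on the real line with $x=0$. Place a single point $p^*$ at position $1$ and a tight cluster of $k$ distinct points within $\eta$ of position $2$, and set $m=k/(k+1)$ so that the neighborhood size is $m|P|=k$. The $k$ nearest neighbors of $p^*$ are then exactly the cluster, each at distance $1$, so $\dmP(p^*)\to 1$ as $\eta\to 0$; the nearest neighbors of $0$ are $p^*$ together with $k-1$ cluster points, giving $\dmP(0)^2\to \frac1k\big(1+4(k-1)\big)=4-3/k$. Evaluating the power distance, the candidate $p^*$ yields $\dX{p^*}{0}^2+\dmP(p^*)^2\to 2$, whereas every cluster point $q$ has $\dmP(q)\to 0$ and $\dX{q}{0}^2=4$, so $p^*$ is the minimizer and $\dPP(0)\to\sqrt{2}$. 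Hence $\dPP(0)/\dmP(0)\to \sqrt{2}/\sqrt{4}=\tfrac{1}{\sqrt2}$ as $k\to\infty$.

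For the upper bound the witnesses must have \emph{large} distance to the measure, which fails for clustered points; I would therefore spread the neighbors of $x$ using a tree metric. Take the star graph with center $c=x$ and $k$ leaves of edge length $1$, so leaves are pairwise at distance $2$, let $P$ be the leaf set, and take $m=1$. Then $\dmP(c)^2=\frac1k\sum_i\dX{c}{\ell_i}^2=1$, while each leaf satisfies $\dmP(\ell_i)^2=\frac1k(k-1)\cdot 4=4(1-1/k)$, so $\dX{\ell_i}{c}^2+\dmP(\ell_i)^2=5-4/k$ and, by symmetry, $\dPP(c)=\sqrt{5-4/k}\to\sqrt{5}\,\dmP(c)$. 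The main obstacle is arranging each configuration so that the point realizing the extremal case of the triangle and $(a+b)^2\le2(a^2+b^2)$ inequalities is simultaneously the argmin defining $\dPP(x)$: a too-clustered sample spoils the value $\dmP(\ell_i)$ needed for the upper bound, while a nearby witness contaminates the nearest-neighbor set determining $\dmP(x)$ in the lower bound. This is also why I expect only asymptotic, not exact, equality — computing $\dmP$ at a sample always includes that sample's own mass at distance zero, which keeps $\dmP(\ell_i)$ strictly below the Lipschitz ceiling $2\dmP(c)$ and forces the $1-O(1/k)$ corrections, removed by letting $k\to\infty$.
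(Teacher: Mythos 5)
Your proposal is correct and follows essentially the same strategy as the paper, which uses a single configuration (the $2d$ points $\pm e_i$ in $(\R^d,\ell_1)$ with $k=2d$) evaluated at two points --- the origin, whose induced metric on $P\cup\{o\}$ is exactly your star example for the $\sqrt{5}$ bound, and an external collinear point playing the role of your clustered line example for the $\frac{1}{\sqrt{2}}$ bound --- with both constants likewise attained only in the limit of growing dimension. One small slip that does not affect your limits: a sample point is always among its own $k$ nearest neighbors, so in your lower-bound example $\dmP(p^*)^2=\frac{k-1}{k}+O(\eta)$ rather than $1+O(\eta)$, which still tends to $1$ as $k\to\infty$ and leaves the ratio $\dPP(0)/\dmP(0)\to\frac{1}{\sqrt{2}}$ unchanged.
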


\begin{proof}
We are looking for a worst case scenario where inequalities become equalities for at least one point.  
We consider the space $\R^d$ with the $L_1$-norm, denoted $|\cdot|$.
For any fixed dimension $d$, we build the set of $2d$ points whose coordinates have the form $(0, \cdots, 0, \pm 1, 0, \cdots, 0)$. 
These points are marked by triangles in the following drawing in dimension $2$.
\begin{center}
\begin{tikzpicture}
\draw[thin] (-4,0) -- (4,0);
\draw[thin] (0,-2) -- (0,2);
\draw[red,thick] (-1.1,-.1) -- (-.9,-.1) -- (-1,.1) -- (-1.1,-.1);
\draw[red,thick] (1.1,-.1) -- (.9,-.1) -- (1,.1) -- (1.1,-.1);
\draw[red,thick] (-.1,-1.1) -- (.1,-1.1) -- (0,-.9) -- (-.1,-1.1);
\draw[red,thick] (0,1.1) -- (-.1,.9) -- (.1,.9) -- (0,1.1);
\draw[blue,thick] (-3,0) circle (3pt);
\draw[blue,thick] (0,0) circle (3pt);
\draw (-3,.3) node {$q$};
\draw (-1,.3) node {$p_1$};
\draw (1,.3) node {$p_2$};
\draw (.3,1.3) node {$p_3$};
\draw (.3,-.7) node {$p_4$};
\draw (.3,.3) node {$o$};
\end{tikzpicture}
\end{center}
We fix $k=2d$ and we study $\dmP$ and $\dPP$ at points $q(-3,0\cdots,0)$ and $o$.
First we compute the value of $\dmP(p_i)$ for any $i$:
$$\dmP(p_i)^2=\frac{1}{2d}\sum_{j=1}^{2d}|p_j-p_i|^2=4\frac{2d-1}{2d}=4-\frac{2}{d}$$
Now we compute the value of $\dmP$ at $q$ and $o$:
$$\dmP(o)^2=\frac{1}{2d}\sum_{i=1}^{2d}|p_i-o|^2=1$$
$$\dmP(q)^2=\frac{1}{2d}\sum_{i=1}^{2d}|p_i-q|^2=\frac{1}{2d}(|p_1-q|^2+(2d-1)16)=16-\frac{6}{d}$$
All the points $p_i$ have the same value for $\dmP$.
It is easy to compute $\dPP$ at $q$ and $o$
$$\dPP(o)^2=\dmP(p_1)^2+|p_1-o|^2=5-\frac{2}{d}$$
$$\dPP(q)^2=\dmP(p_1)^2+|p_1-q|^2=8-\frac{2}{d}$$ When $d$ increases,
the ratio $\frac{\dPP(o)}{\dmP(o)}$ tends to $\sqrt{5}$, while
$\frac{\dPP(q)}{\dmP(q)}$ tends to $\frac{1}{\sqrt{2}}$.  
Thus, the bounds of Corollary~\ref{cPbound} are reached at the limit for a same data set, although at two different points.
\end{proof}

\begin{remark}\label{rkNotInteger}
If $k$ is not an integer, it suffices to do the same construction with a careful weighting of the point $p_{\lceil k\rceil}$.
The results stay exactly the same after replacing $k$ by $\lceil k\rceil$.
\end{remark}

\subsection{Euclidean case}\label{ssEc}

We consider the standard Euclidean space $\R^d$ with the $L_2$-norm.
Considering the finite point set $P$ and its empirical measure in $\R$, we are able to express the distance to the empirical measure $\dmP$ as a power distance. 
This restricted settings allows us to improve the bounds of Corollary~\ref{cPbound} as follows.
\begin{theorem}\label{tPboundEuc}
Let $P$ be a finite point set in $\R^d$ and let $m\in]0,1]$ be a mass parameter.
Then the following relation is tight.
\[
  \frac{1}{\sqrt{2}}\dmP\leq\dPP\leq\sqrt{3}\ \dmP.
\]
Moreover, it implies a relation between persistence diagrams:
\[
  \dbl(\Dgm{\dmP},\Dgm{\dPP})\leq\ln(\sqrt{3}).
\]
\end{theorem}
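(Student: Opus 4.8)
The plan is to exploit the one feature of Euclidean space that is unavailable in a general metric space: the parallel-axis (Huygens--Steiner) decomposition of the quadratic transport cost. For any $k$-tuple $S\in\Lambda_k^P$ with barycenter $\bary{S}=\frac{1}{k}\sum_{p\in S}p$ and any $x\in\R^d$, one has $\frac{1}{k}\sum_{p\in S}\norm{p-x}^2=\norm{x-\bary{S}}^2+V_S$, where $V_S:=\frac{1}{k}\sum_{p\in S}\norm{p-\bary{S}}^2$ is the variance of $S$. Taking $S=NN_k^P(x)$, Lemma~\ref{Ewriting} turns this into $\dmP(x)^2=\norm{x-\bary{S}}^2+V_S$, whence $V_S\le\dmP(x)^2$. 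The lower bound $\frac{1}{\sqrt2}\dmP\le\dPP$ needs nothing new: it is the generic metric-space estimate of Corollary~\ref{cPbound}, valid a fortiori in $\R^d$. Only the upper constant must be sharpened from $\sqrt5$ to $\sqrt3$.

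For the upper bound I would fix $x$, put $S=NN_k^P(x)$, and bound the defining minimum of $\dPP(x)$ by testing it against each vertex $p\in S$. Because $S$ is itself a legal neighbour set for $p$, Lemma~\ref{Ewriting} gives $\dmP(p)^2\le\frac{1}{k}\sum_{q\in S}\norm{q-p}^2$, so $\dPP(x)^2\le\dmP(p)^2+\norm{p-x}^2\le\frac{1}{k}\sum_{q\in S}\norm{q-p}^2+\norm{p-x}^2$ for every $p\in S$. Since the left-hand side is independent of $p$, it is bounded by the average over $p\in S$ of the right-hand side:
\[
  \dPP(x)^2\le\frac{1}{k^2}\sum_{p,q\in S}\norm{q-p}^2+\frac{1}{k}\sum_{p\in S}\norm{p-x}^2 .
\]
The second term is exactly $\dmP(x)^2$, and the elementary identity $\frac{1}{k^2}\sum_{p,q\in S}\norm{q-p}^2=2V_S$ combined with $V_S\le\dmP(x)^2$ bounds the first term by $2\dmP(x)^2$; summing gives $\dPP(x)^2\le 3\dmP(x)^2$. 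The argument is Euclidean only through the variance identity, which is exactly where the inner product enters.

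For tightness I would display two configurations. The constant $\sqrt3$ is attained exactly by letting $P$ be the $k=2d$ points $\{\pm e_i\}_{i=1}^{d}$, equidistant from their barycenter $0$, and evaluating at $x=0$: there $\dmP(x)^2$ equals the variance $1$, every $\dmP(p)^2$ equals $2$, and so $\dPP(x)^2=2+1=3$, giving ratio $\sqrt3$. The constant $\frac{1}{\sqrt2}$ is subtler and, as with the $L_1$ example already in the paper, is reached only in a limit: place $k-1$ points in an increasingly tight cluster at distance $\to a$ from a point $p_0$, and probe at the antipodal point $x$ with $\norm{x-p_0}=a$, so that $\dmP(p_0)\to a$; as $k\to\infty$ the neighbour set of $x$ becomes collinear, forcing $\dmP(x)\to 2a$ while $\dPP(x)^2\to\dmP(p_0)^2+\norm{p_0-x}^2=2a^2$, so the ratio tends to $\frac{1}{\sqrt2}$.

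Finally, the persistence estimate is immediate and mirrors Corollary~\ref{cLogbound}: the two-sided multiplicative bound becomes $\ln\dmP-\ln\sqrt2\le\ln\dPP\le\ln\sqrt3+\ln\dmP$, so the sublevel-set filtrations of $\ln\dmP$ and $\ln\dPP$ are $\ln\sqrt3$-interleaved and Theorem~\ref{tstability} yields $\dbl(\Dgm{\dmP},\Dgm{\dPP})\le\ln\sqrt3$. The main obstacle I anticipate is not the upper bound --- it falls out cleanly once the variance identity is in hand --- but the tightness of the lower constant: unlike the $\sqrt3$ side it is attained at no finite configuration, so one must handle a limiting family and, crucially, check that in it the antipodal vertex $p_0$ really realizes the minimum defining $\dPP(x)$ rather than one of the clustered points, which is what locks the ratio at $\frac{1}{\sqrt2}$.
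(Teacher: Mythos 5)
Your proof is correct and takes essentially the same route as the paper's: both arguments average the test bound $\dPP(x)^2\le\dmP(p)^2+\norm{p-x}^2$ over the $k$ nearest neighbours of $x$ and close via the Euclidean bias--variance decomposition $\dmP(x)^2=\norm{x-\bary{x}}^2+E^C(\bary{x})$, your pairwise-distance identity $\frac{1}{k^2}\sum_{p,q\in S}\norm{p-q}^2=2V_S$ being an (arguably cleaner) repackaging of the paper's inner-product expansion around the barycenter. Your cross-polytope configuration reduces to the paper's two-point example when $d=1$, and your limiting collinear family for the lower constant $\frac{1}{\sqrt{2}}$ is a correct supplement that the paper does not include in this proof (it only exhibits tightness of the upper constant $\sqrt{3}$ here, relying on the earlier $L_1$ construction for the other side).
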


We first present a way to express the distance to a measure as a power distance to the set of all barycenters of $k$-tuples of $P$.
Then we prove Theorem~\ref{tPboundEuc} before comparing it with the previous approximation, called the witnessed $k$-distance proposed in~\cite{wkdGMM}.
We improve the bounds on the witnessed $k$-distance and show that the quality of the approximation is the same for both functions.

\subsubsection{Power distance expression of $\dmP$}
For a fixed integer $k$, the barycenter associated with a point $x$ is the barycenter of its $k$-nearest neighbors.
It is also the center of the cell of the $k^{th}$-order Voronoi diagram that contains $x$.

\begin{definition}
For a point set $P$ in $\R^d$ and an integer $k\leq|P|$, the \emph{barycenter associated with $x$} is
\[
  \bary{x}=\frac{1}{k}\sum_{p\in NN_k^P(x)} p.
\]
\end{definition}

Any subset of $k$ elements from $P$ is uniquely associated with a barycenter. 
We identify the two objects and define a cell energy that describes how clustered the points are.

\begin{definition}
Let $P$ be a point set of $\R^d$ and let $k\leq|P|$.
Given $S\in\Lambda_k^P$, we fix $q=\frac{1}{k}\sum_{p\in S}p$ and define the cell energy as
\[
  E^C(q)=\frac{1}{k}\sum_{p\in S}\norm{p-q}^2.
\]
\end{definition}

Notice that the set $S$ is not necessarily the set $NN_k^P(q)$ and that $E^C(q)\geq\dmP(q)^2$.
We can now write $\dmP$ in the following form.

\begin{lemma}\label{lempiricaldef}
Let $P$ be a finite point set of $\R^d$ let $m\in]0,1]$ be a mass parameter.
For any $x\in\R^d$,
$$\dmP(x)=\sqrt{\min_{y\in \R^d}E^C(\bary{y})+\norm{\bary{y}-x}^2}=\sqrt{E^C(\bary{x})+\norm{\bary{x}-x}^2}.$$
\end{lemma}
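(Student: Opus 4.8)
The plan is to reduce the statement to the parallel axis theorem (Huygens' decomposition) and then invoke Lemma~\ref{Ewriting}. The central computation is that, for any $k$-subset $S\in\Lambda_k^P$ with barycenter $q=\frac1k\sum_{p\in S}p$ and any $x\in\R^d$, the average squared distance from $x$ to $S$ splits as $\frac1k\sum_{p\in S}\norm{p-x}^2 = E^C(q)+\norm{q-x}^2$. To see this I would write $\norm{p-x}^2=\norm{(p-q)+(q-x)}^2$, sum over $p\in S$, and observe that the cross term $2\,(q-x)\cdot\sum_{p\in S}(p-q)$ vanishes because $\sum_{p\in S}(p-q)=0$ by the definition of the barycenter; what remains is exactly $E^C(q)+\norm{q-x}^2$.

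For the second (right-hand) equality I would apply this decomposition with $S=NN_k^P(x)$, so that $q=\bary{x}$. Lemma~\ref{Ewriting} gives $\dmP(x)^2=\frac1k\sum_{p\in NN_k^P(x)}\norm{p-x}^2$, and the decomposition rewrites the right-hand side as $E^C(\bary{x})+\norm{\bary{x}-x}^2$. This immediately yields $\dmP(x)^2=E^C(\bary{x})+\norm{\bary{x}-x}^2$, establishing the equality between the first and third expressions and, in particular, showing that the middle minimum is at most this common value (take $y=x$).

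It then remains to prove the reverse inequality for the middle minimum, namely that $E^C(\bary{y})+\norm{\bary{y}-x}^2\ge\dmP(x)^2$ for every $y\in\R^d$. Here I would again apply the decomposition, this time to $S=NN_k^P(y)$ with $q=\bary{y}$, obtaining $E^C(\bary{y})+\norm{\bary{y}-x}^2=\frac1k\sum_{p\in NN_k^P(y)}\norm{p-x}^2$. Since $NN_k^P(y)$ is a genuine element of $\Lambda_k^P$, this quantity is bounded below by $\min_{S\in\Lambda_k^P}\frac1k\sum_{p\in S}\norm{p-x}^2=\dmP(x)^2$, again by Lemma~\ref{Ewriting}. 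Combining the two bounds sandwiches the middle minimum between $\dmP(x)^2$ and $\dmP(x)^2$, completing the proof.

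The only point that requires care — and the one I would flag as the main (mild) obstacle — is that the middle minimum ranges over barycenters $\bary{y}$ of \emph{nearest-neighbor} sets only, which a priori form a proper subfamily of the barycenters of all $k$-subsets. This apparent gap dissolves once the Huygens decomposition is applied termwise: each summand $E^C(\bary{y})+\norm{\bary{y}-x}^2$ equals an average of squared distances over an honest $k$-subset, so it is automatically bounded below by the unrestricted minimum of Lemma~\ref{Ewriting}, and that minimum is attained at $y=x$.
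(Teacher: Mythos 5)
Your proof is correct and follows essentially the same route as the paper: the same parallel-axis decomposition $\frac{1}{k}\sum_{p\in S}\norm{p-x}^2=E^C(q)+\norm{q-x}^2$ (cross term vanishing by the definition of the barycenter) combined with Lemma~\ref{Ewriting}. If anything, you are slightly more careful than the paper's write-up, which silently replaces the minimum over $y\in\R^d$ by a minimum over all of $\Lambda_k^P$; your sandwich argument closes that small gap explicitly.
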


\begin{proof}
Fix $S\in\Lambda_k^P$ and write $q=\frac{1}{k}\sum_{p\in S}p$. 
We adapt Lemma~\ref{Ewriting} to the Euclidean setting to get
\[
  \frac{1}{k}\sum_{p\in S}\norm{p-x}^2=E^C(q)+\norm{q-x}^2.
\]
This requires the inner product as follows.
\begin{align*}
A=\frac{1}{k}\sum_{p\in S}\norm{p-x}^2&=\frac{1}{k}\sum_{p\in S}\left(\norm{p-q}^2+\norm{q-x}^2+2\langle p-q|q-x\rangle\right)\\
&=E^C(q)+\norm{q-x}^2+2\langle q-q|q-x\rangle.
\end{align*}

Lemma~\ref{Ewriting} guarantees that
\[
  \dmP(x)=\sqrt{\min_{S\in\Lambda_k^P}\frac{1}{k}\sum_{p\in S}\norm{p-x}^2}=\sqrt{\frac{1}{k}\sum_{p\in NN_k^P(x)}\norm{p-x}^2},
\]
and thus,
\[
  \dmP(x) 
  = \sqrt{\min_{S\in\Lambda_k^P}E^C(q)+\norm{q-x}} 
  = \sqrt{E^C(\bary{x})+\norm{x-\bary{x}}^2}.
\]
\end{proof}

In Euclidean space, it is possible to compute the sublevel sets of $\dmP$ exactly.
The function is a power distance and its sublevel sets are unions of balls.
However, the complexity problem pointed out in section~\ref{sRestrict} is still valid. 
The number of balls required to describe a sublevel set is $\Omega(k^{\lceil\frac{d+1}{2}\rceil}n^{\lfloor\frac{d+1}{2}\rfloor})$~\cite{arscgCS}.

\subsubsection{Proof of Theorem~\ref{tPboundEuc}}

\begin{proof}
The first inequality is exactly the same as the one from Theorem~\ref{tPbound}.
For the second inequality, let $x$ be a point in $\R^d$, and let $p$ be a point of $P$.
Thus,
\[
  \dPP(x)^2\leq \dmP(p)^2+\norm{p-x}^2.
\]

Using Lemma~\ref{lempiricaldef}, we get, 
\[
  \dPP(x)^2\leq E^C(\bary{x})+\norm{p-\bary{x}}^2+\norm{p-x}^2,
\]
and with the inner product, this becomes
\begin{align*}
  \dPP(x)^2
    &\leq E^C(\bary{x})+\norm{x-\bary{x}}^2+2\norm{p-x}^2+2<x-\bary{x}|p-x>\\
    &=\dmP(x)^2+2\norm{p-x}^2+2<x-\bary{x}|p-x>.
\end{align*}
Note that 
\[
  2<\bary{x}-x|x-p>=\norm{\bary{x}-p}^2-\norm{\bary{x}-x}^2-\norm{x-p}^2.
\]
Then we can write the following relation. 
\[
  \dPP(x)^2\leq \dmP(x)^2+\norm{p-x}^2+\norm{\bary{x}-p}^2-\norm{x-\bary{x}}^2.
\]

This relation holds for any point of $P$. 
In particular it holds for any of the $k$ nearest neighbors of $x$. 
If we take the average over the $k$ nearest neighbors of $x$ and eliminate the negative term $-\norm{x-\bary{x}}^2$, we obtain
\[
  \dPP(x)^2 \leq \dmP(x)^2+\frac{1}{k}\sum_{p\in NN_k^P(x)} \norm{p-x}^2+ \frac{1}{k}\sum_{p\in NN_k^P(x)}\norm{\bary{x}-p}^2.
\]

Using the definitions of the cell energy and of the distance to the measure, we can write: $$\dPP(x)^2\leq \dmP(x)^2+\dmP(x)^2+E^C(\bary{x})$$
where $E^C(\bary{x})\leq\dmP(x)^2$.
We conclude that 
\[
  \dPP(x)\leq \sqrt{3}\ \dmP(x).
\]

The relation between persistence diagrams is follows exactly as in the proof of Corollary~\ref{cLogbound}.

\paragraph{Tightness\\}
The tight example is the point set $P$ of two points $a$ and $b$ on the real line with coordinates $1$ and $-1$.
\begin{center}
\begin{tikzpicture}
\draw (-3,0) -- (3,0);
\draw[red,thick] (-1.1,0)--(-.9,0) (-1,.1)--(-1,-.1);
\draw[red,thick] (1.1,0)--(.9,0) (1,.1)--(1,-.1);
\draw[blue,thick] (-.1,0)--(.1,0) (0,.1)--(0,-.1);
\draw (-1,-.4) node {-1};
\draw (0,-.4) node {0};
\draw (1,-.4) node {1};
\draw (-1,.4) node {$a$};
\draw (1,.4) node {$b$};
\draw (0,.4) node {$o$};
\end{tikzpicture}
\end{center}

Fix the mass parameter $m$ equal to $1$ so that $k=2$. 
It follows that
\[
  \dmP(a)=\dmP(b)=\sqrt{\frac{1}{2}\norm{b-a}^2}=\sqrt{2},
\]
and
\[
  \dmP(o)=\sqrt{\frac{1}{2}\norm{o-b}^2+\norm{o-a}^2}=1.
\]

We now compute the last interesting value:
$$\dPP(o)^2=\dmP(a)^2+\norm{a-o}^2=3.$$
We can thus conclude that $\dPP(o)=\sqrt{3}\ \dmP(o)$.
\end{proof}

\subsubsection{Comparison with witnessed $k$-distance}\label{ssCompWkd}

Another way of approximating $\dmP$ was proposed in~\cite{wkdGMM}.
Taking advantage of the power distance expression of $\dmP$, it reduced the set of barycenters to consider.
Selecting only the barycenter which are associated with the $k$ nearest neighbors of a point of $P$ gives a set of size at most $|P|$.

\begin{definition}
Let $P$ be a finite point set of $\R^d$ and let $m\in]0,1]$ be a mass parameter.
The \emph{witnessed $k$-distance} is defined as
$$\dPW(x)=\sqrt{\min_{p\in P}E^C(\bary{p})+\norm{\bary{p}-x}^2}.$$
\end{definition}

A bound on the quality of the approximation was given in Lemma 3.3 of~\cite{wkdGMM}.
We improve this bound and prove it to be at least as good as our approximation. 
We are not able to prove the tightness of this bound.
However, we can give a lower bound on the precision.
Using $\dPP$ will not improve the results compared to the witnessed $k$-distance but will not downgrade the quality either.
Moreover it can be used in a more general setting as we do not need the existence of the barycenters.

\begin{theorem}\label{tWbound}
Let $P$ be a finite point set of $\R^d$ and let $m\in]0,1]$ be a mass parameter.
Then,
$$\dmP\leq\dPW\leq\sqrt{6}\ \dmP.$$
\end{theorem}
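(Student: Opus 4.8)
The plan is to prove the two inequalities separately, the left one being essentially free and the right one carrying all the work.

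For the lower bound $\dmP\le\dPW$, I would simply observe that Lemma~\ref{lempiricaldef} expresses $\dmP(x)^2$ as the minimum of $E^C(\bary y)+\norm{\bary y-x}^2$ over \emph{all} barycenters $\bary y$, whereas by definition $\dPW(x)^2$ is the minimum of the very same expression over the restricted family $\{\bary p:p\in P\}$. Minimizing over a subset can only increase the value, so $\dmP(x)\le\dPW(x)$ pointwise, with no further computation.

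For the upper bound I would fix $x$, let $p$ be the nearest neighbor of $x$ in $P$, and exploit the parallel-axis identity already used inside the proof of Lemma~\ref{lempiricaldef}: for any $k$-subset $S$ with barycenter $q$ one has $\frac{1}{k}\sum_{s\in S}\norm{s-x}^2=E^C(q)+\norm{q-x}^2$. Applying this to $S=NN_k^P(p)$ gives $\dPW(x)^2\le E^C(\bary p)+\norm{\bary p-x}^2=\frac{1}{k}\sum_{q\in NN_k^P(p)}\norm{q-x}^2$, which I would then compare with $\dmP(x)^2=\frac{1}{k}\sum_{q\in NN_k^P(x)}\norm{q-x}^2$. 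The leverage comes from the fact that $NN_k^P(p)$ is precisely the $k$-subset minimizing the sum of squared distances to $p$, so it is at least as good as $NN_k^P(x)$ when distances are measured from $p$; expanding $\norm{q-x}^2=\norm{q-p}^2+2\langle q-p,\,p-x\rangle+\norm{p-x}^2$ in both sums and cancelling the common terms, I expect to reach the clean key inequality $\dPW(x)^2\le\dmP(x)^2+2\langle \bary p-\bary x,\,p-x\rangle$. A parallel, slightly lossier route expands the triangle inequality inside the sum and uses Cauchy--Schwarz on the average distance to obtain the equally usable bound $\dPW(x)\le\dmP(p)+\norm{p-x}$.

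From there the remaining task is to control the correction term using $\norm{p-x}\le\dmP(x)$ (the nearest neighbor lies closer than the quadratic mean of the $k$ nearest), the $1$-Lipschitz estimate $\dmP(p)\le\dmP(x)+\norm{p-x}$ from Proposition~\ref{pLipschitz}, and the identity $E^C(\bary p)+\norm{\bary p-p}^2=\dmP(p)^2$, which makes the cell energy and the barycenter displacement trade off against each other rather than both being charged at their maxima. The main obstacle is extracting the sharp constant $\sqrt6$ instead of the $3$ that falls straight out of the crude triangle and Cauchy--Schwarz estimates. All the slack sits in the regime where $\norm{p-x}\approx\dmP(x)$, which forces the $k$ nearest neighbors of $x$ onto a thin shell and, because $p\in NN_k^P(x)$, constrains the sign and size of $\langle \bary x-x,\,p-x\rangle$; one cannot treat $\norm{p-x}$, $\norm{\bary x-x}$, and $\dmP(p)$ as simultaneously worst-case, and it is exactly the bookkeeping of this interdependence that pins the factor at $\sqrt6$. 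I would close by remarking that this sharpens Lemma~3.3 of~\cite{wkdGMM}, and that since $\dPW\ge\dmP$ the one-sided distortion $\sqrt6$ coincides with the combined multiplicative distortion $\sqrt2\cdot\sqrt3$ of our own approximation $\dPP$ from Theorem~\ref{tPboundEuc}, so no tightness statement is required.
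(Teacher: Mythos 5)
Your lower bound is exactly the paper's argument and is fine. The upper bound, however, has a genuine gap: you set up the key inequality $\dPW(x)^2\le\dmP(x)^2+2\langle\bary{p}-\bary{x},\,p-x\rangle$ (which is correct), but you never actually control the correction term down to $5\,\dmP(x)^2$, which is what the constant $\sqrt6$ requires. Both of your routes, if one carries out the estimates you name ($\norm{p-x}\le\dmP(x)$, the Lipschitz bound $\dmP(p)\le\dmP(x)+\norm{p-x}$, and $\norm{\bary{p}-p}\le\dmP(p)$), bottom out at $\dPW\le 3\,\dmP$ --- precisely the old constant from~\cite{wkdGMM} that the theorem is supposed to improve. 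Your closing paragraph concedes this and asserts that a finer ``bookkeeping of the interdependence'' between $\norm{p-x}$, $\norm{\bary{x}-x}$, and $\dmP(p)$ pins the factor at $\sqrt6$, but no such bookkeeping is exhibited, and it is not at all clear that a direct comparison of $\dPW$ with $\dmP$ along these lines yields $\sqrt6$. As written, the inequality $\dPW\le\sqrt6\,\dmP$ is not established.

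The missing idea is to factor the bound through $\dPP$ rather than comparing $\dPW$ to $\dmP$ directly. For an arbitrary $p\in P$ one expands
\[
E^C(\bary{p})+\norm{\bary{p}-x}^2
= E^C(\bary{p})+\norm{\bary{p}-p}^2+\norm{p-x}^2+2\langle\bary{p}-p,\,p-x\rangle
\le \dmP(p)^2+\norm{\bary{p}-p}^2+2\norm{p-x}^2
\le 2\bigl(\dmP(p)^2+\norm{p-x}^2\bigr),
\]
using $2\langle a,b\rangle\le\norm{a}^2+\norm{b}^2$, the identity $\dmP(p)^2=E^C(\bary{p})+\norm{\bary{p}-p}^2$, and $\norm{\bary{p}-p}\le\dmP(p)$. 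Minimizing over $p$ gives the pointwise bound $\dPW\le\sqrt2\,\dPP$, and chaining with $\dPP\le\sqrt3\,\dmP$ from Theorem~\ref{tPboundEuc} yields $\sqrt6$. Note that the numerical coincidence you point out at the end ($\sqrt6=\sqrt2\cdot\sqrt3$) is not a coincidence at all --- it \emph{is} the proof --- and your remark that ``no tightness statement is required'' misreads the situation: the paper explicitly does not know whether $\sqrt6$ is tight and only exhibits a lower bound of $1+\sqrt2$ on the best possible constant.
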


The previous version of this theorem used a $3$ instead of the $\sqrt{6}$.

\begin{proof}
The first inequality is obtained by noticing that $\dPW$ is a minimum over a smaller set than $\dmP$.
We thus get $\dmP\leq\dPW$.

Let $x$ be a point in $\mathbb{R}^d$.
Thus for any $p\in P$,
\begin{align*}
\dm^W(x)^2&\leq E^C(\bary{p})+\norm{\bary{p}-x}^2\\
&\leq E^C(\bary{p})+\norm{\bary{p}-p}^2+\norm{p-x}^2+2\langle \bary{p}-p|p-x\rangle\\
&\leq \dm(p)^2+2\norm{p-x}^2+\norm{\bary{p}-p}^2\\
&\leq 2(\dm(p)^2+\norm{p-x}^2)\\
&\leq 2\ \dm^P(x)^2.
\end{align*}

Hence using Theorem~\ref{tPboundEuc} we can conclude that:
$$\dm^W(x)\leq\sqrt{2}\ \dm^P(x)\leq\sqrt{6}\ \dm(x).$$
\end{proof}

\paragraph{Tightness}
The tightness of the lower bound is obvious as it suffices to take $k=1$ to get an equality between $\dmP$ and $\dPW$.

However, we do not know if the upper bound is tight.
The bound $\sqrt{6}$ can not be improved more than to $1+\sqrt{2}$, whose value is greater than $\sqrt{5.82}$.

Let us introduce the following example in $\R^d$. 
We fix $k=2d$ and $0<\epsilon<\sqrt{2}$.
The point cloud $P$ consists of $4d^2$ points located at the coordinates $(0,\cdots,0,\alpha,0,\cdots,0)$ with multiplicity $1$ when $\alpha=1$ or $\alpha=-1$ and multiplicity $2d-1$ when $\alpha=1+\sqrt{2}-\epsilon$ or $\alpha=\epsilon-1-\sqrt{2}$.

The following figure is its representation in dimension $2$ where the triangles have multiplicity $1$ and the circles have multiplicity $3$.
\begin{center}
\begin{tikzpicture}
\draw (-3,0) -- (3,0) (0,-3) -- (0,3);
\draw[blue,thick] (-2.414,0) circle (3pt);
\draw[red,thick] (-1.1,-.1) -- (-.9,-.1) -- (-1,.1) -- (-1.1,-.1);
\draw[blue,thick] (2.414,0) circle (3pt);
\draw[red,thick] (1.1,-.1) -- (.9,-.1) -- (1,.1) -- (1.1,-.1);
\draw[red,thick] (-.1,-1.1) -- (.1,-1.1) -- (0,-.9) -- (-.1,-1.1);
\draw[blue,thick] (0,-2.414) circle (3pt);
\draw[red,thick] (0,1.1) -- (-.1,.9) -- (.1,.9) -- (0,1.1);
\draw[blue,thick] (0,2.414) circle (3pt);
\draw[green,thick] (0,-.1) -- (0,.1) (-.1,0) -- (.1,0);
\end{tikzpicture}
\end{center}
The points are placed such that the $k$ nearest neighbors of any triangle are itself and the $k-1$ points located at the nearest circle.
These $k$ nearest neighbors are also the ones from the circles.

Let us now take a look to the value of the functions at the origin $o$.
Each of the $k$ nearest neighbors of $o$ are distance exactly $1$ from $o$.
This allows us to conclude that: $$\dmP(o)=1.$$

The construction induced that the structure is perfectly symmetric and the set of barycenters $W$ we consider in the witnessed $k$-distance contains exactly $2d$ points.
These points are located at the coordinates $(0,\cdots,0,\alpha,0,\cdots,0)$ where $\alpha=1+\frac{2d-1}{2d}(\sqrt{2}-\epsilon)$ or the opposite. 

Let $b$ be a member of $W$.
Thus we can compute its cell energy: 
\begin{align*}
E^C(b)&=\frac{1}{2d}\left[\left(\frac{2d-1}{2d}(\sqrt{2}-\epsilon)\right)^2+(2d-1)\left(\frac{1}{2d}(\sqrt{2}-\epsilon)\right)^2\right]\\
&=\frac{2d-1}{(2d)^3}\left[(2d-1)(\sqrt{2}-\epsilon)^2+(\sqrt{2}-\epsilon)^2\right]\\
&=\frac{2d-1}{(2d)^2}(\sqrt{2}-\epsilon)^2.
\end{align*}

All of the points of $W$ are located at the same distance to $o$.
Thus, the witnessed $k$-distance at the point $o$ is
\begin{align*}
  \dPW(o)^2
    &=E^C(b)+\left(1+\frac{d-1}{2d}(\sqrt{2}-\epsilon)\right)^2\\
    &=\frac{2d-1}{(2d)^2}(\sqrt{2} - \epsilon)^2+1+\frac{2d-1}{d}(\sqrt{2} - \epsilon)+\frac{(2d-1)^2}{(2d)^2}(\sqrt{2}-\epsilon)^2\\
    &=\frac{1}{2d}+\frac{2d-1}{2d}\left(1+2(\sqrt{2}-\epsilon)+(\sqrt{2}-\epsilon)^2\right)\\
    &=\frac{1}{2d}+\frac{2d-1}{2d}(1+\sqrt{2}-\epsilon)^2.
\end{align*}

Since we can take $\epsilon$ as small as we want and make the dimension grow, this relation assures us that we cannot find a better constant than $1+\sqrt{2}$ in Theorem~\ref{tWbound}.


\section{The Weighted Rips Filtration} 
\label{sWRips}

 Given a weighted set $(P,w)$ and the associated power distance $f$ (as in~\eqref{eq:power_distance}), one can introduce a generalization of the Rips filtration that is adapted to the weighted setting as has been done in~\cite{wkdGMM}.
  This construction allows us to approximate the persistence diagram of $\dm$ in some cases. 
  Moreover, we show that it is stable with respect to perturbation of the underlying sample (Theorem~\ref{tPQsamespace}) and that it gives a guaranteed approximation to the persistence diagram of the distance to an empirical measure (Theorem~\ref{thm:approximating_dPP_with_weighted_rips}). 
Furthermore, it has an interest of its own as it is stable for close weighted sets and can therefore be used as a shape signature.

Let us consider the sublevel set $f^{-1}(]-\infty,\alpha])$.
It is the union of the balls centered on the points $p$ of $P$ with radius $r_p(\alpha)=\sqrt{\alpha^2-w_p^2}$.
By convention, we consider that the ball is empty when the radius is imaginary.
We can define the nerve of this union:

\begin{definition}
Let $(P,w)$ be a weighted set in a metric space $\X$, then the \emph{weighted \v Cech complex} $C_\alpha(P,w)$ for parameter $\alpha$ is defined as the union of simplices $\sigma$ such that $\bigcap_{p\in\sigma}B(p,r_p(\alpha))\neq0$.
\end{definition}

  However, the \v Cech complex can be difficult to compute due the problem of testing if a collection of metric balls has a common intersection.
  Instead, we define a weighted version of the Rips complex, which only requires distance computations.

  \begin{definition}
  For a weighted set $(P,w)$ in a metric space $\X$, the \emph{weighted Rips complex} $R_\alpha(P,w)$ for a parameter $\alpha$ is the maximal simplicial complex whose 1-skeleton has an edge for each pair $(p,q)$ such that $\dX{p}{q} < r_p(\alpha) + r_q(\alpha)$.
  The \emph{weighted Rips filtration} is the sequence $\{R_\alpha(P,w)\}$ for all $\alpha\ge 0$.
  \end{definition}

Remark that if all weights are equal to $0$, we are in the classical case of balls with equal radii.
We use the weighted Rips filtration to approximate the weighted \v{C}ech filtration thanks to the following interleaving.
For simplicity, the notation $(P,w)$ indicating the point set $P$ with weights $w$ is omitted in the notation.

\begin{proposition}
If $(P,w)$ is a weighted set on a metric space $\X$, then for all $\alpha\in\R$:
$$C_\alpha\subseteq R_\alpha\subseteq C_{2\alpha}.$$
\end{proposition}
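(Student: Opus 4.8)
The plan is to establish the two inclusions separately. Both reductions hinge on the fact that $R_\alpha$ is a \emph{clique complex}, so that membership of a simplex in $R_\alpha$ can be decided by looking only at its edges, whereas $C_\alpha$ is a genuine nerve whose simplices are certified by an actual common intersection point.

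For the first inclusion $C_\alpha \subseteq R_\alpha$, I would take a simplex $\sigma \in C_\alpha$ together with a witness point $x \in \bigcap_{p \in \sigma} B(p, r_p(\alpha))$, which exists by definition of the weighted \v{C}ech complex. For any two vertices $p, q \in \sigma$ the triangle inequality gives $\dX{p}{q} \le \dX{p}{x} + \dX{x}{q} < r_p(\alpha) + r_q(\alpha)$, so $(p,q)$ is an edge of $R_\alpha$. Since $R_\alpha$ is the clique complex on this edge set, the whole of $\sigma$ lies in $R_\alpha$.

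The second inclusion $R_\alpha \subseteq C_{2\alpha}$ is the crux, and the difficulty is that in an arbitrary metric space one cannot invoke a circumcenter or any Jung-type bound to manufacture the required common intersection point. The key idea is instead to use a \emph{vertex} of $\sigma$ as the witness, namely the vertex $p_0 \in \sigma$ of maximum weight. For every $p \in \sigma$ the edge condition gives $\dX{p}{p_0} < r_p(\alpha) + r_{p_0}(\alpha)$, and since $w_{p_0} \ge w_p$ implies $r_{p_0}(\alpha) \le r_p(\alpha)$, I obtain $\dX{p}{p_0} < 2\,r_p(\alpha)$.

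It then remains only to check the purely algebraic inequality $2\,r_p(\alpha) \le r_p(2\alpha)$, that is, $2\sqrt{\alpha^2 - w_p^2} \le \sqrt{4\alpha^2 - w_p^2}$; squaring reduces this to $3 w_p^2 \ge 0$, which always holds. Chaining the bounds yields $\dX{p}{p_0} < r_p(2\alpha)$ for every $p \in \sigma$, so that $p_0 \in \bigcap_{p\in\sigma} B(p, r_p(2\alpha))$ and hence $\sigma \in C_{2\alpha}$. The main obstacle is thus concentrated in this second inclusion: choosing the heaviest vertex as the common point is exactly what keeps the argument valid in any metric space, and in writing it up I would be careful about the convention for empty balls (imaginary radii) and about the strict-versus-nonstrict inequalities, so that the witness genuinely lies in the \emph{open} doubled balls and the vertex sets are handled consistently across the two complexes.
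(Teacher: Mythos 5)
Your proof is correct and follows essentially the same route as the paper: the first inclusion via the triangle inequality applied to a witness of the common intersection, and the second by choosing the maximum-weight vertex $p_0$ as the common point and checking $\dX{p}{p_0} < 2r_p(\alpha) \le r_p(2\alpha)$. The paper's argument is identical, down to the choice of the heaviest vertex and the algebraic bound $2r_p(\alpha)=\sqrt{(2\alpha)^2-4w_p^2}\le r_p(2\alpha)$.
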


\begin{proof}
Let $\alpha$ be a real number.
The first inclusion is obtained by the definition of the weighted Rips complex that gives $C_\alpha\subseteq R_\alpha$.

For the other inclusion, let $\sigma$ be a simplex of $R_\alpha$.
We fix $p_0$ to be the point of $\sigma$ with the greatest weight. 
This implies especially that for any $p\in P$, $r_p(\alpha)\geq r_{p_0}(\alpha)$.

Since $\sigma\in R_\alpha$, we get that, for all $p$ and $q$ in $P$, we have $\dX{p}{q}\leq r_{p}(\alpha)+r_{q}(\alpha)$ with both radius real.
To prove that $\sigma\in C_{2\alpha}$ we need to prove that:
$$\bigcap_{p\in\sigma}B(p,r_{p}(2\alpha))\neq 0.$$

It will suffice to prove that $p_0$ belongs to this intersection.
For each $p\in\sigma$:
$$\dX{p}{p_0}\leq r_{p}(\alpha)+r_{p_0}(\alpha)\leq 2\ r_{p}(\alpha)= \sqrt{(2\alpha)^2-4w_{p}^2}\leq r_{p}(2\alpha).$$
\end{proof}

\paragraph{Stability\\}
The persistence diagram of a weighted Rips filtration $\{R_\alpha(P,w)\}$ is stable under small perturbations of the set $P$. 
It can thus be used in applications like signatures in the spirit of~\cite{ghssspCCGMO}.

Speaking of the persistence diagram of a weighted Rips filtration requires that the filtration is q-tame.
This is always the case when the set $P$ is compact as shown in the following proposition.

\begin{proposition}\label{pCompQTame}
Let $P$ be a subset of a metric space $\X$ and let $w:\X\to\R$ be a function.
If $P$ is compact, then $\{R_\alpha(P,w)\}_{\alpha\in\R}$ is q-tame.
\end{proposition}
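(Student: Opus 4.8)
The plan is to prove q-tameness by establishing that the weighted Rips filtration is, up to a bounded multiplicative reparametrization, interleaved with a \emph{finite} complex at each scale, and then to invoke the general machinery for q-tameness already available in the paper. The key structural fact I would exploit is that $P$ compact implies $P$ is \emph{totally bounded}: for every $\eta>0$ there is a finite $\eta$-net $P_\eta\subseteq P$ with $d_H(P,P_\eta)\le\eta$. The strategy is therefore to approximate the (potentially infinite) weighted Rips complex on $P$ by the weighted Rips complex on the finite net $P_\eta$, control the error using the ball-inclusion machinery of Lemma~\ref{lPowerBalls} and Proposition~\ref{pPowerBalls}, and conclude that each map $R_\alpha\to R_\beta$ factors (at the homology level) through a finite-dimensional vector space, which is exactly the definition of q-tameness.

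First I would recall that by Proposition~\ref{pPowerBalls}, if $w$ is Lipschitz then the power distances $f_P$ and $f_{P_\eta}$ satisfy $\norm{f_P-f_{P_\eta}}_\infty\le\sqrt{1+t^2}\,\eta$, and correspondingly Lemma~\ref{lPowerBalls} gives that the balls centered at points of $P$ are contained in slightly larger balls centered at nearby net points. From these ball inclusions I would derive simplicial maps
\[
  R_\alpha(P,w)\longrightarrow R_{\alpha'}(P_\eta,w)\longrightarrow R_{\alpha''}(P,w),
\]
for appropriate parameters $\alpha\le\alpha'\le\alpha''$ depending on $\eta$, realizing an interleaving between the two filtrations. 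The induced maps on homology then form an interleaving of persistence modules in the sense made precise earlier in the paper. The crucial point is that $R_{\alpha'}(P_\eta,w)$ is a finite simplicial complex, so its homology is finite-dimensional in every degree.

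Given such a factorization, the map $u_\alpha^{\alpha''}$ on homology of the $P$-filtration factors through the finite-dimensional homology of $R_{\alpha'}(P_\eta,w)$, hence has finite rank. Since $\eta$ can be taken arbitrarily small, for any pair $\alpha<\beta$ I can choose $\eta$ small enough that $\alpha''\le\beta$, whence $u_\alpha^\beta$ itself factors through $u_\alpha^{\alpha''}$ (by functoriality) and therefore has finite rank. This is precisely q-tameness, completing the argument.

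The main obstacle I anticipate is the parameter bookkeeping when the weight function $w$ is merely continuous rather than Lipschitz: the clean ball-inclusion statement of Lemma~\ref{lPowerBalls} requires a Lipschitz constant $t$, and on a compact set $P$ a continuous $w$ is bounded but need not be Lipschitz. I would handle this either by noting that $w$ restricted to the relevant bounded region is uniformly continuous (so the radii $r_p(\alpha)$ vary uniformly, which suffices to build the approximating simplicial maps at the cost of messier constants), or by observing that for q-tameness one only needs \emph{some} bounded-parameter factorization through a finite complex, so a soft uniform-continuity estimate replacing the explicit $\sqrt{1+t^2}$ factor is enough. A secondary technical care point is verifying that the ball inclusions actually yield \emph{simplicial} maps on the clique complexes; here I would lean on Lemma~\ref{lem:contiguity_and_cliques}, checking the tetrahedron condition so that the vertex maps extend to the full weighted Rips complexes.
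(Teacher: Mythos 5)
Your proposal follows essentially the same route as the paper: compactness yields a finite net $Q\subseteq P$ with $d_H(P,Q)$ small, the ball-inclusion and contiguity machinery (packaged in the paper as Lemma~\ref{lHausInter}) gives an interleaving of $\{R_\alpha(P,w)\}$ with the finite filtration $\{R_\alpha(Q,w)\}$, and q-tameness follows. The only differences are minor: the paper concludes by citing a structure theorem (a module interleaved with a locally finite one is q-tame) where you argue finite rank directly by factoring $u_\alpha^\beta$ through the finite-dimensional homology of the net complex, and your concern about non-Lipschitz $w$ is well placed, since the paper's own proof silently invokes the $t$-Lipschitz hypothesis of Lemma~\ref{lHausInter} even though the statement of the proposition omits it.
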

This will be deduced from the following technical lemma.

\begin{lemma}\label{lHausInter}
Let $P$, $Q$ be two subsets of a metric space $\X$ and let $w:\X\to\R$ be a $t$-Lipschitz function.
Then $\Hom(\{R_\alpha(P,w)\})$ and $\Hom(\{R_\alpha(Q,w)\})$ are $\epsilon$-interleaved for $\epsilon=(1+t)d_H(P,Q)$.
\end{lemma}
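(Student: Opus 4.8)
The plan is to construct explicit simplicial maps between the weighted Rips complexes of $P$ and $Q$ at appropriately shifted parameters and to verify, via Lemma~\ref{lem:contiguity_and_cliques}, that the natural back-and-forth maps compose to the correct shift maps up to contiguity. Since weighted Rips complexes are clique complexes, this lemma is the right tool: it reduces the interleaving to a purely edge-level (in fact tetrahedron-level) condition. Set $\delta = d_H(P,Q)$ and $\epsilon = (1+t)\delta$. First I would fix, for each point $p\in P$, a nearest point $\phi(p)\in Q$ with $\dX{p}{\phi(p)}\le\delta$, and symmetrically a map $\psi:Q\to P$ with $\dX{q}{\psi(q)}\le\delta$. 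These serve as the vertex maps inducing the two families of simplicial maps $\Phi:\{R_\alpha(P,w)\}\to\{R_{\alpha+\epsilon}(Q,w)\}$ and $\Psi:\{R_\alpha(Q,w)\}\to\{R_{\alpha+\epsilon}(P,w)\}$.

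The heart of the argument is showing $\phi$ carries edges of $R_\alpha(P,w)$ to edges of $R_{\alpha+\epsilon}(Q,w)$. Given an edge $(p,p')$ with $\dX{p}{p'} < r_p(\alpha)+r_{p'}(\alpha)$, I would apply Lemma~\ref{lPowerBalls} with the perturbation bound $\dX{p}{\phi(p)}\le\delta$ and Lipschitz constant $t$ to get $r_p(\alpha)+\delta \le r_{\phi(p)}(\alpha+\sqrt{1+t^2}\,\delta)$, and likewise for $p'$. Combining these with the triangle inequality,
\[
  \dX{\phi(p)}{\phi(p')} \le \dX{p}{p'} + 2\delta < r_p(\alpha)+r_{p'}(\alpha)+2\delta \le r_{\phi(p)}(\alpha')+r_{\phi(p')}(\alpha'),
\]
where $\alpha'=\alpha+\sqrt{1+t^2}\,\delta$. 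Here I must reconcile the $\sqrt{1+t^2}$ factor from Lemma~\ref{lPowerBalls} with the claimed $(1+t)$ in the statement: since $\sqrt{1+t^2}\le 1+t$ for $t\ge 0$, enlarging the shift to $\epsilon=(1+t)\delta$ only helps the inclusions (the radii $r_q(\alpha+\epsilon)$ are monotone in the parameter), so all edge conditions remain satisfied at the larger shift. This gives that $\phi$ and $\psi$ induce genuine simplicial maps on the clique complexes.

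It remains to check the interleaving identities at the homology level, namely that $\Psi\Phi$ and $\Phi\Psi$ equal the $2\epsilon$-shift maps. I would verify this through contiguity: the composite vertex map $\psi\circ\phi:P\to P$ satisfies $\dX{p}{\psi(\phi(p))}\le 2\delta$, and I claim $\psi\circ\phi$ and the identity inclusion $R_\alpha(P,w)\hookrightarrow R_{\alpha+2\epsilon}(P,w)$ are contiguous. To apply Lemma~\ref{lem:contiguity_and_cliques}, I take any edge $(p,p')\in R_\alpha(P,w)$ and must show the tetrahedron $\{p,\psi\phi(p),p',\psi\phi(p')\}$ lies in $R_{\alpha+2\epsilon}(P,w)$; this follows because every one of its six vertex-pairs is within distance at most $\dX{p}{p'}+4\delta$ of each other, and repeated application of Lemma~\ref{lPowerBalls} (with the monotone enlargement of radii at the shift $2\epsilon$) bounds each such distance by the sum of the relevant radii at parameter $\alpha+2\epsilon$. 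Since contiguous simplicial maps induce equal maps on homology, $\Hom(\psi\phi)$ equals the shift map, and symmetrically for $\phi\psi$. The main obstacle I anticipate is the bookkeeping in this last step: guaranteeing the tetrahedron condition requires applying the radius-growth lemma to each pair in the tetrahedron while tracking which point carries the larger weight, and checking that the single shift $2\epsilon$ absorbs all the accumulated perturbations simultaneously rather than pairwise. Once the tetrahedron membership is established uniformly, the conclusion that the two homology persistence modules are $\epsilon$-interleaved follows directly by Theorem~\ref{tModStability}'s framework via the commuting interleaving diagrams.
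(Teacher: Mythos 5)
Your proposal is correct and follows essentially the same route as the paper: nearest-point vertex maps realizing the Hausdorff distance, Lemma~\ref{lPowerBalls} to show they are simplicial at the shifted parameter, Lemma~\ref{lem:contiguity_and_cliques} to show the round-trip compositions are contiguous with the inclusions, and the persistence-module interleaving theorem to conclude. Your explicit reconciliation of the $\sqrt{1+t^2}\,\delta$ shift coming from Lemma~\ref{lPowerBalls} with the stated $(1+t)\delta$ via $\sqrt{1+t^2}\le 1+t$ and monotonicity of the radii is exactly the (implicit) step the paper also relies on.
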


\begin{proof}
We need to show that the there exists $\epsilon$-homomorphisms ${\pi_P}_*$ and ${\pi_Q}_*$ such that ${\pi_P}_*{\pi_Q}_*=1_{\Hom(R_\alpha(P,w))}^{2\epsilon}$ and ${\pi_Q}_*{\pi_P}_*=1_{\Hom(R_\alpha(Q,w))}^{2\epsilon}$.

To do so, we need three steps.
First, we build simplicial maps $R_\alpha(P,w)\to R_{\alpha+\epsilon}(Q,w)$ and $R_\alpha(Q,w)\to R_{\alpha+\epsilon}(P,w)$ for every $\alpha$.
Then, we show that these simplicial maps induce $\epsilon$-homomorphisms.
Finally, we show that the simplicial maps are contiguous and thus the two persistence modules are $\epsilon$-interleaved.

The simplicial maps $i_\alpha^\beta:R_\alpha(P,w)\to R_\beta(P,w)$ and $j_\alpha^\beta:R_\alpha(Q,w)\to R_\alpha(Q,w)$ for $\alpha<\beta$ are induced by the canonical inclusion. 
We consider two maps $\pi_P:Q\to P$ and $\pi_Q:P\to Q$ such that $\dX{p}{\pi_Q(p)}\leq d_H(P,Q)$ and $\dX{q}{\pi_P(q)}\leq d_H(P,Q)$ for any $p\in P$ and $q\in Q$.
By definition of the Hausdorff distances, such maps always exist.
Let us show that these maps induce simplicial maps.

Let us consider the function $\pi_P$ and let us fix $\alpha>0$.
Let $(q',q'')$ be an edge of $R_\alpha(Q,w)$. 
It means that $B(q',r_{q'}(\alpha))\cap B(q'',r_{q''}(\alpha))\neq\emptyset$.
Lemma~\ref{lPowerBalls} implies that $B(q,r_q(\alpha))\subset B(\pi_P(q),r_{\pi_P(q)}(\alpha+(1+t)d_H(P,Q)))$ for any $q\in Q$.
Thus, $(\pi_P(q'),\pi_(q''))$ is an edge of $R_{\alpha+\epsilon}(P,w)$ because:
$$B(\pi_P(q'),r_{q'}(\alpha+\epsilon))\cap B(\pi_P(q''),r_{q''}(\alpha+\epsilon))\supset B(q',r_{q'}(\alpha))\cap B(q'',r_{q''}(\alpha))\neq\emptyset.$$

As $R_{\alpha}(P,w)$ is a clique complex for any $\alpha$, this is sufficient to prove that $\pi_P$ induce a family of simplicial maps $\{{\pi_P}_\alpha^{\alpha+\epsilon}\}$.
The roles of $P$ and $Q$ are symmetric.
Therefore, the result holds for $\pi_Q$ as well.

Furthermore $\pi_P$ induces an $\epsilon$-homomorphism ${\pi_P}_*$ at the homology level. 
For any $\alpha<\beta$, $i_{\alpha+\epsilon}^{\beta+\epsilon}\circ{\pi_P}_{\alpha}^{\alpha+\epsilon}$=${\pi_P}_{\beta}^{\beta+\epsilon}\circ j_\alpha^\beta$ because the maps $i_{\alpha+\epsilon}^{\beta+\epsilon}$ and $j_\alpha^\epsilon$ are induced by the the canonical inclusion while the two others simplicial maps are induced by the same map $\pi_P:Q\to P$.
Hence the two compositions are contiguous and thus guarantees that ${\pi_P}_*$ is an $\epsilon$-homomorphism. 
Again, this results can be applied to $\pi_Q$ to get an $\epsilon$-homomorphism ${\pi_Q}_*$.

To prove that ${\pi_P}_*{\pi_Q}_*=1_{\Hom(R_\alpha(P,w))}^{2\epsilon}$, we prove that ${\pi_P}_\alpha^{\alpha+\epsilon}\circ{\pi_Q}_{\alpha-\epsilon}^\alpha$ and $i_{\alpha-\epsilon}^{\alpha+\epsilon}$ are contiguous for any $\alpha$.

Let us fix $\alpha$ and let $(p,p')$ be an edge of $R_{\alpha-\epsilon}(P,w)$.
By definition, $B(p,r_p(\alpha-\epsilon))\cap B(p',r_{p'}(\alpha-\epsilon))\neq\emptyset$.
Moreover, using Lemma~\ref{lPowerBalls} we get:
$$B(p,r_p(\alpha-\epsilon))\subset B(\pi_Q(p),r_{\pi_Q(p)}(\alpha))\subset B(\pi_P\circ\pi_Q(p),r_{\pi_P\circ\pi_Q(p)}(\alpha+\epsilon)).$$
The same holds for $p'$ and thus:
\begin{align*}
B(p,r_p(\alpha+\epsilon))\cap B(\pi_P\circ\pi_Q(p),r_{\pi_P\circ\pi_Q(p)}(\alpha+\epsilon))&\\
\cap B(p',r_{p'}(\alpha+\epsilon))\cap B(\pi_P\circ\pi_Q(p'),r_{\pi_P\circ\pi_Q(p')}(\alpha+\epsilon)) & \neq\emptyset.
\end{align*}
Thus the tetrahedron $\{i_{\alpha-\epsilon}^{\alpha+\epsilon}(p),i_{\alpha-\epsilon}^{\alpha+\epsilon}(p'),{\pi_P}_\alpha^{\alpha+\epsilon}\circ{\pi_Q}_{\alpha-\epsilon}^\alpha(p),{\pi_P}_\alpha^{\alpha+\epsilon}\circ{\pi_Q}_{\alpha-\epsilon}^\alpha(p')\}$ is in $C_{\alpha+\epsilon}(P,w)\subset R_{\alpha+\epsilon}(P,w)$.
Lemma~\ref{lem:contiguity_and_cliques} guarantees that ${\pi_P}_\alpha^{\alpha+\epsilon}\circ{\pi_Q}_{\alpha-\epsilon}^\alpha$ and $i_{\alpha-\epsilon}^{\alpha+\epsilon}$ are contiguous.

From before, $\{{\pi_P}_{\alpha}^{\alpha+\epsilon}\circ{\pi_Q}_{\alpha-\epsilon}^{\alpha}\}$ induces the $2\epsilon$-homomorphism ${\pi_P}_*{\pi_Q}_*$.
By definition, $\{i_{\alpha-\epsilon}^{\alpha+\epsilon}\}$ induces $1_{\Hom(R_\alpha(P,w))}^{2\epsilon}$.
By contiguity of the simplicial maps, we have equality of the $2\epsilon$-homomorphisms and therefore ${\pi_P}_*{\pi_Q}_*=1_{\Hom(R_\alpha(P,w))}^{2\epsilon}$.

By symmetry of the roles of $P$ and $Q$, $\{R_\alpha(P,w)\}$ and $\{R_\alpha(Q,w)\}$ are $\epsilon$-interleaved.
\end{proof}

\begin{proof}[Proof of Proposition~\ref{pCompQTame}]
We will show that, for any $\epsilon>0$, one can build a finite persistence module which is $\epsilon$-interleaved with the persistence module of $\{R_\alpha(P,w)\}$.
A finite persistence module is a fortiori locally finite and Theorem 4.19 of~\cite{sspmCDGO} induces the q-tameness of $\{R_\alpha(P,w)\}$.

Let us fix $\epsilon>0$.
$P$ is compact.
As a consequence, there exists a finite point set $Q$ of $P$ such that $d_H(P,Q)\leq\frac{\epsilon}{1+t}$.
The persistence module of $\{R_\alpha(Q,w)\}$ is finite and therefore locally finite.
Moreover, using Lemma~\ref{lHausInter}, $\{R_\alpha(Q,w)\}$ and $\{R_\alpha(P,w)\}$ are $\epsilon$-interleaved.
Hence $\{R_\alpha(P,\alpha)\}$ is q-tame using Theorem 4.19 of~\cite{sspmCDGO} induces the q-tameness of $\{R_\alpha(P,w)\}$.
\end{proof}

Notice that the simplicial maps $\pi_P$ and $\pi_Q$ are not necessarily uniquely defined.
However, if $\pi_P$ and $\pi_P'$ are two maps verifying the construction property, then the induced simplicial maps are contiguous and therefore the induced homomorphisms are identical.

The persistence diagrams of weighted Rips filtrations are related by the following:

\begin{theorem}\label{tPQsamespace}
Let $P$ and $Q$ be two compact subsets of a metric space $\X$.
Let $w:\X\to\R$ be a $t$-Lipschitz function.
Then,
\[
  d_B(\Dgm{\{R_\alpha(P,w)\}},\Dgm{\{R_\alpha(Q,w)\}})\leq(1+t)d_H(P,Q).
\]
\end{theorem}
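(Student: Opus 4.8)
The plan is to obtain Theorem~\ref{tPQsamespace} as an immediate corollary of the interleaving established in Lemma~\ref{lHausInter} together with the algebraic stability result of Theorem~\ref{tModStability}. The key observation is that all the hard work has already been done: Lemma~\ref{lHausInter} asserts that the persistence modules $\Hom(\{R_\alpha(P,w)\})$ and $\Hom(\{R_\alpha(Q,w)\})$ are $\epsilon$-interleaved for $\epsilon=(1+t)d_H(P,Q)$, and Theorem~\ref{tModStability} converts an $\epsilon$-interleaving of q-tame persistence modules into the bound $d_B(\Dgm{\U},\Dgm{\V})\le\epsilon$. So the proof is essentially a matter of checking the hypotheses of these two results and then composing them.

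First I would verify q-tameness, which is required to even speak of the persistence diagrams and to invoke Theorem~\ref{tModStability}. Since $P$ and $Q$ are assumed compact, Proposition~\ref{pCompQTame} guarantees that both $\{R_\alpha(P,w)\}$ and $\{R_\alpha(Q,w)\}$ are q-tame, so the diagrams $\Dgm{\{R_\alpha(P,w)\}}$ and $\Dgm{\{R_\alpha(Q,w)\}}$ are well-defined. Next I would apply Lemma~\ref{lHausInter} to conclude that the two homology persistence modules are $\epsilon$-interleaved with $\epsilon=(1+t)d_H(P,Q)$. Here one should note that the lemma is stated for arbitrary subsets and requires $w$ to be $t$-Lipschitz, which is exactly our hypothesis; the compactness assumed in the theorem is only needed to secure q-tameness and hence the existence of the diagrams.

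Finally, feeding this $\epsilon$-interleaving into Theorem~\ref{tModStability} yields
\[
  d_B(\Dgm{\{R_\alpha(P,w)\}},\Dgm{\{R_\alpha(Q,w)\}})\le\epsilon=(1+t)d_H(P,Q),
\]
which is precisely the claimed bound. I do not anticipate a genuine obstacle here, since the substantive content resides in Lemma~\ref{lHausInter} (the construction of the simplicial maps via $\pi_P$, $\pi_Q$ and the contiguity argument) and in the q-tameness Proposition~\ref{pCompQTame}. If there is any subtlety to flag, it is making sure the value of $\epsilon$ matches between the interleaving lemma and the stability theorem, and confirming that Theorem~\ref{tModStability} is being applied at the level of the homology persistence modules (not the filtrations directly), which is legitimate because $\epsilon$-interleaving of filtrations induces $\epsilon$-interleaving of the associated homology modules. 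Thus the proof reduces to a clean two-line citation of the preceding results.
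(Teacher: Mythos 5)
Your proposal matches the paper's own proof exactly: it invokes Proposition~\ref{pCompQTame} for q-tameness of the two filtrations (using compactness of $P$ and $Q$), then Lemma~\ref{lHausInter} for the $(1+t)d_H(P,Q)$-interleaving of the homology persistence modules, and finally Theorem~\ref{tModStability} to convert the interleaving into the bottleneck bound. The argument is correct and takes essentially the same route as the paper.
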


\begin{proof}
$P$ and $Q$ are two compact sets and thus the diagrams are well-defined thanks to Proposition~\ref{pCompQTame} that guarantees the q-tameness of the filtrations. 
Lemma~\ref{lHausInter} implies that $\Hom(\{R_\alpha(P,w)\})$ and $\Hom(\{R_\alpha(Q,w)\})$ are $(1+t)d_H(P,Q)$-interleaved.
The relation between the persistence diagrams is then obtained by applying Theorem~\ref{tModStability}.
\end{proof}

\begin{remark}
When $P$ and $Q$ are two compact metric spaces, Theorem~\ref{tPQsamespace} can be extended using the notion of correspondence as in~\cite{psgcCDO}.
Notice that the correspondence has to induce bounded distortion on the weights as well as on the distances.
\end{remark}

\paragraph{Approximation\\}
To use the weighted Rips filtration to approximate the persistence diagram of the distance to a measure, we need to restrict the class of spaces considered.
If the intersection of any finite number of balls in $\X$ is either contractible or empty, $\X$ is said to have the \emph{good cover property}.
Then the \v{C}ech complex has the same homology as the union of balls, of which it is the nerve, by the Nerve Theorem~\cite{atH}.
We can also compute the persistence diagram thanks to the Persistent Nerve Lemma~\cite{tpbresCO}.
We obtain an approximation of $\Dgm{\dmP}$ using the weighted Rips filtration.

\begin{theorem}\label{thm:approximating_dPP_with_weighted_rips}
Let $\X$ be a triangulable metric space with the good cover property and let $P$ be a finite point set of $\X$, then on a logarithmic scale:
$$\dbl(\Dgm{\dmP},\Dgm{\{R_\alpha(P,\dmP)\}})\leq \ln(2\sqrt{5}).$$
\end{theorem}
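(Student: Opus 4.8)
The plan is to insert two intermediate persistence diagrams between $\Dgm{\dmP}$ and $\Dgm{\{R_\alpha(P,\dmP)\}}$ and to control each leg of the resulting chain by a quantity that sums to $\ln(2\sqrt5)$. The natural intermediates are the diagram of the power distance $\dPP$ and the diagram of the weighted \v{C}ech filtration $\{C_\alpha(P,\dmP)\}$. Since the log-bottleneck distance is a genuine metric on persistence diagrams, the triangle inequality will then close the argument.

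Before chaining, I would verify that every diagram in sight is well defined. The diagram $\Dgm{\dmP}$ exists by Proposition~\ref{pQtame}, as $\X$ is triangulable. The weight function $\dmP$ is $1$-Lipschitz by Proposition~\ref{pLipschitz}, so Proposition~\ref{pCompQTame}, applied with the finite (hence compact) set $P$ and $t=1$, shows that $\{R_\alpha(P,\dmP)\}$ is q-tame. Finally, since $\dPP$ is the power distance associated with $(P,\dmP)$, each sublevel set $\dPP^{-1}(]-\infty,\alpha])$ is a finite union of closed balls, so each \v{C}ech complex $C_\alpha(P,\dmP)$ is a finite complex and its filtration is q-tame; Theorem~\ref{tstability} then applies on every leg.

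The three legs are: (i) $\dbl(\Dgm{\dmP},\Dgm{\dPP})\le\ln(\sqrt5)$, which is exactly Corollary~\ref{cLogbound}; (ii) $\Dgm{\dPP}=\Dgm{\{C_\alpha(P,\dmP)\}}$, since the sublevel sets of $\dPP$ are precisely the unions of balls of which the weighted \v{C}ech complexes are the nerves, so the good cover property together with the Persistent Nerve Lemma~\cite{tpbresCO} identifies the two filtrations at the homology level; and (iii) $\dbl(\Dgm{\{C_\alpha\}},\Dgm{\{R_\alpha\}})\le\ln2$, obtained from the inclusions $C_\alpha\subseteq R_\alpha\subseteq C_{2\alpha}$ proved earlier in this section. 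For leg (iii), the key observation is that the change of coordinates $\alpha\mapsto\ln\alpha$ turns the multiplicative factor $2$ into an additive shift $\ln2$: writing $\hat C_s=C_{e^s}$ and $\hat R_s=R_{e^s}$ one gets $\hat C_s\subseteq\hat R_s\subseteq\hat C_{s+\ln2}$, which makes $\{\hat C_s\}$ and $\{\hat R_s\}$ $\ln2$-interleaved, so Theorem~\ref{tstability} yields the bound $\ln2$. Adding the three contributions gives $\ln(\sqrt5)+0+\ln2=\ln(2\sqrt5)$, as claimed.

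The step I expect to require the most care is leg (ii). One must confirm that the good cover hypothesis legitimately licenses the Persistent Nerve Lemma for the whole one-parameter family simultaneously, not merely at a single scale, and one must reconcile the open balls used in the definition of $C_\alpha$ with the closed balls forming the sublevel sets of $\dPP$; this open/closed discrepancy does not alter the persistence diagram but should be acknowledged. The remaining legs are direct invocations of results already in hand, and the only arithmetic is the log-scale conversion in leg (iii).
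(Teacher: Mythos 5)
Your proposal is correct and follows essentially the same route as the paper: the paper's proof likewise chains $\Dgm{\dmP}\to\Dgm{\dPP}$ via Corollary~\ref{cLogbound} (contributing $\ln\sqrt5$) and then $\Dgm{\dPP}\to\Dgm{\{R_\alpha\}}$ via the Persistent Nerve Lemma together with the \v{C}ech--Rips interleaving $C_\alpha\subseteq R_\alpha\subseteq C_{2\alpha}$ (contributing $\ln 2$), concluding by the triangle inequality. Your added care about q-tameness of each intermediate filtration and the open/closed ball discrepancy in the nerve step only makes explicit points the paper leaves implicit.
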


\begin{proof}
Given that $\X$ is triangulable, we know that the sublevel sets filtration of $\dmP$ is $q$-tame by Proposition~\ref{pQtame}.
The persistence diagram $\Dgm{\dmP}$ is thus well-defined.
Recall that $\dmP$ is a $1$-Lipschitz function (see Proposition~\ref{pLipschitz}).
$P$ is a compact subset of $\X$ and therefore $\Dgm{R_\alpha(P,\dmP)}$ is well-defined according to Proposition~\ref{pCompQTame}.

We approximate $\dmP$ with $\dPP$. 
The result of Theorem~\ref{tPbound} gives us a $\sqrt{5}$ multiplicative interleaving.
For any $\alpha\in\R$,
\[
  \dmP(]-\infty,\alpha])\subset\dPP(]-\infty,\sqrt{2}\alpha])\subset\dmP(]-\infty,\sqrt{10}\ \dPP]).
\]
So, Theorem~\ref{tstability} implies
\[
  \dbl(\Dgm{\dmP},\Dgm{\dPP})\leq\ln(\sqrt{5}).
\]

By the Persistent Nerve Lemma, the sublevel sets filtration of $\dPP$ (a union of balls of increasing radii) has the same persistent homology as nerve filtration.
Thus, we can use weighted Rips filtration to approximate the persistence diagram:
$$\dbl(\Dgm{\dPP},\Dgm{\{R_\alpha(P,\dmP)\}}\leq \ln(2).$$

The triangle inequality for the bottleneck distance gives the desired inequality.
\end{proof}

\section{The Sparse Weighted Rips filtration} 
\label{sec:sparse_rips}

The weighted Rips filtration presented in the previous section has the desired approximation guarantees, but like the Rips filtration for unweighted points, it usually grows too large to be computed in full.
In~\cite{lsavrfS}, it was shown how to construct a filtration $\{S_\alpha\}$ called the \emph{sparse Rips filtration} that gives a provably good approximation to the Rips filtration and has size linear in the number of points for metrics with constant doubling dimension (see Section~\ref{sec:sparse_rips_revisited} for the construction).
Specifically, for a user-defined parameter $\e$, the log-bottleneck distance between the persistence diagrams of the Sparse Rips filtration and the Rips filtration is at most $\e$.
The goal of this section is to extend that result to weighted Rips filtrations.

The sparse Rips filtration cannot be used directly here, since the power distance does not induce a metric.
Indeed, even the case of setting all weights to some large constant yields a persistence diagram that is far from the persistence diagram of the Rips filtration of any metric space.
This follows because individual points in a Rips filtration appear at time zero, but this is not the case in the weighted Rips filtration.

Even if one were to construct a metric whose Rips filtration exactly matched that of the weighted Rips filtration, there are simple examples where that metric would necessarily have very high doubling dimension, making previous methods unsuitable.
For example, consider a set of points in the unit interval $[0,1]$, with a constant weight function that assigns a weight of $1$ to every point.
Although the points lie in a $1$-dimensional space, the weighted distance function has doubling dimension $\log n$ because all of the points are within a weighted distance of $2$, whereas every pair has weighted distance at least $1$.
So the doubling constant would be $n$ and the doubling dimension would be $\log n$ despite that the input was $1$-dimensional.
Thus, any construction that depends on low doubling dimension will blowup when confronted with such weighted examples.



For the rest of this section, we fix a weighted point set $P$ in a metric space $\X$, where the weight function $w:\X\to\R$ is $t$-Lipschitz, for some constant $t$.
To simplify notation, we let $R_\alpha$ denote the weighted Rips complex $R_\alpha(P,w)$.

The \emph{sparse weighted Rips filtration}, $\{T_\alpha\}$, is defined as
\[
  T_\alpha = S_\alpha \cap R_\alpha.
\]
The (unweighted) sparse Rips filtration $\{S_\alpha\}$ captures the underlying metric space and the weighted Rips filtration $\{R_\alpha\}$ captures the structure of the sublevel sets of the power distance function.
Computing $\{T_\alpha\}$ can be done efficiently by first computing $\{S_\alpha\}$ and then reordering the simplices according to the birth time in $\{R_\alpha\}$.
This is equivalent to filtering the complex $S_\infty$.
Note that the sparsification depends only on the metric, and not on the weights.
Thus, the same sparse Rips complex can be used as the underlying complex for different weight functions.
We also simplify the construction of $\{S_\alpha\}$ by using a furthest point sampling instead of the more complex structure of net tree.



The technical challenge is to relate the persistence diagram of this new filtration to the persistence diagram of the weighted Rips filtration as in the following theorem.

\begin{theorem}\label{thm:sparse_weighted_rips}
  Let $(P,w)$, be a finite, weighted subset of a metric space $\X$ with $t$-Lipschitz weights.
  Let $\e<1$ be a fixed constant used in the construction of the sparse weighted Rips filtration $\{T_\alpha\}$.
  Then,
  \[
    \dbl(\Dgm{\{T_\alpha\}}, \Dgm{\{R_\alpha\}}) \le \ln \left(\frac{1 + \sqrt{1+t^2}\;\e}{1-\e}\right).
  \]
\end{theorem}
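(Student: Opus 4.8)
The plan is to exhibit a multiplicative interleaving between the filtrations $\{T_\alpha\}$ and $\{R_\alpha\}$ with ratio $c = \frac{1+\sqrt{1+t^2}\;\e}{1-\e}$ at the homology level, and then convert this into the claimed log-bottleneck bound by applying Theorem~\ref{tModStability} after the change of coordinates $(\alpha,\beta)\mapsto(\ln\alpha,\ln\beta)$. One direction of the interleaving is free: since $T_\alpha = S_\alpha\cap R_\alpha\subseteq R_\alpha$, the canonical inclusion provides a simplicial map $T_\alpha\hookrightarrow R_\alpha$ for every $\alpha$. The substance of the proof is the reverse map, carrying $R_\alpha$ into $T_{c\alpha}$, together with the verification that the two families of maps compose, in both orders, to the appropriate shift map up to contiguity.

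For the reverse direction I would use the projection $\pi$ supplied by the sparse Rips construction of Section~\ref{sec:sparse_rips_revisited}, which sends each point to a nearby surviving representative. At scale $\alpha$ this representative lies within distance on the order of $\e\alpha$ of the original point, and the defining property of $\{S_\alpha\}$ guarantees that $\pi$ induces a simplicial map into $S_{\alpha/(1-\e)}$. The new ingredient is that the image simplices must also survive in the weighted Rips complex at the target scale, so that they land in the intersection $T$ rather than merely in $S$. This is precisely what Lemma~\ref{lPowerBalls} controls: displacing a point by $\delta$ forces one to inflate the weighted scale parameter by $\sqrt{1+t^2}\;\delta$ in order to preserve the ball inclusions $B(p,r_p(\alpha))\subseteq B(\pi(p),r_{\pi(p)}(\alpha+\sqrt{1+t^2}\;\delta))$. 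Combining the metric distortion factor $(1-\e)^{-1}$ coming from $\{S_\alpha\}$ with the weighted displacement $\delta\le\e\alpha$ yields exactly the target scale $c\alpha$, so that $\pi$ maps $R_\alpha$ into $S_{c\alpha}\cap R_{c\alpha}=T_{c\alpha}$.

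To finish, I would verify the interleaving identities at the level of induced homomorphisms. Using Lemma~\ref{lem:contiguity_and_cliques}, I would show that the composite of $\pi$ with the inclusion (and the inclusion with $\pi$) is contiguous to the canonical shift map by exhibiting, for each edge, the required tetrahedron inside a weighted \v{C}ech complex $C_{c\alpha}\subseteq R_{c\alpha}$ whose vertices also survive in the sparse structure; this is the same tetrahedron argument carried out in Lemma~\ref{lHausInter}, now using the $\pi$ of the sparse construction in place of a Hausdorff projection. Since contiguous simplicial maps induce equal maps on $\Hom$, the two families assemble into a genuine $\ln c$-interleaving of persistence modules on the logarithmic scale, and Theorem~\ref{tModStability} delivers the stated bound.

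The main obstacle is the bookkeeping that forces the image to land in the \emph{intersection} $T_{c\alpha}=S_{c\alpha}\cap R_{c\alpha}$ rather than in $S$ or $R$ separately: the sparse construction and the weighted radii each impose an independent scale penalty, and one must confirm that both are absorbed simultaneously into the single factor $c$. In particular, I expect the delicate point to be ensuring that $\pi$'s displacement bound is measured against the \emph{same} scale $\alpha$ that governs both the sparse deletion times and the weighted radius functions $r_p$, so that Lemma~\ref{lPowerBalls} and the sparse simplicial-map property can be invoked at a common target scale. Once these scales are reconciled, the contiguity and stability steps are routine adaptations of the arguments already developed in Section~\ref{sWRips}.
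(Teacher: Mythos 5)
Your overall strategy coincides with the paper's: the inclusion $T_\alpha\hookrightarrow R_\alpha$ in one direction, the net projection $\proj_{\alpha/(1-\e)}$ in the other, Lemma~\ref{lPowerBalls} to absorb the displacement (at most $\e\alpha/(1-\e)$) into the weighted radius at the target scale $\kappa\alpha$ with $\kappa=\frac{1+\sqrt{1+t^2}\,\e}{1-\e}$, and a conversion of the multiplicative interleaving into the stated bound via Theorem~\ref{tModStability} on the logarithmic scale. Your identification of the map $R_\alpha\to T_{\kappa\alpha}$ and of how the two scale penalties combine into the single factor $\kappa$ matches the paper's Lemmas~\ref{lem:proj_and_R} and~\ref{lem:proj_is_simplicial_R_to_T}.

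The gap is in the contiguity step. You propose to verify the interleaving identities by the tetrahedron argument of Lemma~\ref{lHausInter}, i.e., by showing that for each edge $(p,q)$ the tetrahedron $\{p,q,\proj_{\alpha/(1-\e)}(p),\proj_{\alpha/(1-\e)}(q)\}$ lies in the target complex. On the weighted-Rips side this does work (it is essentially Lemma~\ref{lem:proj_and_R}), but on the sparse side it fails: the edge $(p,\proj_{\alpha/(1-\e)}(p))$ need not belong to $S_{\kappa\alpha}$ at all. A vertex $p$ appears in $Q_\rho$ only for $\rho\le\frac{\ir_p}{\e(1-\e)}$, so when $\ir_p$ is small the only sparse complexes containing $p$ live at scales far below $\alpha$, whereas $\dX{p}{\proj_{\alpha/(1-\e)}(p)}$ can be as large as $\frac{\e\alpha}{1-\e}$, which exceeds the edge-length threshold $2\rho$ at those scales; the required tetrahedron is simply absent. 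This is why the paper never proves direct contiguity between the identity $\proj_0$ and $\proj_{\alpha/(1-\e)}$ as maps into $T_{\kappa\alpha}$. Instead it interpolates through the chain $\phi_i=\proj_{\ir_i/(1-\e)}$ of projections at successive net scales, proves that only \emph{consecutive} projections are contiguous (Lemmas~\ref{lem:proj_and_Q} and~\ref{lem:proj_and_T}, whose hypotheses confine $\alpha,\beta$ to a common interval $[\frac{\ir_{i+1}}{\e(1-\e)},\frac{\ir_i}{\e(1-\e)}]$), and then concludes $\proj_{0\star}=\proj_{\frac{\alpha}{1-\e}\star}$ by chaining equalities of induced homomorphisms (Lemma~\ref{lem:interleaving}). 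You correctly flag ``reconciling the scales'' as the delicate point, but the resolution is not a single tetrahedron checked at a common scale; it is this telescoping through the net hierarchy, and without it the commutativity of the interleaving diagram at the homology level is not established.
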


Since these filtrations are not interleaved, the only hope is to find an interleaving of the persistence modules, which requires finding suitable homomorphisms between the homology groups of the different filtrations.
After detailing the construction of the sparse Rips filtration with the furthest point sampling, the rest of this section proves Theorem~\ref{thm:sparse_weighted_rips}.
\subsection{Sparse Rips complexes}\label{sec:sparse_rips_revisited}

Let $(p_1,\ldots, p_n)$ be a greedy permutation of the points $P$ in a finite metric space $\X$.
That is, $p_i = \argmax_{p\in P\setminus P_{i-1}} \dX{p}{P_{i-1}}$, where $P_{i-1} = \{p_1,\ldots,p_{i-1}\}$ is the $(i-1)^\mathrm{st}$ prefix.
  We define the \emph{insertion radius} $\ir_{p_i}$ of point $p_i$ to be
  \[
    \ir_{p_i} = \dX{p_i}{P_{i-1}}.
  \]
  
To avoid excessive superscripts, we write $\ir_i$ in place of $\ir_{p_i}$ when we know the index of $p_i$.
We adopt the convention that $\ir_1 = \infty$ and $\ir_{n+1} = 0$.
The greedy permutation has the nice property that each prefix $P_i$ is a \emph{$\ir_i$-net} in the sense that 
  \begin{enumerate}
    \item     $\dX{p}{P_i}\leq\ir_i$ for all $p\in P$.
    \item     $\dX{p}{q} \ge \ir_i$  for all $p,q\in P_i$.
  \end{enumerate}
  We extend these nets to an arbitrary parameter $\gamma$ as
  \begin{align*}
    \net_{\gamma} &= \{p\in P \mid \ir_p > \gamma\}.\\
    \cl\net_{\gamma} &= \{p\in P \mid \ir_p \ge \gamma\}.   
  \end{align*}
  Note that for all $p\in P$, $\dX{p}{\net_{\gamma}}\le \gamma$ and $\dX{p}{\cl\net_{\gamma}}< \gamma$.
  
One way to get a sparse Rips-like filtration is to take a union of Rips complexes on the nets $N_\gamma$.
However, this can add significant noise to the persistence diagram compared to the Rips filtrations.
This noise can be diminished by a careful perturbation of the distance.
For a point $p$, the perturbation varies with the scale and is defined as follows:
  \[
    s_p(\alpha) = 
      \begin{cases}
        0 & \text{if } \alpha \le \frac{\ir_p}{\e}\\
        \alpha - \frac{\ir_p}{\e} & \text{if } \frac{\ir_p}{\e} < \alpha < \frac{\ir_p}{\e(1-\e)}\\
        \e \alpha & \text{if }  \frac{\ir_p}{\e(1-\e)} \le \alpha 
      \end{cases}
  \]
  
\begin{center}
\begin{tikzpicture}[scale=.5]
\draw (0,0) -- (10,0);
\draw (0,0) -- (0,6);
\draw[red,thick] (0,0) -- (3,0) -- (6,3) -- (10,5);
\draw[red,dashed] (0,0) -- (6,3);
\node[below] at (0,0) {$0$};
\node[left] at (0,0) {$0$};
\node[below] at (3,0) {$\frac{\lambda_p}{\epsilon}$};
\node[below] at (6,0) {$\frac{\lambda_p}{\epsilon(1-\epsilon)}$};
\node[above] at (9.5,5) {$s_p=\epsilon\alpha$};
\end{tikzpicture}
\end{center}
  
  Note that $s_p$ is $1$-Lipschitz.
  The resulting perturbed distance is defined as
  \[
    \ff_{\alpha}(p,q) = \dX{p}{q} + s_p(\alpha) + s_q(\alpha).
  \]
  
  For any fixed $p$ and $q$, the Lipschitz property of $s_p$ and $s_q$ implies that for all $\alpha\le \beta$:
  \[
    \ff_\alpha(p,q) \le \ff_\beta(p,q) + 2(\beta-\alpha).
  \]

\begin{definition}
Given the nets $N_\gamma$ and the distance function $f_\alpha$, we define the \emph{sparse Rips complex} at scale $\alpha$ as
  \[
    Q_\alpha = \{\sigma\subset\cl\net_{\e(1-\e)\alpha}\mid\forall p,q\in\sigma,\ \ff_\alpha(p,q)\leq 2\alpha\}.
  \]
  \end{definition}
  
  On its own, the sequence of complexes $\{Q_\alpha\}$ does not form a filtration.
  However, we can build a natural filtration by defining 
  
\begin{definition}
The \emph{sparse Rips filtration} is defined as:
  \[
    S_\beta = \bigcup_{\alpha\le \beta} Q_\alpha.
  \]
\end{definition}

\subsection{Projection onto Nets} 
\label{sub:projection_onto_nets}
To relate sparse Rips complexes with Rips complexes, we build a collection of projections of the points onto the nets.
  \[
    \proj_\alpha(p) = 
      \begin{cases}
        p & \text{if } p\in \net_{\e(1-\e)\alpha}\\
        \argmin_{q\in \net_{\e\alpha}} \dX{p}{q} & \text{otherwise}
      \end{cases}
  \]
  For any scale $\alpha$, the projection $\proj_\alpha$ maps the points of $P$ to the net $\net_{\e(1-\e)\alpha}$.
  Note that $\proj_\alpha$ is a retraction onto $\net_{\e(1-\e)\alpha}$.
  
  The following are the four main lemmas we will use with respect to the perturbed distance functions and projections.
  The projections will be used extensively to induce maps between simplicial complexes.
  
First, we prove that edges do not disappear as the filtration grows.  
  \begin{lemma}\label{lem:filtration}
    If $\ff_\alpha(p,q) \le 2 \alpha \le 2 \beta$ then $\ff_\beta(p,q)\le 2 \beta$.
  \end{lemma}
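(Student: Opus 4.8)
The plan is to show that the slack $2\alpha - \ff_\alpha(p,q)$ never decreases as the scale $\alpha$ grows, so that once the edge condition $\ff_\alpha(p,q)\le 2\alpha$ holds at scale $\alpha$ it continues to hold at every larger scale $\beta$. Since $\dX{p}{q}$ is independent of the scale, everything reduces to controlling how quickly the two perturbation terms $s_p(\alpha)$ and $s_q(\alpha)$ can increase with $\alpha$.

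First I would record a growth bound on $\ff$. Because $s_p$ and $s_q$ are each $1$-Lipschitz (as noted immediately after their definition), for any $\alpha\le\beta$ we have $s_p(\beta)-s_p(\alpha)\le\beta-\alpha$ and likewise for $s_q$. Adding these two inequalities together with the scale-independent term $\dX{p}{q}$ gives
\[
  \ff_\beta(p,q)\le\ff_\alpha(p,q)+2(\beta-\alpha).
\]
This is the growth estimate displayed just before the lemma; what matters for the present argument is precisely this upper bound on the increase of $\ff$.

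Then I would chain this estimate with the hypothesis. Assuming $\ff_\alpha(p,q)\le 2\alpha$ and $\alpha\le\beta$, the bound above yields
\[
  \ff_\beta(p,q)\le\ff_\alpha(p,q)+2(\beta-\alpha)\le 2\alpha+2(\beta-\alpha)=2\beta,
\]
which is exactly the conclusion.

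There is essentially no hard step here: the entire argument rests on the fact that $\ff_\alpha$ grows in $\alpha$ at rate at most $2$, a direct consequence of the two $1$-Lipschitz perturbations. The only point requiring care is the \emph{direction} of the growth inequality, since monotonicity of $\ff$ alone is not enough; one genuinely needs the upper bound $\ff_\beta(p,q)\le\ff_\alpha(p,q)+2(\beta-\alpha)$. For that reason I would derive this inequality explicitly from the Lipschitz property of $s_p$ and $s_q$, so that both the constant $2$ and the orientation of the bound are pinned down.
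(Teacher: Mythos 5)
Your proposal is correct and follows exactly the paper's own argument: the $1$-Lipschitz property of $s_p$ and $s_q$ gives $\ff_\beta(p,q)\le\ff_\alpha(p,q)+2(\beta-\alpha)$, which is then chained with the hypothesis $\ff_\alpha(p,q)\le 2\alpha$ to obtain $\ff_\beta(p,q)\le 2\beta$. Nothing to add.
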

  \begin{proof}
    The proof follows from the definitions $\ff_\alpha$ and $\ff_\beta$, the Lipschitz property of the perturbations $s_p$ and $s_q$, and the hypothesis as follows.
    \begin{align*}
      \ff_\beta(p,q) 
        &=   \dX{p}{q} + s_p(\beta) + s_q(\beta)\\ 
        &\le \dX{p}{q} + s_p(\alpha) + (\beta-\alpha) + s_q(\alpha) + (\beta-\alpha)\\ 
        &=   \ff_\alpha(p,q) + 2(\beta-\alpha)\\ 
        &\le 2 \alpha + 2(\beta-\alpha)\\ 
        &=   2 \beta.\qedhere
    \end{align*}
  \end{proof}
 
Next, we show that the distance between a point and its projection is at most the change in the perturbed distance.
  \begin{lemma}\label{lem:distance_to_proj} 
    For all $q\in P$, $\dX{q}{\proj_\alpha(q)} \le s_q(\alpha) - s_{\proj_\alpha(q)}(\alpha)$, and in particular, $\dX{q}{\proj_\alpha(q)}\le \e \alpha$.
  \end{lemma}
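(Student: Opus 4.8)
The plan is to split on the two cases in the definition of $\proj_\alpha$, and in each case simply read off which branch of the piecewise function $s$ governs the two points $q$ and $\proj_\alpha(q)$. The inequality $\dX{q}{\proj_\alpha(q)}\le \e\alpha$ will fall out as a byproduct, so I would prove the sharper statement $\dX{q}{\proj_\alpha(q)} \le s_q(\alpha) - s_{\proj_\alpha(q)}(\alpha)$ and note that the right-hand side is always at most $\e\alpha$. The easy case is $q \in \net_{\e(1-\e)\alpha}$: here $\proj_\alpha(q) = q$, so both sides of the first inequality are $0$ and the bound by $\e\alpha \ge 0$ is immediate. This case is purely formal.

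The substantive case is $q \notin \net_{\e(1-\e)\alpha}$, i.e. $\ir_q \le \e(1-\e)\alpha$. First I would record the covering bound: since $\proj_\alpha(q)$ is by definition a nearest point of $q$ in $\net_{\e\alpha}$, the property $\dX{q}{\net_{\e\alpha}}\le \e\alpha$ stated just after the definition of the nets yields $\dX{q}{\proj_\alpha(q)} \le \e\alpha$, which already gives the ``in particular'' clause. To upgrade this to the first inequality I would pin down the two relevant values of $s$. From $\ir_q \le \e(1-\e)\alpha$ we get $\alpha \ge \frac{\ir_q}{\e(1-\e)}$, so $q$ lies in the third branch and $s_q(\alpha) = \e\alpha$. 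On the other hand, writing $p := \proj_\alpha(q)$, membership $p\in\net_{\e\alpha}$ means $\ir_p > \e\alpha$, i.e. $\alpha < \frac{\ir_p}{\e}$, which places $p$ in the first branch and forces $s_{\proj_\alpha(q)}(\alpha) = 0$. Combining these, $s_q(\alpha) - s_{\proj_\alpha(q)}(\alpha) = \e\alpha \ge \dX{q}{\proj_\alpha(q)}$, which is exactly the claim.

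The only real obstacle is aligning the threshold conditions in the three-regime definition of $s$ with the membership conditions defining the projection and the nets. Once one observes that $p \in \net_{\e\alpha}$ kills $s_p(\alpha)$ while $q \notin \net_{\e(1-\e)\alpha}$ saturates $s_q(\alpha)$ at its maximal value $\e\alpha$, the inequality drops out with no further computation. There are no topological or convergence subtleties; it is a careful bookkeeping argument over the piecewise definition of the perturbation.
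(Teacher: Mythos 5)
Your proof is correct and follows essentially the same route as the paper's: the trivial case $q\in \net_{\e(1-\e)\alpha}$, then in the remaining case the covering bound $\dX{q}{\proj_\alpha(q)}\le\e\alpha$ together with the observations that $\ir_q\le\e(1-\e)\alpha$ forces $s_q(\alpha)=\e\alpha$ while $\proj_\alpha(q)\in\net_{\e\alpha}$ forces $s_{\proj_\alpha(q)}(\alpha)=0$. No gaps.
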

  \begin{proof}
    Both statements are trivial if $q\in \net_{\e(1-\e)\alpha}$, because that would imply that $\proj_\alpha(q) = q$.
    So, we may assume that $\proj_\alpha(q)$ is the nearest point to $q$ in $\net_{\e \alpha}$
    It follows that 
    \[
      \dX{q}{\proj_\alpha(q)} \le \e \alpha.
    \]
    Moreover, $\ir_q \le \e(1-\e)\alpha$, and thus $s_q(\alpha) = \e \alpha$.    
    Also, since $\proj_\alpha(q)\in \net_{\e \alpha}$, it must be that $\ir_{\proj_\alpha(q)} > \e \alpha$ and so $s_{\proj_\alpha(q)} = 0$.
    Combining these statements, we get
    \[
      \dX{q}{\proj_\alpha} \le \e \alpha = s_q(\alpha) - s_{\proj_\alpha(q)}(\alpha).
    \]
  \end{proof}

Now, we prove that replacing a point with its projection does not increase the perturbed distance.
  \begin{lemma}\label{lem:distances_dont_grow}
    For all $p,q\in P$ and all $\alpha\ge 0$, $\ff_\alpha(p,\proj_\alpha(q)) \le \ff_\alpha(p,q)$.\qedhere
  \end{lemma}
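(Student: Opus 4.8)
The plan is to reduce the claim to the triangle inequality combined with Lemma~\ref{lem:distance_to_proj}. Expanding both sides using the definition $\ff_\alpha(p,q) = \dX{p}{q} + s_p(\alpha) + s_q(\alpha)$, the term $s_p(\alpha)$ appears unchanged on both sides and cancels, so it suffices to establish
\[
  \dX{p}{\proj_\alpha(q)} + s_{\proj_\alpha(q)}(\alpha) \le \dX{p}{q} + s_q(\alpha).
\]

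First I would apply the triangle inequality $\dX{p}{\proj_\alpha(q)} \le \dX{p}{q} + \dX{q}{\proj_\alpha(q)}$, which reduces the goal to showing that the extra cost of the detour through $\proj_\alpha(q)$ is absorbed by the perturbation terms, i.e.
\[
  \dX{q}{\proj_\alpha(q)} + s_{\proj_\alpha(q)}(\alpha) \le s_q(\alpha).
\]
This last inequality is exactly the first (sharper) statement of Lemma~\ref{lem:distance_to_proj}, namely $\dX{q}{\proj_\alpha(q)} \le s_q(\alpha) - s_{\proj_\alpha(q)}(\alpha)$, after rearrangement. Chaining these steps yields the claim, and since $p$ plays a purely passive role the argument requires no case analysis on whether $q\in\net_{\e(1-\e)\alpha}$.

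There is essentially no obstacle here; the only point requiring care is to invoke the stronger form of Lemma~\ref{lem:distance_to_proj} rather than the weaker bound $\dX{q}{\proj_\alpha(q)}\le\e\alpha$. It is precisely the difference $s_q(\alpha) - s_{\proj_\alpha(q)}(\alpha)$ that must cover the detour length $\dX{q}{\proj_\alpha(q)}$ introduced when routing through the projection, so the coarser estimate would not suffice.
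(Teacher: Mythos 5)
Your argument is correct and is essentially identical to the paper's proof: both expand $\ff_\alpha$ by definition, apply the triangle inequality to $\dX{p}{\proj_\alpha(q)}$, and then absorb the detour term via the sharper bound $\dX{q}{\proj_\alpha(q)} \le s_q(\alpha) - s_{\proj_\alpha(q)}(\alpha)$ from Lemma~\ref{lem:distance_to_proj}. Your remark that the coarser bound $\e\alpha$ would not suffice is a correct and worthwhile observation, but there is no substantive difference in approach.
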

  \begin{proof}
    The statement follows from the definition of $\ff_\alpha$, the triangle inequality, and Lemma~\ref{lem:distance_to_proj} as follows.
    \begin{align*}
      \ff_\alpha(p, \proj_\alpha(q)) 
        &= \dX{p}{\proj_\alpha(q)} + s_p(\alpha) + s_{\proj_\alpha(q)}(\alpha)\\ 
        &\le \dX{p}{q} + \dX{q}{\proj_\alpha(q)} + s_p(\alpha) + s_{\proj_\alpha(q)}(\alpha)\\ 
        &\le \dX{p}{q} + s_p(\alpha) + s_q(\alpha)\\ 
        &= \ff_\alpha(p,q).\qedhere
    \end{align*}
  \end{proof}

We want to use the sparse Rips filtration in the weighted setting. 
Recall that for a weighted point $p$, $r_p(\alpha)=\sqrt{\alpha^2-w_p^2}$.

We consider the effect on the ``edge lengths'' when projecting the endpoints of an edge to nearby points.
This is the situation when we project the metric onto an $\e$-net. 
The following lemma guarantees that a ball centered at the image of the projection quickly covers the ball centerd at the original point.
It is a similar approach to the Proposition~\ref{pPowerBalls}.
  
\begin{center}
\begin{tikzpicture}
\draw[fill] (0,0) circle (1pt);
\draw[fill] (.5,0) circle (1pt);
\node[below] at (0,0) {$p$};
\draw (.5,-.5) node {$q$};

\draw (0,0) circle (55pt);
\draw[dashed] (0,0) circle (70pt);
\draw[red] (.5,0) circle (85pt);
\draw[->] (0,0) -- node[left] {$r_p(\alpha)$} (.2,1.9) ;
\draw[->,dashed] (.2,1.9) -- node[right] {$\epsilon\alpha$} (.25,2.4);
\draw[->,dashed] (0,0) -- node[above] {$\epsilon\alpha$} (.5,0);
\draw[->,red] (.5,0) -- node[above] {$r_q((1+\epsilon+t\epsilon)\alpha)$} (3.5,.3);
\end{tikzpicture}
\end{center}



\subsection{Sometimes the projections induce contiguous simplicial maps} 
\label{sub:sometimes_the_projections_induce_contiguous_simplicial_maps}

  In this section, we look at the maps between simplicial complexes that are induced by the projection functions $\proj_\alpha$.
  We are most interested in the case when a pair of projections $\proj_\alpha$ and $\proj_\beta$ induce contiguous simplicial maps between sparse Rips complexes (Lemma~\ref{lem:proj_and_Q}) or weighted Rips complexes (Lemma~\ref{lem:proj_and_R}).
  First, we need a couple lemmas that describe the effect of different projections on the endpoints of an edge in sparse or weighted Rips complexes.

  \begin{lemma}\label{lem:proj_and_Q_for_one_edge}
    Let $\alpha$, $\beta$, $\gamma$, and $i$ be such that $\frac{\ir_{i+1}}{\e(1-\e)}\le \alpha \le \beta \le \gamma \le \frac{\ir_i}{\e(1-\e)}$.
    If an edge $(p,q)$ is in $Q_\rho$ for some $\rho\le \gamma$ then the edge $(\proj_\alpha(p), \proj_\beta(q)) \in Q_\gamma$.
  \end{lemma}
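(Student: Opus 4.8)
The plan is to verify directly the two conditions that define membership in $Q_\gamma$: that both endpoints $\proj_\alpha(p)$ and $\proj_\beta(q)$ lie in $\cl\net_{\e(1-\e)\gamma}$, and that $\ff_\gamma(\proj_\alpha(p),\proj_\beta(q))\le 2\gamma$. For the first (net membership), I would use that $\proj_\alpha$ maps into $\net_{\e(1-\e)\alpha}$, so $\ir_{\proj_\alpha(p)}>\e(1-\e)\alpha\ge \ir_{i+1}$, the last step being the hypothesis $\alpha\ge \ir_{i+1}/(\e(1-\e))$. Since the insertion radii are non-increasing along the greedy permutation, any point with insertion radius exceeding $\ir_{i+1}$ must be one of $p_1,\dots,p_i$, hence has insertion radius at least $\ir_i\ge \e(1-\e)\gamma$, using the hypothesis $\gamma\le \ir_i/(\e(1-\e))$. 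Thus $\proj_\alpha(p)\in\cl\net_{\e(1-\e)\gamma}$, and the identical argument (with $\beta\ge\alpha$) handles $\proj_\beta(q)$.

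For the distance condition, I would start from $(p,q)\in Q_\rho$, which gives $\ff_\rho(p,q)=\dX{p}{q}+s_p(\rho)+s_q(\rho)\le 2\rho$, hence $\dX{p}{q}\le 2\rho-s_p(\rho)-s_q(\rho)$, and also gives $p,q\in\cl\net_{\e(1-\e)\rho}$. Writing $p'=\proj_\alpha(p)$ and $q'=\proj_\beta(q)$, the triangle inequality yields
\[
  \ff_\gamma(p',q')\le \dX{p}{q}+\bigl(\dX{p}{p'}+s_{p'}(\gamma)\bigr)+\bigl(\dX{q}{q'}+s_{q'}(\gamma)\bigr).
\]
It therefore suffices to show that each bracketed term, say $A_p:=\dX{p}{p'}+s_{p'}(\gamma)$, satisfies $A_p-s_p(\rho)\le\gamma-\rho$ (and symmetrically for $q$); summing then gives $\ff_\gamma(p',q')\le 2\rho+2(\gamma-\rho)=2\gamma$.

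To bound $A_p-s_p(\rho)$ I would split on whether $p$ is projected. If $p\in\net_{\e(1-\e)\alpha}$ then $p'=p$ and $A_p=s_p(\gamma)$, so $A_p-s_p(\rho)=s_p(\gamma)-s_p(\rho)\le\gamma-\rho$ by the $1$-Lipschitz property of $s_p$ and $\rho\le\gamma$. If instead $p$ is projected, then Lemma~\ref{lem:distance_to_proj} gives the refined bound $\dX{p}{p'}\le s_p(\alpha)-s_{p'}(\alpha)$, so that $A_p\le s_p(\alpha)+\bigl(s_{p'}(\gamma)-s_{p'}(\alpha)\bigr)\le s_p(\alpha)+(\gamma-\alpha)$, using that $s_{p'}$ is $1$-Lipschitz. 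The crucial point is a correlation forced by the hypotheses: $p$ being projected means $\ir_p\le\e(1-\e)\alpha$, while $p\in\cl\net_{\e(1-\e)\rho}$ means $\ir_p\ge\e(1-\e)\rho$, so $\rho\le\alpha$; this lets me bound $s_p(\alpha)-s_p(\rho)\le\alpha-\rho$ and conclude $A_p-s_p(\rho)\le(\alpha-\rho)+(\gamma-\alpha)=\gamma-\rho$. The same argument for $q$ uses $\rho\le\beta$, which follows from $q\in\cl\net_{\e(1-\e)\rho}$ together with $q$ being projected at scale $\beta$.

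I expect the main obstacle to be resisting the crude estimate. It is tempting to bound $A_p$ using only the covering inequality $\dX{p}{p'}\le\e\alpha$ together with $s_{p'}(\gamma)\le\e\gamma$ (the latter holding because $\ir_{p'}\ge\ir_i\ge\e(1-\e)\gamma$); but these two worst cases, taken independently, overshoot $2\gamma$ by up to $\e\alpha$, making the lemma look false. The real content is that the \emph{refined} form of Lemma~\ref{lem:distance_to_proj} telescopes against the $1$-Lipschitz growth of the perturbations, and that a point needing projection can only enter an edge at a scale $\rho\le\alpha$, which supplies exactly the slack $2(\gamma-\rho)$ that the assembly requires. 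Getting this interlocking right, rather than bounding the distance and perturbation contributions separately, is the crux of the argument.
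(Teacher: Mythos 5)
Your proof is correct, and it uses the same basic ingredients as the paper (Lemma~\ref{lem:distance_to_proj}, the $1$-Lipschitz property of the perturbations $s_p$, the triangle inequality, and the fact that vertices of $Q_\rho$ lie in $\cl\net_{\e(1-\e)\rho}$), but it assembles them differently. The paper splits into three cases according to where $\rho$ falls relative to $\alpha$ and $\beta$, and in each case chains the bound through Lemma~\ref{lem:filtration} and Lemma~\ref{lem:distances_dont_grow}, paying $2(\beta-\alpha)$ or $2(\gamma-\beta)$ at each shift of scale. You instead do a single per-endpoint estimate, isolating the quantity $A_p-s_p(\rho)$ and showing it is at most $\gamma-\rho$ via the observation that a point which is actually moved by $\proj_\alpha$ must satisfy $\ir_p\le\e(1-\e)\alpha$, which together with $p\in\cl\net_{\e(1-\e)\rho}$ forces $\rho\le\alpha$. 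This is exactly the contrapositive of the fact the paper uses implicitly in its Cases~1 and~2 (namely that $\alpha\le\rho$ makes $\proj_\alpha(p)=p$), and your formulation is slightly more robust: the paper's version silently needs $\ir_p>\e(1-\e)\alpha$ where only $\ir_p\ge\e(1-\e)\alpha$ is guaranteed at the boundary $\rho=\alpha$, whereas your inequality $\rho\le\alpha$ is all you need and holds with non-strict comparisons throughout. You also spell out the vertex-membership check that the paper dismisses as ``easy to check,'' and your closing remark correctly identifies why the refined form of Lemma~\ref{lem:distance_to_proj} (with the $-s_{\proj_\alpha(q)}(\alpha)$ term) is essential rather than the crude covering bound $\dX{q}{\proj_\alpha(q)}\le\e\alpha$. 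Net effect: same theorem, same constants, a cleaner uniform bookkeeping in place of the paper's case analysis.
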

  \begin{proof}
    First, it is easy to check that the conditions on $\alpha$, $\beta$, $\gamma$, and $i$ imply that $\proj_\alpha(p)$ and $\proj_\beta(q) $ are in $\cl\net_{\e(1-\e)\gamma}$, which is the vertex set of $Q_\gamma$.
    So, it will suffice to prove that $\ff_\gamma(\proj_\alpha(p), \proj_\beta(q))\le 2\gamma$.
    Next we consider three cases depending on the value of $\rho$ in relation to $\alpha$ and $\beta$.
  
    \noindent\textbf{Case 1:} If $\alpha,\beta\le \rho$ then $\proj_\alpha(p) = p$ and $\proj_\beta(q) = q$.
    So, using Lemma~\ref{lem:filtration} and the assumption $\rho\le \gamma$, we see that $\ff_\gamma(\proj_\alpha(p), \proj_\beta(q)) = \ff_\gamma(p,q) \le 2\gamma$.
  
    \noindent\textbf{Case 2:} If $\alpha\le \rho < \beta$ then $\proj_\alpha(p) = p$ and Lemma~\ref{lem:filtration} implies that $\ff_\beta(p,q)\le 2\beta$.
    \begin{align*}
      \ff_\gamma(\proj_\alpha(p), \proj_\beta(q))
        &= \ff_\gamma(p, \proj_\beta(q)) \\ 
        &\le \ff_\beta(p, \proj_\beta(q)) + 2(\gamma-\beta)\\ 
        &\le \ff_\beta(p, q) + 2(\gamma-\beta)\\ 
        &\le 2\gamma. 
    \end{align*}
  
    \noindent\textbf{Case 3:} If $\rho < \alpha, \beta$ then Lemma~\ref{lem:filtration} implies that $\ff_\alpha(p,q)\le 2\alpha$.
    \begin{align*}
      \ff_\gamma(\proj_\alpha(p), \proj_\beta(q))
        &\le \ff_\beta(\proj_\alpha(p), \proj_\beta(q)) + 2(\gamma-\beta)\\ 
        &\le \ff_\beta(\proj_\alpha(p), q) + 2(\gamma-\beta)\\ 
        &\le \ff_\alpha(\proj_\alpha(p), q) + 2(\gamma-\beta) + 2(\beta-\alpha)\\ 
        &\le \ff_\alpha(p, q) + 2(\gamma-\beta) + 2(\beta-\alpha)\\ 
        &\le 2\gamma. \qedhere 
    \end{align*}
  \end{proof}

  \begin{lemma}\label{lem:proj_and_R_for_one_edge}
Let $(p,q)$ be an edge of $R_\delta$ with $\alpha, \beta \le \frac{\delta}{1+\e}$, then $(\proj_\alpha(p), \proj_\beta(q))\in R_{\kappa\delta}$, where $\kappa= \frac{1 + \sqrt{1+t^2}\;\e}{1-\e}$.
  \end{lemma}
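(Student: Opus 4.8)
The plan is to verify directly the weighted Rips edge condition for the projected pair. Write $p' = \proj_\alpha(p)$ and $q' = \proj_\beta(q)$; since both lie in $P$, the goal is to show $\dX{p'}{q'} < r_{p'}(\kappa\delta) + r_{q'}(\kappa\delta)$, which is exactly the statement that $(p',q')$ is an edge of $R_{\kappa\delta}$. First I would record the displacement of each endpoint under its projection: Lemma~\ref{lem:distance_to_proj} gives $\dX{p}{p'} \le \e\alpha$ and $\dX{q}{q'} \le \e\beta$. I would also note that $(p,q)\in R_\delta$ forces $r_p(\delta)$ and $r_q(\delta)$ to be real, i.e. $\delta \ge w_p$ and $\delta \ge w_q$, which is precisely the hypothesis needed to invoke Lemma~\ref{lPowerBalls}.

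Next I would combine the triangle inequality with the edge condition $\dX{p}{q} < r_p(\delta) + r_q(\delta)$ to obtain
\[
  \dX{p'}{q'} \le \dX{p}{p'} + \dX{p}{q} + \dX{q}{q'} < \bigl(r_p(\delta) + \e\alpha\bigr) + \bigl(r_q(\delta) + \e\beta\bigr).
\]
The key move is then to convert each displaced center into a grown weighted radius using Lemma~\ref{lPowerBalls}: applying it to the pair $(p,p')$ with its parameter $\epsilon$ set to $\e\alpha$ yields $r_p(\delta) + \e\alpha \le r_{p'}(\delta + \sqrt{1+t^2}\;\e\alpha)$, and symmetrically $r_q(\delta) + \e\beta \le r_{q'}(\delta + \sqrt{1+t^2}\;\e\beta)$. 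This is where the factor $\sqrt{1+t^2}$, and hence the shape of $\kappa$, enters, and it is the reason the unweighted sparse Rips analysis does not transfer verbatim.

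Finally I would absorb the intermediate scale into $\kappa\delta$. The hypothesis $\alpha,\beta \le \frac{\delta}{1+\e}$ gives $\delta + \sqrt{1+t^2}\;\e\alpha \le \delta\bigl(1 + \tfrac{\sqrt{1+t^2}\,\e}{1+\e}\bigr)$, and a short scalar computation shows this is at most $\kappa\delta = \frac{1+\sqrt{1+t^2}\,\e}{1-\e}\,\delta$; since the radii produced by Lemma~\ref{lPowerBalls} are real, $r_{p'}$ is nondecreasing on the relevant range, so the scale may be enlarged to $\kappa\delta$ (and likewise for $r_{q'}$). Chaining these bounds gives $\dX{p'}{q'} < r_{p'}(\kappa\delta) + r_{q'}(\kappa\delta)$, as required. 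I expect the only real obstacle to be bookkeeping: checking that Lemma~\ref{lPowerBalls} is applicable (the radii are real, and the added term $\e\alpha$ dominates the true displacement $\dX{p}{p'}$), and carefully verifying the scalar inequality that the grown scale does not exceed $\kappa\delta$; everything else is the triangle inequality together with monotonicity of the weighted radius.
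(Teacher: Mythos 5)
Your proof is correct and follows essentially the same route as the paper's: triangle inequality plus the displacement bound $\dX{p}{\proj_\alpha(p)}\le\e\alpha$ from Lemma~\ref{lem:distance_to_proj}, the edge condition in $R_\delta$, and Lemma~\ref{lPowerBalls} to absorb the displaced centers into grown weighted radii at scale $\kappa\delta$. The only (immaterial) difference is bookkeeping order: the paper first relaxes both the displacement and the scale to $\frac{\e\delta}{1-\e}$ and $\frac{\delta}{1-\e}$ so that Lemma~\ref{lPowerBalls} lands exactly on $\kappa\delta$, whereas you apply the lemma with the tighter displacement $\e\alpha$ and then check a final scalar inequality, which does hold.
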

  \begin{proof}
    First, note that the projection functions satisfy the following inequalities.
    \begin{align*}
      \dX{p}{\proj_\alpha(p)} &\le \e \alpha \le \frac{\e \delta}{1-\e} \\
      \dX{q}{\proj_\beta(q)} &\le \e \beta \le \frac{\e \delta}{1-\e}
    \end{align*}
    So, by applying the triangle inequality, the definition of an edge in $R_\delta$, and Lemma~\ref{lPowerBalls}, we get the following.
    \begin{align*}
      \dX{\proj_\alpha(p)}{\proj_\beta(q)} 
        &\le \dX{p}{q} + \frac{2\e\delta}{1-\e}\\
        &\le \left(r_p(\delta) + \frac{\e\delta}{1-\e}\right) + \left(r_q(\delta) + \frac{\e\delta}{1-\e}\right)\\
        &\le \left(r_p\left(\frac{\delta}{1-\e}\right) + \frac{\e\delta}{1-\e}\right) + \left(r_q\left(\frac{\delta}{1-\e}\right) + \frac{\e\delta}{1-\e}\right)\\
        &\le r_{\proj_\alpha(p)}(\kappa\delta) + r_{\proj_\beta(q)}(\kappa\delta).
    \end{align*}
    This is precisely the condition necessary to guarantee that $(\proj_\alpha(p), \proj_\alpha(q))\in R_{\kappa \delta}$ as desired.
  \end{proof}

  The following two lemmas follow easily from repeated application of the preceding lemmas.

  \begin{lemma}\label{lem:proj_and_Q}
    Two projections $\proj_\alpha$ and $\proj_\beta$ induce contiguous simplicial maps from $Q_\rho\to Q_\beta$ whenever $\rho \le \beta$ and there exists $i$ so that $\frac{\ir_{i+1}}{\e(1-\e)}\le \alpha \le \beta \le \frac{\ir_{i}}{\e(1-\e)}$.
  \end{lemma}
  
\begin{proof}
Let us fix $\rho\leq\beta$ and take $(p,q)$ an edge from $Q_\rho$. 
Given that $Q_\rho$ and $Q_\beta$ are cliques complexes, we can get the result from Lemma~\ref{lem:contiguity_and_cliques} if we show that the tetrahedron $\{\proj_\alpha(p),\proj_\alpha(q),\proj_\beta(p),\proj_\beta(q)\}$ is in $Q_\beta$.
We only need to prove that all edges of the tetrahedron belongs to $Q_\beta$.

We apply Lemma~\ref{lem:proj_and_Q_for_one_edge}, while replacing $\gamma$ by $\beta$ and $\beta$ by $\alpha$. Thus we obtain $(\proj_\alpha(p),\proj_\alpha(q))\in Q_\beta$.
Let us repeat this operation with $\alpha=\beta=\gamma$ thus we get $(\proj_\beta(p),\proj_\beta(q))\in Q_\beta$.
The last two edges are given by replacing $\gamma$ by $\beta$ and choosing correctly the role of $p$ and $q$.
\end{proof}

  \begin{lemma}\label{lem:proj_and_R}
    Two projections $\proj_\alpha$ and $\proj_\beta$ induce contiguous simplicial maps from $R_\delta\to R_{\kappa\delta}$, where $\kappa = \frac{1 + \sqrt{1+t^2}\;\e}{1-\e}$ whenever $\alpha,\beta \le \frac{\delta}{1-\e}$.    
  \end{lemma}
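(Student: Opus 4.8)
The plan is to follow the template of the proof of Lemma~\ref{lem:proj_and_Q}: reduce the contiguity statement to a single tetrahedron condition via Lemma~\ref{lem:contiguity_and_cliques}, and then discharge that condition by repeated application of the single-edge Lemma~\ref{lem:proj_and_R_for_one_edge}, which already carries all of the geometric content.

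First I would record that $R_\delta$ and $R_{\kappa\delta}$ are clique complexes. To even speak of the simplicial maps induced by $\proj_\alpha$ and $\proj_\beta$, I check that each projection sends $R_\delta$ into $R_{\kappa\delta}$; since the target is a clique complex this reduces to edges, and Lemma~\ref{lem:proj_and_R_for_one_edge} applied to an edge $(p,q)\in R_\delta$ with both scale parameters equal to $\alpha$ (respectively $\beta$) shows the image is again an edge of $R_{\kappa\delta}$. With both simplicial maps in hand, Lemma~\ref{lem:contiguity_and_cliques} reduces contiguity to exhibiting, for every edge $(p,q)\in R_\delta$, the tetrahedron $\{\proj_\alpha(p),\proj_\beta(p),\proj_\alpha(q),\proj_\beta(q)\}$ inside $R_{\kappa\delta}$.

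Because $R_{\kappa\delta}$ is a clique complex, that tetrahedron lies in $R_{\kappa\delta}$ as soon as all six of its edges do, so I would verify the edges one at a time. The four edges joining a projection of $p$ to a projection of $q$ — namely $(\proj_\alpha(p),\proj_\alpha(q))$, $(\proj_\beta(p),\proj_\beta(q))$, $(\proj_\alpha(p),\proj_\beta(q))$, and $(\proj_\beta(p),\proj_\alpha(q))$ — follow directly from Lemma~\ref{lem:proj_and_R_for_one_edge} applied to $(p,q)$ with the appropriate pair of parameters drawn from $\{\alpha,\beta\}$. The remaining two edges $(\proj_\alpha(p),\proj_\beta(p))$ and $(\proj_\alpha(q),\proj_\beta(q))$ join two projections of a single vertex; I would obtain these by applying the same single-edge lemma to the degenerate edges $(p,p)$ and $(q,q)$, which lie in $R_\delta$ because $p$ and $q$ are vertices (so $r_p(\delta)$ and $r_q(\delta)$ are real).

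The step demanding the most care is the parameter bookkeeping rather than any new geometric idea. Every invocation of Lemma~\ref{lem:proj_and_R_for_one_edge} uses projection scales lying in $\{\alpha,\beta\}$, and I must confirm its scale hypothesis is met under the standing assumption $\alpha,\beta\le\frac{\delta}{1-\e}$; the key observation is that the single-edge estimate ultimately only needs the displacement bound $\dX{p}{\proj_\alpha(p)}\le\frac{\e\delta}{1-\e}$, which already follows from $\alpha\le\frac{\delta}{1-\e}$. Since the inclusion of power-distance balls under projection is entirely packaged in Lemma~\ref{lem:proj_and_R_for_one_edge} (itself resting on Lemma~\ref{lPowerBalls}), no genuinely new estimate is required: the work reduces to checking that all six edges survive at scale $\kappa\delta$ and assembling them into the tetrahedron.
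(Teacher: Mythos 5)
Your proof is correct and follows essentially the same route as the paper, which simply reuses the template of Lemma~\ref{lem:proj_and_Q}: reduce to the tetrahedron condition of Lemma~\ref{lem:contiguity_and_cliques} and discharge each edge via Lemma~\ref{lem:proj_and_R_for_one_edge}. You are in fact somewhat more careful than the paper, since you explicitly handle the two edges $(\proj_\alpha(p),\proj_\beta(p))$ and $(\proj_\alpha(q),\proj_\beta(q))$ joining the two projections of a single vertex, and you correctly note that the single-edge lemma's argument only requires $\alpha,\beta\le\frac{\delta}{1-\e}$ even though its statement asks for the stronger bound $\frac{\delta}{1+\e}$.
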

  
\begin{proof}
The previous proof can be applied to get the result, while replacing Lemma~\ref{lem:proj_and_Q_for_one_edge} by Lemma~\ref{lem:proj_and_R_for_one_edge}.
\end{proof}



\subsection{Sparse filtrations and power distance functions} 
\label{sec:rips_interleaving}

We define a sparse filtration that gives a good approximation to the weighted Rips filtration $\{R_\alpha\}$ in terms of persistent homology.
  It is simply the intersection of the weighted Rips complex and the union of sparse Rips complexes at different scales.
  \[
    T_\alpha = R_\alpha \cap S_\alpha.
  \]

  Our main goal is to show that the filtration $\{T_\alpha\}$ has a persistence diagram that is similar to that of $\{R_\alpha\}$.
  To do this we will demonstrate a multiplicative interleaving between these filtrations, where the interleaving constant is 
  \[
    \kappa =\frac{1 + \sqrt{1+t^2}\;\e}{1-\e}.
  \]
  Specifically, we show that for all $\alpha\ge 0$, the following diagram commutes at the homology level. 
  
  \[
    \xymatrix{
      R_{\alpha} \ar[dr]^{\proj_{\frac{\alpha}{1-\e}}} \ar@{^{(}->}[r] & R_{\kappa\alpha} \\
      T_{\alpha} \ar@{^{(}->}[r] \ar@{^{(}->}[u] & T_{\kappa\alpha} \ar@{^{(}->}[u]
    }
  \]

  We first need to check that the projection $\proj_{\frac{\alpha}{1-\e}}$ indeed induces a simplicial map from $R_\delta$ to $T_{\kappa\delta}$.

  \begin{lemma}\label{lem:proj_is_simplicial_R_to_T}
    For all $\alpha> 0$, the projection $\proj_{\frac{\alpha}{1-\e}}$ induces a simplicial map from $R_\alpha\to T_{\kappa\alpha}$, where $\kappa = \frac{1 + \sqrt{1+t^2}\;\e}{1-\e}$.
  \end{lemma}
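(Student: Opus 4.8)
The plan is to unwind the definition $T_{\kappa\alpha} = R_{\kappa\alpha}\cap S_{\kappa\alpha}$ and show that, writing $\beta = \frac{\alpha}{1-\e}$ and $\pi = \proj_\beta$, the image $\pi(\sigma)$ of every simplex $\sigma\in R_\alpha$ is a simplex of \emph{both} $R_{\kappa\alpha}$ and $S_{\kappa\alpha}$. The first membership is essentially free: Lemma~\ref{lem:proj_and_R}, applied with its filtration parameter equal to $\alpha$ and both projection scales equal to $\beta$ (so the hypothesis reads $\beta\le\frac{\alpha}{1-\e}$, which holds with equality), shows that $\proj_\beta$ induces a simplicial map $R_\alpha\to R_{\kappa\alpha}$; hence $\pi(\sigma)\in R_{\kappa\alpha}$. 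All the real work is in the second membership.

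For the $S$-part I would aim to land the image in a single complex $Q_\beta$, and then invoke $\beta\le\kappa\alpha$ (which holds since $1+\sqrt{1+t^2}\;\e\ge 1$) to conclude $Q_\beta\subseteq S_{\kappa\alpha}$. Since $Q_\beta$ is a clique complex, it suffices to verify its two defining conditions on $\pi(\sigma)$. The vertex condition is bookkeeping: $\proj_\beta$ maps $P$ into $\net_{\e(1-\e)\beta}$, and $\e(1-\e)\beta = \e\alpha$, so every image vertex lies in $\net_{\e(1-\e)\beta}\subseteq\cl\net_{\e(1-\e)\beta}$, the vertex set of $Q_\beta$. The edge condition is the crux. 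For any pair $v,w\in\sigma$ (an edge of $R_\alpha$, since $R_\alpha$ is a clique complex) I would use $\dX{v}{w}\le r_v(\alpha)+r_w(\alpha)\le 2\alpha$ together with $s_v(\beta),s_w(\beta)\le\e\beta$ to get $\ff_\beta(v,w)=\dX{v}{w}+s_v(\beta)+s_w(\beta)\le 2\alpha+2\e\beta = 2\beta$. Then Lemma~\ref{lem:distances_dont_grow}, applied once at each endpoint (using that $\ff_\beta$ is symmetric), gives $\ff_\beta(\pi(v),\pi(w))\le\ff_\beta(v,w)\le 2\beta$, so $(\pi(v),\pi(w))$ is an edge of $Q_\beta$. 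Hence $\pi(\sigma)\in Q_\beta\subseteq S_{\kappa\alpha}$, and combining with the first part, $\pi(\sigma)\in T_{\kappa\alpha}$.

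The main obstacle is choosing the right target scale for the $Q$-complex. The tempting choice $Q_{\kappa\alpha}$ fails the vertex condition: its vertex set $\cl\net_{\e(1-\e)\kappa\alpha}$ has threshold $\e\alpha(1+\sqrt{1+t^2}\;\e)>\e\alpha$, whereas the projection only guarantees insertion radius exceeding $\e\alpha$, so an image vertex can fall outside. Conversely, for any candidate scale a crude estimate $s_p\le\e\alpha'$ on the perturbation is too lossy to close the edge inequality, forcing an unwanted relation between $t$ and $\e$. The resolution, and the one nonobvious idea, is to target $Q_\beta$ at the projection's own scale $\beta=\frac{\alpha}{1-\e}$ and to control the perturbed distance not directly on the projected points but by pushing it back to the originals via Lemma~\ref{lem:distances_dont_grow}; there the genuine bound $\dX{v}{w}\le 2\alpha$ coming from the weighted Rips edge condition is exactly strong enough to yield $\ff_\beta\le 2\beta$. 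Once this alignment of scales is found, the remaining steps are routine.
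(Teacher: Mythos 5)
Your proposal is correct and follows essentially the same route as the paper: reduce to edges via the clique-complex property, get membership in $R_{\kappa\alpha}$ from Lemma~\ref{lem:proj_and_R}, and land the projected edge in $Q_{\frac{\alpha}{1-\e}}\subseteq S_{\kappa\alpha}$ by bounding $\ff_{\frac{\alpha}{1-\e}}$ using Lemma~\ref{lem:distances_dont_grow} together with $\dX{p}{q}\le 2\alpha$ and $s_p\le\frac{\e\alpha}{1-\e}$. The only cosmetic difference is the order in which you apply Lemma~\ref{lem:distances_dont_grow} versus the numerical bound; the inequalities are identical.
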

  \begin{proof}
    We show that for each edge $(p,q)\in R_\alpha$, there is a corresponding edge $(\proj_{\frac{\alpha}{1-\e}}(p), \proj_{\frac{\alpha}{1-\e}}(q)) \in R_{\kappa\alpha}\cap Q_{\frac{\alpha}{1-\e}}$.
    Since the latter complex is a clique complex, this will imply that for all $\sigma\in R_\alpha$, we have $\proj_{\frac{\alpha}{1-\e}}(\sigma)\in R_{\kappa\alpha}\cap Q_{\frac{\alpha}{1-\e}} \subseteq T_{\kappa\alpha}$ as desired.
    First, $(\proj_{\frac{\alpha}{1-\e}}(p), \proj_{\frac{\alpha}{1-\e}}(q)) \in R_{\kappa\alpha}$ as a direct consequence of Lemma~\ref{lem:proj_and_R}.

    Next, we need to show that $(\proj_{\frac{\alpha}{1-\e}}(p), \proj_{\frac{\alpha}{1-\e}}(q)) \in Q_{\frac{\alpha}{1-\e}}$.
    It suffices to show that $\ff_{\frac{\alpha}{1-\e}}(\proj_{\frac{\alpha}{1-\e}}(p), \proj_{\frac{\alpha}{1-\e}}(q))\le \frac{2 \alpha}{1-\e}$.
    \begin{align*}
      \ff_{\frac{\alpha}{1-\e}}(\proj_{\frac{\alpha}{1-\e}}(p), \proj_{\frac{\alpha}{1-\e}}(q))
        &\le \ff_{\frac{\alpha}{1-\e}}(p, q) \\ 
        &= \dX{p}{q} + s_p({\frac{\alpha}{1-\e}}) + s_q({\frac{\alpha}{1-\e}}) \\
        &\le \dX{p}{q} + \frac{2\e\alpha}{1-\e}\\
        &\le 2 \alpha + \frac{2\e\alpha}{1-\e}\\
        &= \frac{2 \alpha}{1-\e}\qedhere
    \end{align*}
  \end{proof}
  
  Now, we give conditions for when two projections induce contiguous simplicial maps between the sparse weighted Rips complexes $T_\delta$ and $T_{\kappa\delta}$.

  \begin{lemma}\label{lem:proj_and_T}
    Two projections $\proj_\alpha$ and $\proj_\beta$ induce contiguous simplicial maps from $T_\delta\to T_{\kappa\delta}$, where $\kappa = \frac{1 + \sqrt{1+t^2}\;\e}{1-\e}$ whenever $\alpha,\beta \le \frac{\delta}{1-\e}$ and there exists $i$ so that $\frac{\ir_{i+1}}{\e(1-\e)}\le \alpha \le \beta \le \frac{\ir_{i}}{\e(1-\e)}$.
  \end{lemma}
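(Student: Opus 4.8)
The plan is to mimic the proofs of Lemma~\ref{lem:proj_and_Q} and Lemma~\ref{lem:proj_and_R} and then intersect their conclusions, exploiting that $T_\delta = R_\delta\cap S_\delta$ and $T_{\kappa\delta}=R_{\kappa\delta}\cap S_{\kappa\delta}$. I would first check that $\proj_\alpha$ and $\proj_\beta$ are simplicial maps $T_\delta\to T_{\kappa\delta}$ (this will fall out of the computation below), and then, to establish contiguity, verify for every simplex $\sigma\in T_\delta$ that $\proj_\alpha(\sigma)\cup\proj_\beta(\sigma)\in T_{\kappa\delta}$. Since the target is an intersection, it suffices to check membership in $R_{\kappa\delta}$ and in $S_{\kappa\delta}$ separately.

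The $R_{\kappa\delta}$ half is immediate. The hypothesis $\alpha,\beta\le\frac{\delta}{1-\e}$ is exactly the one demanded by Lemma~\ref{lem:proj_and_R}, so that lemma already gives contiguous simplicial maps $R_\delta\to R_{\kappa\delta}$; applied to $\sigma$, which lies in $R_\delta$, it yields $\proj_\alpha(\sigma)\cup\proj_\beta(\sigma)\in R_{\kappa\delta}$.

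For the $S_{\kappa\delta}$ half I would use Lemma~\ref{lem:proj_and_Q}, whose net-interval hypothesis $\frac{\ir_{i+1}}{\e(1-\e)}\le\alpha\le\beta\le\frac{\ir_i}{\e(1-\e)}$ is also assumed here. Writing $\sigma\in Q_\rho$ for some $\rho\le\delta$, which is possible because $\sigma\in S_\delta$, Lemma~\ref{lem:proj_and_Q} maps $Q_\rho\to Q_\beta$ provided $\rho\le\beta$, and then $\proj_\alpha(\sigma)\cup\proj_\beta(\sigma)\in Q_\beta\subseteq S_{\kappa\delta}$ because $\beta\le\frac{\delta}{1-\e}\le\kappa\delta$. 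Combining the two halves gives $\proj_\alpha(\sigma)\cup\proj_\beta(\sigma)\in R_{\kappa\delta}\cap S_{\kappa\delta}=T_{\kappa\delta}$, which is the contiguity we want.

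The one genuine obstacle is the hypothesis $\rho\le\beta$ needed to invoke Lemma~\ref{lem:proj_and_Q}, since membership in $S_\delta$ supplies only $\rho\le\delta$ and $\delta$ may exceed $\beta$. I would resolve this with a dichotomy on the vertices of $\sigma$. If every vertex $v$ of $\sigma$ lies in $\net_{\e(1-\e)\beta}\subseteq\net_{\e(1-\e)\alpha}$, then both projections fix $\sigma$, so $\proj_\alpha(\sigma)\cup\proj_\beta(\sigma)=\sigma\in Q_\rho\subseteq S_{\kappa\delta}$ trivially. Otherwise some vertex $v$ has $\ir_v\le\e(1-\e)\beta$; but $\sigma\in Q_\rho$ forces $v\in\cl\net_{\e(1-\e)\rho}$, i.e.\ $\ir_v\ge\e(1-\e)\rho$, whence $\rho\le\frac{\ir_v}{\e(1-\e)}\le\beta$ and Lemma~\ref{lem:proj_and_Q} applies as above. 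Checking this insertion-radius dichotomy carefully, and confirming that the images land in $R_{\kappa\delta}$ and in $S_{\kappa\delta}$ simultaneously, is the crux; everything else is a direct citation of the two preceding lemmas.
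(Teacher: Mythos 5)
Your proof is correct and follows essentially the same route as the paper: split membership in $T_{\kappa\delta}$ into the $R_{\kappa\delta}$ part (via Lemma~\ref{lem:proj_and_R}) and the sparse part (via Lemma~\ref{lem:proj_and_Q}), with a case split on whether $\rho\le\beta$. In fact your insertion-radius dichotomy spells out the justification for the case $\rho>\beta$ (where both projections fix $\sigma$) that the paper's proof only asserts.
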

  \begin{proof}
    We simply observe that for any $\sigma\in T_\delta$, $\sigma\in Q_\rho$ for some $\rho\le \delta$.
    If $\rho\le \beta$ then Lemma~\ref{lem:proj_and_Q} implies $\proj_\alpha(\sigma) \cup \proj_\beta(\sigma) \in Q_\beta$.
    Otherwise $\proj_\alpha(\sigma) \cup \proj_\beta(\sigma) = \sigma \in Q_\rho$.
    So in either case, we have $\proj_\alpha(\sigma) \cup \proj_\beta(\sigma) \in S_{\delta \gamma}$.
    Now, by Lemma~\ref{lem:proj_and_R}, we have that $\proj_\alpha(\sigma) \cup \proj_\beta(\sigma) \in R_{\kappa\delta}$.
    So, we have that $\proj_\alpha(\sigma) \cup \proj_\beta(\sigma) \in R_{\kappa\delta} \cap S_{\kappa\delta} = T_{\kappa\delta}$ as desired.
  \end{proof}

  We can now give the proof of the interleaving which will imply the desired approximation of the persistent homology.

  \begin{lemma}\label{lem:interleaving}
    For all $\alpha>0$, the following diagram commutes the homology level.
    \[
      \xymatrix{
        R_{\alpha} \ar[dr]^{\proj_{\frac{\alpha}{1-\e}}} \ar@{^{(}->}[r] & R_{\kappa\alpha} \\
        T_{\alpha} \ar@{^{(}->}[r] \ar@{^{(}->}[u] & T_{\kappa\alpha} \ar@{^{(}->}[u]
      }
    \]
  \end{lemma}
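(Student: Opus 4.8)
The plan is to observe that the square splits along the diagonal into an upper triangle (with vertices $R_\alpha$, $R_{\kappa\alpha}$, $T_{\kappa\alpha}$) and a lower triangle (with vertices $T_\alpha$, $R_\alpha$, $T_{\kappa\alpha}$), and to prove that each commutes at the homology level. The guiding principle throughout is that contiguous simplicial maps between clique complexes induce the same homomorphism on homology, and every complex appearing here is a clique complex. The key observation that unifies both triangles is that $\proj_0$ is the identity map, since $\net_0 = P$; consequently every plain inclusion in the diagram is precisely the simplicial map induced by $\proj_0$. Thus both triangles reduce to comparing $\proj_0$ with $\proj_{\frac{\alpha}{1-\e}}$, once as maps $R_\alpha\to R_{\kappa\alpha}$ and once as maps $T_\alpha\to T_{\kappa\alpha}$.

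For the upper triangle, the composite $R_\alpha\xrightarrow{\proj_{\frac{\alpha}{1-\e}}} T_{\kappa\alpha}\hookrightarrow R_{\kappa\alpha}$ acts on vertices exactly as $\proj_{\frac{\alpha}{1-\e}}$ regarded into $R_{\kappa\alpha}$, while the top edge is the inclusion, i.e. the map induced by $\proj_0$. Both scales $0$ and $\frac{\alpha}{1-\e}$ are at most $\frac{\alpha}{1-\e}$, so Lemma~\ref{lem:proj_and_R}---which carries no net-interval hypothesis---applies with $\delta=\alpha$ and shows that $\proj_0$ and $\proj_{\frac{\alpha}{1-\e}}$ are contiguous as simplicial maps $R_\alpha\to R_{\kappa\alpha}$. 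Hence they induce the same map on homology, so the upper triangle commutes.

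The lower triangle is where the real work lies. The composite $T_\alpha\hookrightarrow R_\alpha\xrightarrow{\proj_{\frac{\alpha}{1-\e}}} T_{\kappa\alpha}$ indeed lands in $T_{\kappa\alpha}$ by Lemma~\ref{lem:proj_is_simplicial_R_to_T} and equals the restriction of $\proj_{\frac{\alpha}{1-\e}}$ to $T_\alpha$, while the bottom edge is the restriction of $\proj_0$. I would like to invoke the analogue of the previous step, Lemma~\ref{lem:proj_and_T}, but here the main obstacle appears: that lemma requires the two projection scales to lie in a common net-interval $[\frac{\ir_{i+1}}{\e(1-\e)},\frac{\ir_i}{\e(1-\e)}]$, a condition that $0$ and $\frac{\alpha}{1-\e}$ generally violate. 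The device to get around this is to telescope through intermediate projections. Let $0=\gamma_0<\gamma_1<\cdots<\gamma_k=\frac{\alpha}{1-\e}$ be the endpoints together with the finitely many breakpoints of the form $\frac{\ir_i}{\e(1-\e)}$ that fall in $[0,\frac{\alpha}{1-\e}]$; by construction no such breakpoint lies strictly between consecutive $\gamma_j$, so each pair $\gamma_j\le\gamma_{j+1}$ sits in a single net-interval and satisfies $\gamma_j,\gamma_{j+1}\le\frac{\alpha}{1-\e}$.

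Lemma~\ref{lem:proj_and_T} (with $\delta=\alpha$) then gives, for each $j$, that $\proj_{\gamma_j}$ and $\proj_{\gamma_{j+1}}$ are contiguous as simplicial maps $T_\alpha\to T_{\kappa\alpha}$, hence agree on homology. Since ``induces the same homomorphism on $\Hom$'' is an equivalence relation, chaining these equalities over $j=0,\ldots,k-1$ shows that $\proj_{\gamma_0}=\proj_0$ and $\proj_{\gamma_k}=\proj_{\frac{\alpha}{1-\e}}$ induce the same map $T_\alpha\to T_{\kappa\alpha}$ on homology, so the lower triangle commutes. With both triangles commuting at the homology level, the full diagram commutes, as claimed. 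I expect the only genuinely delicate point to be the telescoping argument that bridges the scale-$0$ inclusion and the scale-$\frac{\alpha}{1-\e}$ diagonal despite the net-interval restriction of Lemma~\ref{lem:proj_and_T}; everything else is a direct application of the preceding lemmas.
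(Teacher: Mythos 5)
Your proposal is correct and follows essentially the same route as the paper: the upper triangle is handled by applying Lemma~\ref{lem:proj_and_R} to show $\proj_0$ and $\proj_{\frac{\alpha}{1-\e}}$ are contiguous on $R_\alpha\to R_{\kappa\alpha}$, and the lower triangle by telescoping through the net breakpoints $\frac{\ir_i}{\e(1-\e)}$ and chaining the contiguities supplied by Lemma~\ref{lem:proj_and_T}. Your version is in fact slightly more careful than the paper's, since you state explicitly that the intermediate scales all lie below $\frac{\alpha}{1-\e}$ and that the diagonal map lands in $T_{\kappa\alpha}$ via Lemma~\ref{lem:proj_is_simplicial_R_to_T}.
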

  \begin{proof}
    By Lemma~\ref{lem:proj_and_R}, the projection $\proj_{\frac{\alpha}{1-\e}}$ and the inclusion $\proj_0$ are contiguous and thus produce identical homomorphisms at the homology level.
    For the lower triangle it will suffice to show that homomorphism induced by $\proj_{\frac{\alpha}{1-\e}}$ commutes with that produced by the inclusion $\proj_0$.
    Let $\phi_i = \proj_{\frac{\ir_i}{1-\e}}$ for $i=1,\ldots,n+1$.
    Now, Lemma~\ref{lem:proj_and_T} implies that $\phi_i$ and $\phi_{i+1}$ are contiguous.
    So, choosing $k$ such that $\ir_k \le \e\alpha <\ir_{k-1}$, we can apply Lemma~\ref{lem:proj_and_T} repeatedly to conclude that
    \[
      \proj_{0\star} = \phi_{n+1\star} =\phi_{n\star} = \cdots = \phi_{k\star} = \proj_{\frac{\alpha}{1-\e}\star}.\qedhere
    \]
  \end{proof}


\section{Numerical illustration} 
\label{sNumeric}

In this section, we illustrate our results three different perspectives:
the quality of the approximation,
the stability of the diagrams with respect to noise, and
the size of the filtration after sparsification.

We used the ANN library~\cite{annMS} for the $k$-nearest neighbors search and code from Zomorodian following~\cite{cphCZ} for the persistence.
The topology of the union of balls is acquired through the $\alpha$-shapes implementation from the CGAL library~\cite{cgalAlphaShapes3D}.

\paragraph{Datasets\\}
For the first two parts, we consider the set of points in $\R^3$ obtained by sampling regularly the skeleton of the unit cube with 116 points.
Then we add four noise points in the center of four of its faces such that two opposite faces are empty.

\begin{figure}[!ht]
\centering
\includegraphics[height=10em]{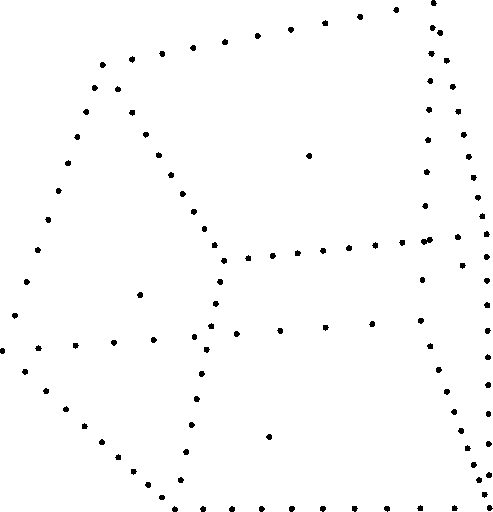}
\caption{Skeleton of a cube with outliers}
\end{figure}

We would like to compute the persistence diagram of the skeleton of the cube. 
We write this diagram $\Dgm{Skel}$.
It contains five homology classes in dimension 1 and one in dimension 2, and it has the barcode representation given in Figure~\ref{fDgmSkel}.

\begin{figure}[!ht]
\centering
\includegraphics[height=10em]{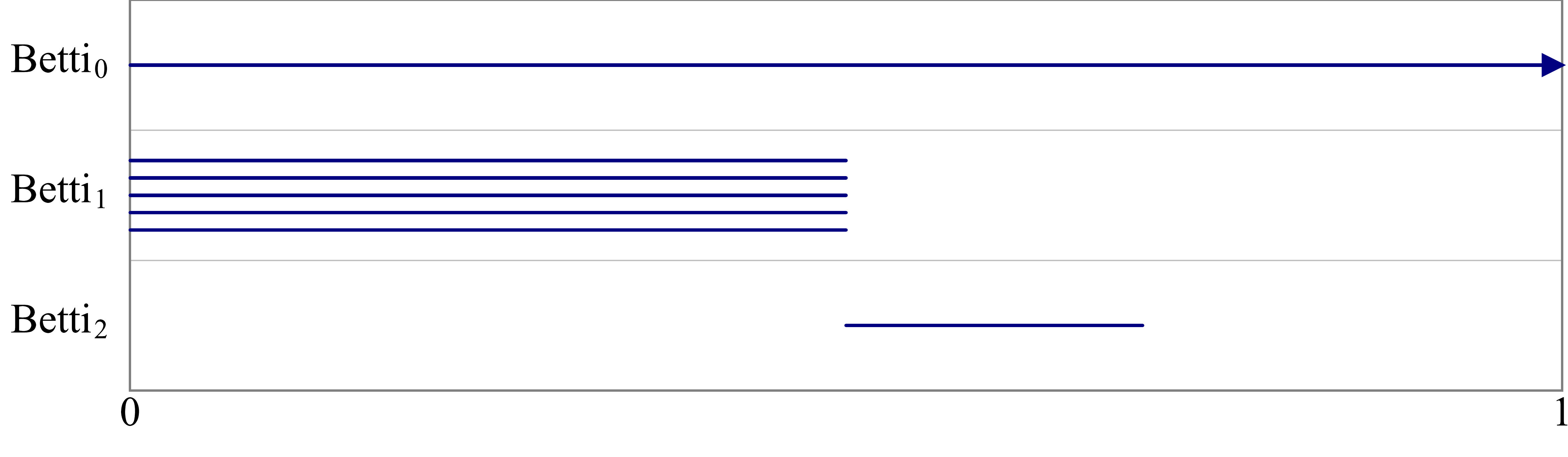}
\caption{Persistence diagram of a cube skeleton without noise}\label{fDgmSkel}
\end{figure}

For sparsification, we use a slightly bigger dataset composed of 10000 points regularly distributed on a curve rolled around a torus.
The point set is shown on Figure~\ref{fSpiral}.

\begin{figure}[!ht]
\centering
\includegraphics[height=10em]{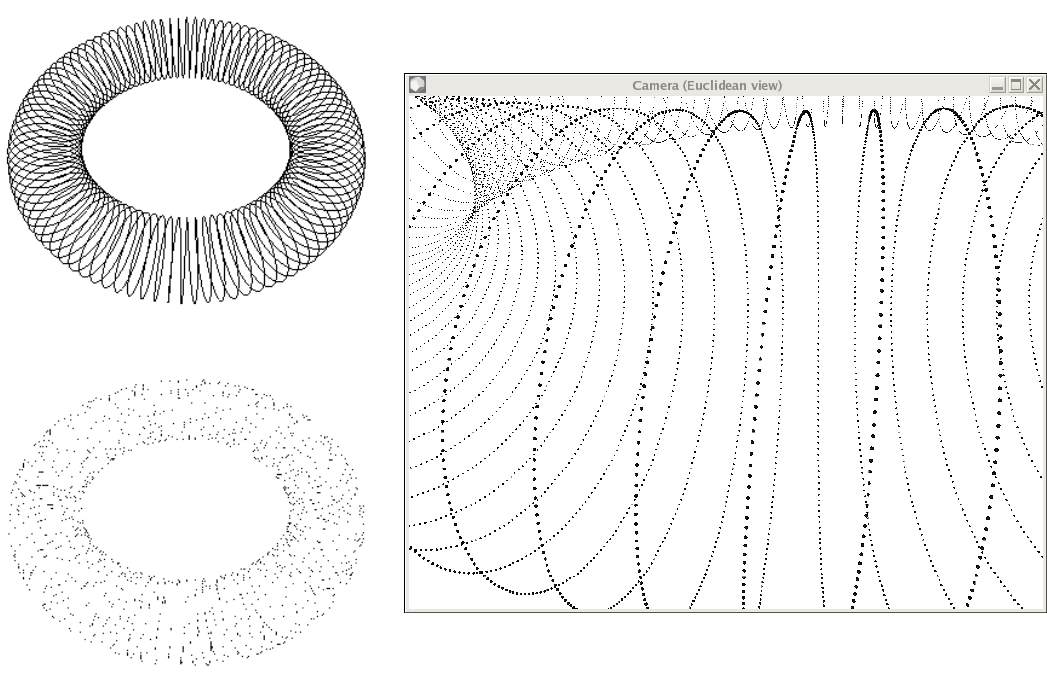}
\caption{Spiral on a torus}\label{fSpiral}
\end{figure}

\paragraph{Approximation}

We work from now on with a mass parameter $m$ such that $k=mn=5$. 
The persistence diagram of $\dmP$ is given in Figure~\ref{fPersDmP}:

\begin{figure}[!ht]
\centering
\includegraphics[height=10em]{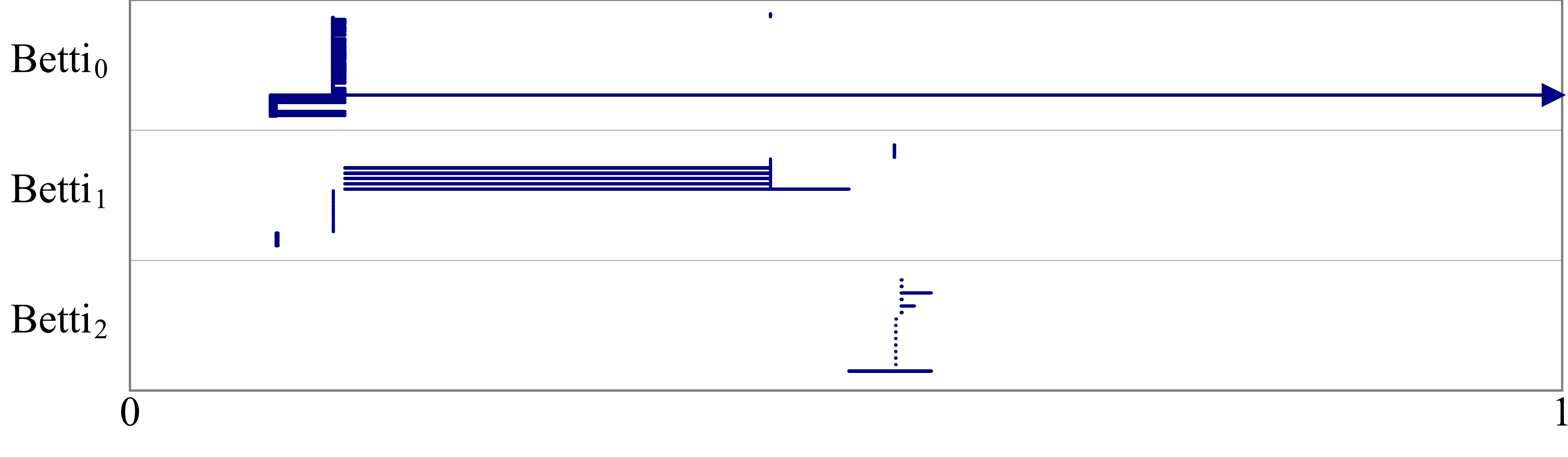}
\caption{$\Dgm{\dmP}$ for the cube skeleton with outliers with $k=5$}\label{fPersDmP}
\end{figure}

The diagrams obtained with our various approximations have very similar looks.
We only show the one obtained with the sparse Rips filtration with a parameter $\epsilon=0.5$ in Figure~\ref{fPersSpars}.

\begin{figure}[!ht]
\centering
\includegraphics[height=10em]{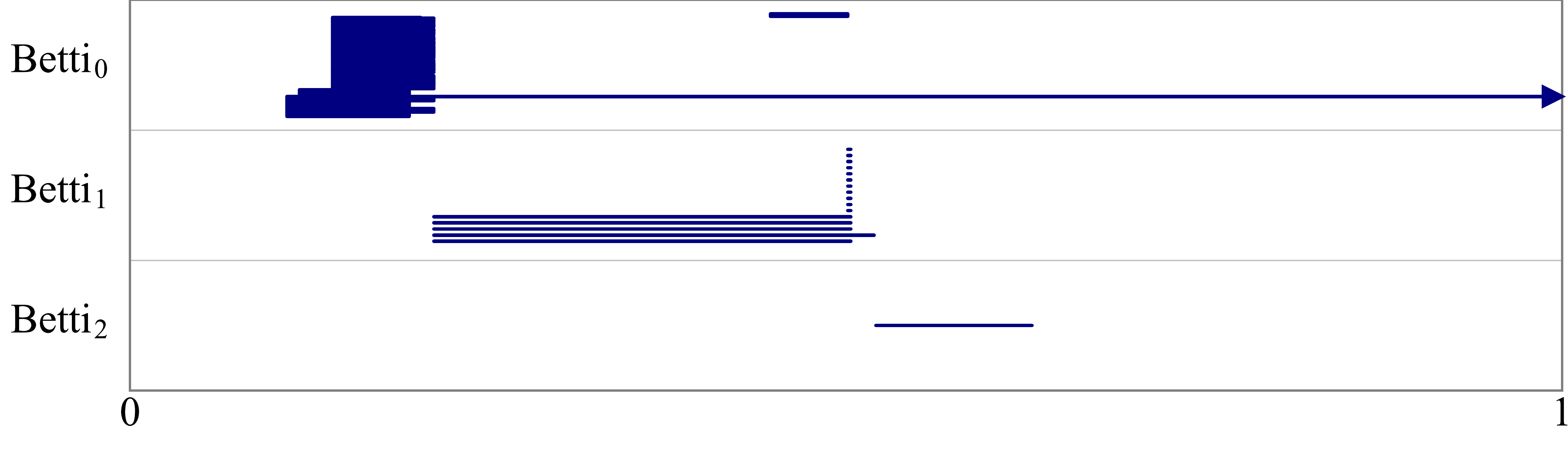}
\caption{$\Dgm{\{T_\alpha\}}$ for the cube skeleton with outliers with $k=5$ and $\epsilon=.5$}\label{fPersSpars}
\end{figure}

To compare diagrams, we use the bottleneck distances between the diagrams.
Figure~\ref{fMatrix} shows the distance matrix between the various diagrams, while Figure~\ref{fBottle} shows some bottleneck distances between persistence diagrams of different dimensions.
Note that $\Dgm{d_P}$ corresponds to the diagram obtained by using the distance function to the point cloud.

\begin{figure}[!ht]
\centering
\begin{tabular}{|c|cccccc|}
\hline
& $\Dgm{Skel}$ & $\Dgm{\dmP}$ & $\Dgm{\dPP}$ & $\Dgm{R_\alpha}$ & $\Dgm{T_\alpha}$ & $\Dgm{d_P}$ \\
\hline
$\Dgm{Skel}$ & 0 & .1528 & .1473 & .1473 & .1817 & .25 \\
$\Dgm{\dmP}$ & .1528 & 0 & .09872 & .0865 & .1183 & .2543 \\
$\Dgm{\dPP}$ & .1473 & .09872 & 0 & .0459 & .1084 & .2642 \\ 
$\Dgm{R_\alpha}$ & .1473 & .0865 & .0459 & 0 & .1128 & .2598 \\
$\Dgm{T_\alpha}$ & .1817 & .1183 & .1084 & .1128 & 0 & .2484 \\
$\Dgm{d_P}$ & .25 & .2543 & .2642 & .2598 & .2484 & 0 \\
\hline
\end{tabular}
\caption{Matrix of distances for the bottleneck distance}\label{fMatrix}
\end{figure}

\begin{figure}[!ht]
\centering
\begin{tabular}{|c|c|c|c|c|}
\hline
$\Dgm{A}$ & $\Dgm{B}$ & dim $0$ & dim $1$ & dim $2$ \\
\hline
$\Dgm{Skel}$ & $\Dgm{\dmP}$ & .05202 & .1528 & .1495\\
$\Dgm{\dmP}$ & $\Dgm{\dPP}$ & .09872 & .0195 &  .0972\\
$\Dgm{\dPP}$ & $\Dgm{R_\alpha(P,\dmP)}$ & .0007 & .0044 & .0459 \\
$\Dgm{R_\alpha(P,\dmP)}$ & $\Dgm{T_\alpha(P,\dmP)}$ & .0872 & .1128 & .0026 \\
\hline
$\Dgm{Skel}$ & $\Dgm{\dPP}$ & .0405 & .1473 & .0982 \\
$\Dgm{Skel}$ & $\Dgm{T_\alpha(P,\dmP)}$ & .1026 & .1817 & .098 \\
$\Dgm{Skel}$ & $\Dgm{d_P}$ & .25 & .2071 & .1481 \\
\hline
\end{tabular}
\caption{Bottleneck distances between diagrams}\label{fBottle}
\end{figure}

The largest difference is between $\Dgm{Skel}$ and $\Dgm{\dmP}$.
This is partly due to an effect of shifting while using the distance to a measure.
After this initial shift, the distance are small compared to the theoretical bounds. 
Notice that the different steps of the approximation do not have the same effect on all dimensions.

\begin{figure}[!ht]
\centering
\includegraphics[height=10em]{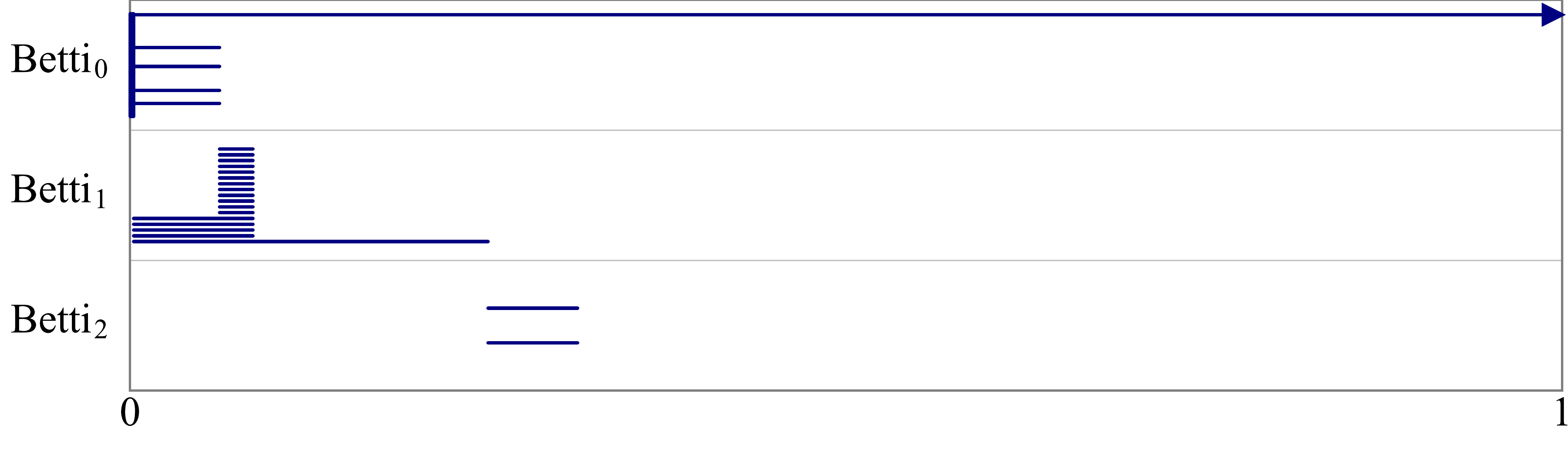}
\caption{$\Dgm{d_P}$ for the cube skeleton with outliers}\label{fDiagDP}
\end{figure}

All diagrams obtained by the different approximations are closer to $\Dgm{Skel}$ than the persistence diagram of the distance to the point cloud, $\Dgm{d_P}$ given in Figure~\ref{fDiagDP}.
For inference purposes, one crucial parameter is the \emph{signal-to-noise ratio}.
We define it as the ratio between the smallest lifespan of topological feature we aim to infer and the longest lifespan of noise features.
A ratio of $1$ corresponds to a signal that is not differentiable from the noise and $\infty$ corresponds to a noiseless diagram.
In our example, only the dimensions $1$ and $2$ are relevant as the dimension $0$ diagram corresponding to connected components has only one relevant feature and its lifespan is infinite.
Results are listed in Figure~\ref{fRatio}.

\begin{figure}[!ht]
\centering
\begin{tabular}{|c|c|c|}
\hline
Diagram & dim $1$ & dim $2$ \\
\hline
$\Dgm{Skel}$ & $\infty$ & $\infty$ \\
$\Dgm{\dmP}$ & $247$ &  $2.74$ \\
$\Dgm{\dPP}$ & $69.8$ & $43$ \\
$\Dgm{R_\alpha(P,\dmP)}$ & $\infty$ & $\infty$ \\
$\Dgm{T_\alpha(P,\dmP)}$ & $132$ & $\infty$ \\
$\Dgm{d_P}$ & $5.66$ & $1$ \\
\hline
\end{tabular}
\caption{Signal to noise ratios}\label{fRatio}
\end{figure}

Signal-to-noise ratios are clearly better than the one of $\Dgm{d_P}$.
Some of the approximation steps improve the ratio.
This is due to two phenomena.

When one goes from $\dmP$ to $\dPP$, the filtration eliminates the cells of the $k^{th}$ order Voronoi diagram that are far from the point cloud.
These cells induce local minima that produce noise features in the diagrams.
Removing them cleans parts of the diagram.
The same phenomenon happens with the witnessed $k$-distance perviously mentioned.

Using the Rips filtration instead of the \v{C}ech also reduces some noise.
It eliminates artifacts from simplices that are introduced and almost immediately killed in the \v{C}ech complex due to balls that intersect pairwise but have no common intersection.

\paragraph{Stability\\}

The weighted Rips filtration is stable with respect to noise. 
We illustrate this by studying the effect of an isotropic noise on our skeleton of a cube. 
We consider three different standard deviations for our noise.
Figure~\ref{fGauss} shows the bottleneck distances between the persistence diagram of the sparse weighted Rips structure with the Gaussian noise and the one without Gaussian noise.

\begin{figure}[!ht]
\centering
\begin{tabular}{|c|c|c|c|}
\hline
Standard deviation & $.05$ & $.1$ & $.5$ \\
\hline
$d_b$ in dimension $1$ & $.1469$ & $.2261$ & $.2722$ \\
$d_b$ in dimension $2$ & $.047$ & $.0914$ & $.1046$ \\
\hline
\end{tabular}
\caption{$d_b$ between $\Dgm{\{T_\alpha\}}$ with and without Gaussian noise}\label{fGauss}
\end{figure}

Unsurprisingly, the bottleneck distance is increasing with standard deviation of the noise.
The signal-to-noise ratio shown in Figure~\ref{fGaussRatio} is more interesting.

\begin{figure}[!ht]
\centering
\begin{tabular}{|c|c|c|c|c|}
\hline
Standard deviation & 0 & .05 & .1 & .5 \\
\hline
Ratio in dimension 1 & 132 & 8.27 & 3.17 & 1.04 \\
Ratio in dimension 2 & $\infty$ & $\infty$ & 100.2 & $\infty$ \\
\hline
\end{tabular}
\caption{Signal to noise ratio of $\Dgm{\{T_\alpha\}}$ depending on noise intensity}\label{fGaussRatio}
\end{figure}

Inferring correctly the homology of the cube skeleton is possible with standard deviation $0.05$ and $0.1$.
Figure~\ref{fPersGaus.1} shows the persistence diagram obtained with a standard deviation of $0.1$.
The $\infty$ in the $0.5$ case in dimension $2$ is not relevant as there is no noise but the feature is too small compared to the rest of the diagram as shown in Figure~\ref{fPersGaus.5}.
Note that $0.5$ corresponds to half of the side of the cube, and thus, it is logical to be unable to retrieve any useful information.

\begin{figure}[!ht]
\centering
\includegraphics[height=10em]{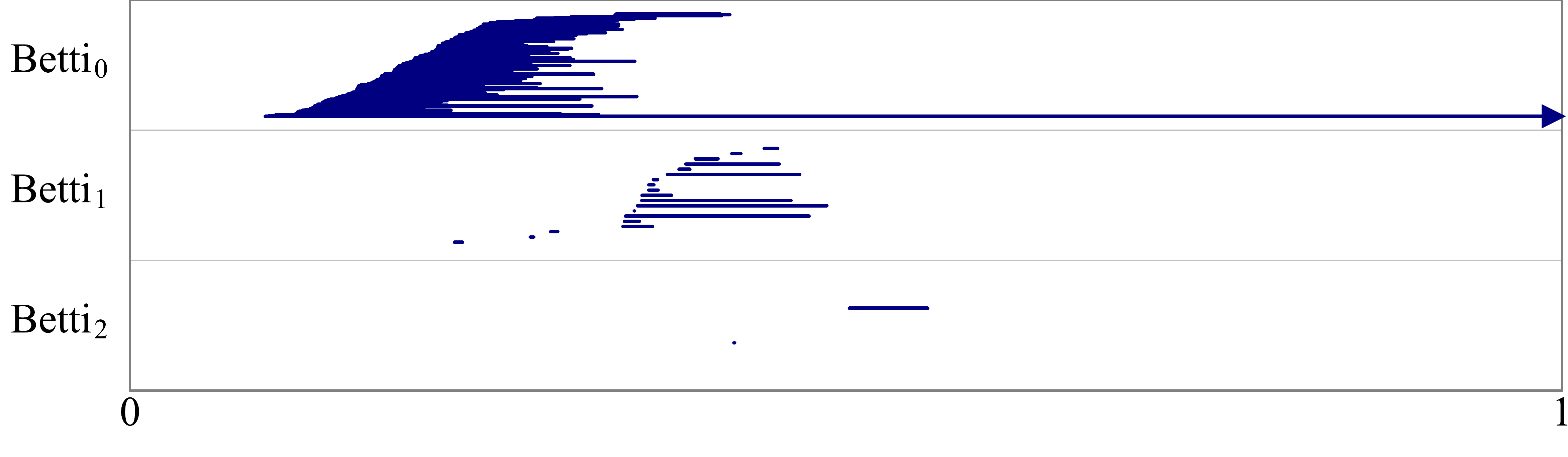}
\caption{Persistence diagram of $\{T_\alpha\}$ with $k=5$, $\epsilon=0.5$ and a Gaussian noise with standard deviation $0.1$}\label{fPersGaus.1}
\end{figure}

\begin{figure}[!ht]
\centering
\includegraphics[height=10em]{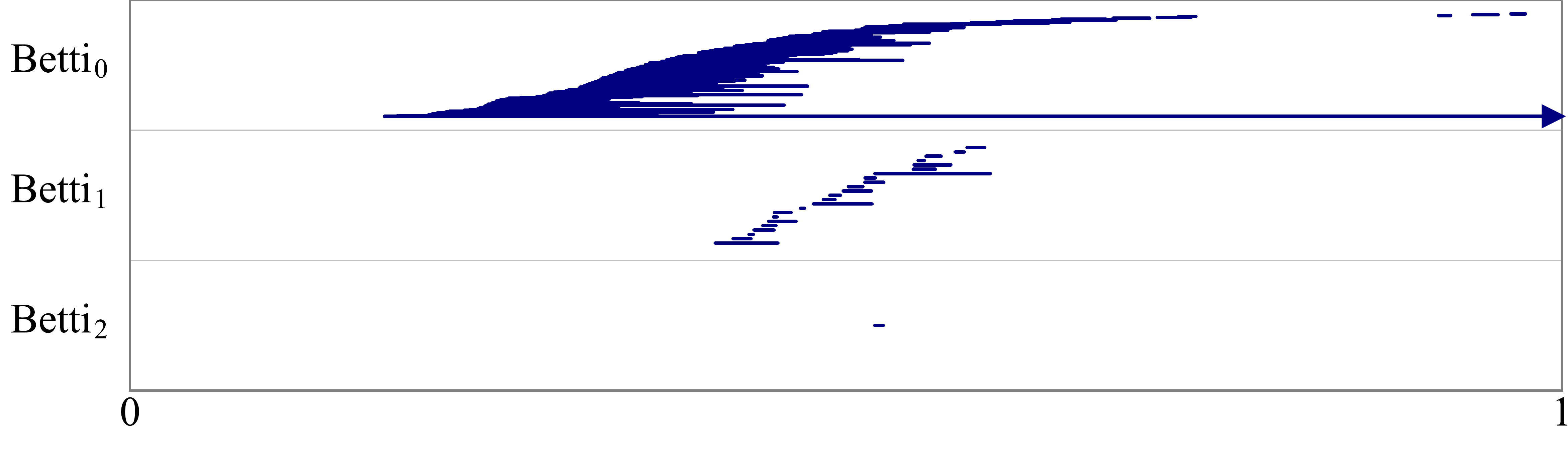}
\caption{Persistence diagram of $\{T_\alpha\}$ with $k=5$, $\epsilon=.5$ and a Gaussian noise with standard deviation $.5$}\label{fPersGaus.5}
\end{figure}

Some structure appears even with standard deviation as large as $0.5$. 
The three bigger features in dimension $1$ are relevant. 
However, we miss two elements and it is difficult to decide where to draw the frontier between relevant and irrelevant features.

\paragraph{Sparsification efficiency\\}
We introduced sparsification in Section~\ref{sec:rips_interleaving} to reduce the size of the Rips filtration.
The method introduced a new parameter $\epsilon$, and  the size of the filtration depends heavily on $\epsilon$.
The evolution of the size of the filtration depending on the parameter $\epsilon$ is given in Figure~\ref{fStatsSpiral}.

\begin{figure}[!ht]
\centering
\includegraphics[height=15em]{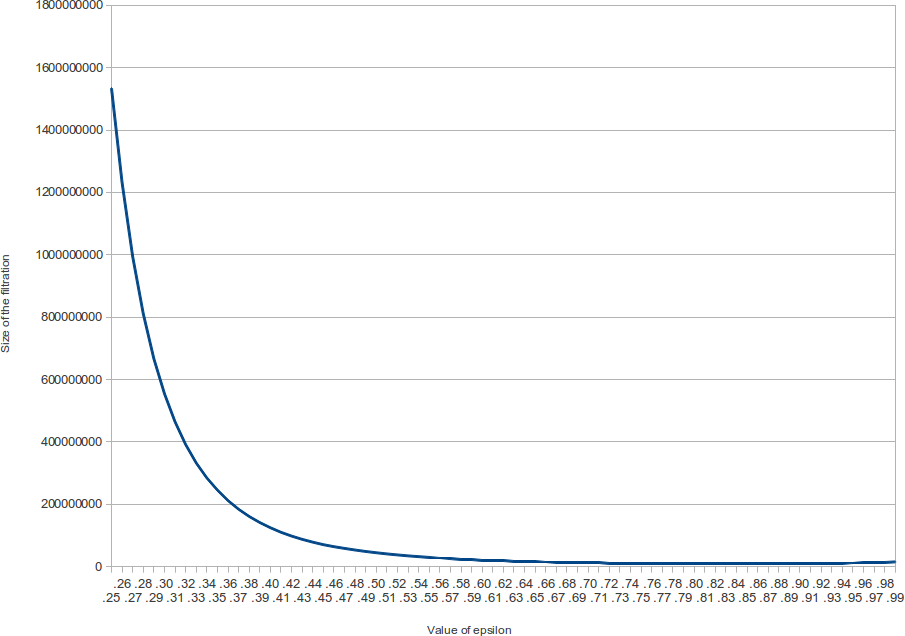}
\caption{Size of the filtration depending on $\epsilon$ for the spiral}\label{fStatsSpiral}
\end{figure}

The minimum size is reached around $\epsilon=.83$.
This minimum depends on the structure of the dataset.
For example, considering a set of points uniformly sampled in a square, we obtain decreasing size of the filtration.

The filtration size is nearly constant after a rapid decrease. 
In this example, the size is of order $10^7$ simplices for an input of $10^5$ vertices.
Computing persistent homology is tractable for any value in this range.
Structure in the data helps reduce the complexity of the sparse filtration.


\section{Conclusion} 
\label{sec:conclusion}

  In this paper, we generalize several aspects of the existing theory on the persistent homology of distances to measures from Euclidean space to general metric spaces.
  Then, we showed how to efficiently approximate the sublevels these distance functions with a linear number of metric balls.
  We gave a detailed analysis of the tightness of this approximation.
  
  We then showed how to give a sparse filtration that gives a guaranteed close approximation to the persistent homology of the distance to the measure.
  This last construction was given in the more general context of power distances.
  Thus, we have given a way to efficiently compute the persistent homology of the sublevel set filtration of any power distance function built on points in metric space of low doubling dimension.
  Since power distances can be used to approximate many different kinds of functions, we expect this technique will find many more uses in the future.
  
  A different perspective on our approach is that we use the sparse Rips filtration analogously to how one might use a grid in Euclidean space.
  It provides a structure over which one can go on to study many different functions.
  
  Lastly, we showed that this approach can be made practical, by providing some experimental results and analysis.


\bibliographystyle{plain}
\bibliography{sparserips}

\end{document}